\setlist{nolistsep}  % no useless space in lists
\DeclareSymbolFont{frenchscript}{OMS}{ztmcm}{m}{n}
\DeclareMathSymbol{\D}{\mathord}{frenchscript}{68}    % set of divergences
\DeclareMathSymbol{\F}{\mathord}{frenchscript}{70}    % failure set
\DeclareMathSymbol{\Pow}{\mathord}{frenchscript}{80}  % powerset
\DeclareMathSymbol{\R}{\mathrel}{frenchscript}{82}    % a bisimulation relation
\newcommand{\Sec}[1]{Section~\ref{sec:#1}}
\newcommand{\df}[1]{Def.~\ref{df:#1}}
\newcommand{\pr}[1]{Prop.~\ref{pr:#1}}
\newcommand{\lem}[1]{Lem.~\ref{lem:#1}}
\newcommand{\tab}[1]{Table~\ref{tab:#1}}
\def\comesfrom{\@transition\leftarrowfill}
\def\goesto{\@transition\rightarrowfill}
\def\ngoesto{\@transition\nrightarrowfill}
\def\Goesto{\@transition\Rightarrowfill}
\def\nGoesto{\@transition\nRightarrowfill}
\def\xmapsto{\@transition\mapstofill}
\def\nxmapsto{\@transition\nmapstofill}
\def\@transition#1{\@@transition{#1}}
\newbox\@transbox
\newbox\@arrowbox
\newbox\@downbox
\def\@@transition#1#2%
\wd\@transbox{#1}
\@transbox\hbox{$\mathop{\box\@arrowbox}\limits^{\box\@transbox}$}
\def\nrightarrowfill{$\m@th\mathord-\mkern-6mu%
  \cleaders\hbox{$\mkern-2mu\mathord-\mkern-2mu$}\hfill
  \mkern-6mu\mathord\not\mkern-2mu\mathord\rightarrow$}
\def\Rightarrowfill{$\m@th\mathord=\mkern-6mu%
  \cleaders\hbox{$\mkern-2mu\mathord=\mkern-2mu$}\hfill
  \mkern-6mu\mathord\Rightarrow$}
\def\nRightarrowfill{$\m@th\mathord=\mkern-6mu%
  \cleaders\hbox{$\mkern-2mu\mathord=\mkern-2mu$}\hfill
  \mkern-6mu\mathord\not\mathord\Rightarrow$}
\def\mapstofill{$\m@th\mathord\mapstochar\mathord-\mkern-6mu%
  \cleaders\hbox{$\mkern-2mu\mathord-\mkern-2mu$}\hfill
  \mkern-6mu\mathord\rightarrow$}
\def\nmapstofill{$\m@th\mathord\mapstochar\mathord-\mkern-6mu%
  \cleaders\hbox{$\mkern-2mu\mathord-\mkern-2mu$}\hfill
  \mkern-6mu\mathord\not\mkern-2mu\mathord\rightarrow$}
\newcommand{\goto}[1]{\stackrel{#1}{\longrightarrow}} % transition
\newcommand{\dto}[1]{\mathrel{\Goesto{\raisebox{.08em}{\scriptsize$#1$}}}}
\newcommand{\denote}[1]{\mbox{$[\hspace{-1.6pt}[$}\,#1\,\mbox{$]\hspace{-1.6pt}]$}}  % denotation
\newcommand{\pair}[1]{( #1 )}                         % tuple
\newcommand{\depth}{{\it depth}}                      % length of longest trace
\newcommand{\Id}{{\it Id}}                            % identity relation
\newcommand{\Th}{{\it Th}}                            % theory
\newcommand{\throw}{\mathbin{\Theta\!_A}}             % throw operator
\newcommand{\conceal}{\backslash}                     % concealment in CSP
\newcommand{\X}{p}                                    % recursion variable
\newcommand{\dcup}{\stackrel{\mbox{\huge .}}{\cup}}   % disjoint union
\newcommand{\plat}[1]{\raisebox{0pt}[0pt][0pt]{#1}}   % no vertical space
\newcommand{\weg}[1]{}                                % omitted material
\begin{document}

\title{A Branching Time Model of CSP}

\author{
Rob van Glabbeek \inst{1,2}
}
\institute{
Data61, CSIRO, Sydney, Australia
\and
Comput. Sci. and Engineering, University of New South Wales, Sydney, Australia
}

\maketitle
\setcounter{footnote}{0}

\begin{abstract}
I present a branching time model of CSP that is finer than all other models of CSP proposed thus far.
It is obtained by taking a semantic equivalence from the linear time -- branching time spectrum,
namely divergence-preserving coupled similarity, and showing that it is a congruence for the
operators of CSP\@. This equivalence belongs to the bisimulation family of semantic equivalences,
in the sense that on transition systems without internal actions it coincides with strong bisimilarity.
Nevertheless, enough of the equational laws of CSP remain to obtain a complete axiomatisation for
closed, recursion-free terms.
\end{abstract}

\section{Introduction}

The process algebra CSP---\emph{Communicating Sequential Processes}---was presented in
{\sc Brookes, Hoare \& Roscoe} \cite{BHR84}.  It is sometimes called \emph{theoretical} CSP,
to distinguish it from the earlier language CSP of {\sc Hoare}~\cite{Ho78}.
It is equipped with a denotational semantics, mapping each CSP process to an element of the
failures-divergences model \cite{BHR84,BR85}. The same semantics can also be presented
operationally, by mapping CSP processes to states in a labelled transition system (LTS), and then mapping
LTSs to the failures-divergences model.
{\sc Olderog \& Hoare} \cite{OH86} shows that this yields the same result.
Hence, the failures-divergences model of CSP can alternatively be seen as a semantic equivalence on
LTSs, namely by calling two states in an LTS equivalent iff they
map to the same element of the failures-divergences model.

Several other models of CSP are presented in the literature, and each can be cast as a
semantic equivalence on LTSs, which  is a congruence for the operators
of CSP\@. One such model is called \emph{finer} than another if its associated equivalence relation is
finer, i.e., included in the other one, or more discriminating.
The resulting hierarchy of models of CSP has two pillars: the divergence-strict models, most of
which refine the standard failures-divergences model, and the stable models, such as the model based
on stable failures equivalence from {\sc Bergstra, Klop \& Olderog} \cite{BKO87}, or the
stable revivals model of {\sc Roscoe} \cite{Ro09}.

Here I present a new model, which can be seen as the first branching time model of CSP, and the first
that refines all earlier models, i.e.\ both pillars mentioned above. It is based on the notion of
coupled similarity from {\sc Parrow \& Sj\"odin} \cite{PS92}. What makes it an interesting
model of CSP---as opposed to, say, strong or divergence-preserving weak bisimilarity---is
that it allows a complete equational axiomatisation for closed recursion-free CSP processes that fits
within the existing syntax of that language.

\section{CSP}\label{sec:CSP}

CSP \cite{BHR84,BR85,Ho85} is parametrised with a set $\Sigma$ of {\em communications\/}.
In this paper I use the subset of CSP given by the following grammar.
\[
P,Q ::=
\begin{array}[t]{@{}l@{}}
\STOP \mid \div \mid a\rightarrow P \mid P \intchoice Q \mid P \extchoice Q \mid P \timeout Q \mid
\\
P \|_A Q \mid P \backslash A \mid f(P) \mid P\interrupt Q \mid P \throw Q
\mid \X \mid \mu \X.P
\end{array}
\]
Here $P$ and $Q$ are CSP expressions,
$a\in\Sigma$, $A\subseteq\Sigma$ and $f:\Sigma\rightarrow\Sigma$.
Furthermore, $\X$ ranges over a set of \emph{process identifiers}.
A CSP \emph{process} is a CSP expression in which each occurrence of a process identifier $\X$
lays within a recursion construct $\mu\X.P$.
The operators in the above grammar are \emph{inaction}, \emph{divergence},
\emph{action prefixing}, \emph{internal}, \emph{external} and \emph{sliding choice},
\emph{parallel composition}, \emph{concealment}, \emph{renaming}, \emph{interrupt} and \emph{throw}.
Compared to \cite{Ro97,Ro10}, this leaves out
% \[  \mid \SKIP \mid \Omega \mid P\mathbin; Q\]
% \[ RUN_A \mid Chaos_A \mid || \mid \|_{A,B} \mid ||| \] all definable.
% \mid P[a \leftrightarrow b] Q \mid P < b> Q \] both definable.
% \mid \intchoice S \mid P \denote{R}
\begin{itemize}
  \item successful termination ($\SKIP$) and sequential composition (;),
  \item infinitary guarded choice,
  \item prefixing operators with name binding, conditional choice,
  \item relational renaming, and
  \item the version of internal choice that takes a possibly infinite
    set of arguments.
\end{itemize}
The operators $\STOP$, $a\rightarrow$, $\intchoice$, $\extchoice$, $\conceal A$, $f(\_)$ and recursion
stem from \cite{BHR84}, and $\div$ and $\|_A$ from \cite{OH86}, whereas $\timeout$,
$\interrupt$ and $\throw$ were added to CSP by {\sc Roscoe} \cite{Ro97,Ro10}.

\begin{table}[htb]
\begin{center}
\framebox{$\begin{array}{ccccc}
\div\goto{\tau}\div &
(a\rightarrow P) \goto{a} P &
P \intchoice Q \goto{\tau} P &
P \intchoice Q \goto{\tau} Q \\[2ex]
\displaystyle\frac{P\goto{a} P'}{P\extchoice Q \goto{a} P'} &
\displaystyle\frac{P\goto{\tau} P'}{P\extchoice Q \goto{\tau} P'\extchoice Q} &
\displaystyle\frac{Q\goto{a} Q'}{P\extchoice Q \goto{a} Q'} &
\displaystyle\frac{Q\goto{\tau} Q'}{P\extchoice Q \goto{\tau} P\extchoice Q'} \\[4ex]
\displaystyle\frac{P\goto{a} P'}{P\timeout Q \goto{a} P'} &
\displaystyle\frac{P\goto{\tau} P'}{P\timeout Q \goto{\tau} P'\timeout Q} &
P \timeout Q \goto{\tau} Q &
\displaystyle\frac{P \goto{\alpha} P'}{f(P) \goto{f(\alpha)} f(P')} \\[4ex]
\displaystyle\frac{P\goto{\alpha} P'~~{\scriptstyle(\alpha\notin A)}}{P\|_AQ \goto{\alpha} P'\|_AQ} &
\multicolumn{2}{c}{
\displaystyle\frac{P\goto{a} P'~~Q\goto{a} Q'~~{\scriptstyle(a\in A)}}{P\|_AQ \goto{a} P'\|_AQ'}} &
\displaystyle\frac{Q\goto{\alpha} Q'~~{\scriptstyle(\alpha\notin A)}}{P\|_AQ \goto{\alpha} P\|_AQ'} \\[4ex]
\displaystyle\frac{P \goto{\alpha} P'~~{\scriptstyle(\alpha\notin A)}}{P\conceal A \goto{\alpha} P'\conceal A} &
\displaystyle\frac{P \goto{a} P'~~{\scriptstyle(a\in A)}}{P\conceal A \goto{\tau} P'\conceal A} &
\displaystyle\frac{P\goto{\alpha} P'~~{\scriptstyle(\alpha\notin A)}}{P\throw Q \goto{a} P'\throw Q} &
\displaystyle\frac{P\goto{a} P'~~{\scriptstyle(a\in A)}}{P\throw Q \goto{a} Q}\\[4ex]
\displaystyle\frac{P\goto{\alpha} P'}{P\triangle Q \goto{\alpha} P'\triangle Q} &
\displaystyle\frac{Q\goto{\tau} Q'}{P\triangle Q \goto{\tau} P'\triangle Q'} &
\displaystyle\frac{Q\goto{a} Q'}{P\triangle Q \goto{a} Q'} &
\multicolumn{2}{c}{\mu \X.P \goto{\tau} P[\mu \X.P/\X]}
\end{array}$}
\end{center}
\caption{Structural operational semantics of CSP}
\label{tab:CSP}
\end{table}

\noindent
The operational semantics of of CSP is given by the binary transition relations $\goto{\alpha}$
between CSP processes. The transitions $P \goto{\alpha} Q$ are derived by the rules in \tab{CSP}.
Here $a$, $b$ range over $\Sigma$ and $\alpha$, $\beta$ over \plat{$\Sigma\dcup\{\tau\}$}, and
relabelling operators $f$ are extended to \plat{$\Sigma\dcup\{\tau\}$} by $f(\tau)=\tau$.
The transition labels $\alpha$ are called \emph{actions}, and $\tau$ is the \emph{internal action}.

\section{The Failures-Divergences Model of CSP}

The process algebra CSP stems from {\sc Brookes, Hoare \& Roscoe} \cite{BHR84}.  It is also called
\emph{theoretical} CSP, to distinguish it from the language CSP of {\sc Hoare}~\cite{Ho78}.  Its semantics
\cite{BR85} associates to each CSP process a pair $\langle F, D\rangle$ of \emph{failures}
$F \subseteq \Sigma^* \times \Pow(\Sigma)$ and \emph{divergences} $D \subseteq \Sigma^*$,
subject to the conditions:
\begin{align}
  & \pair{\varepsilon,\emptyset} \in F \tag{N1}\\
  & \pair{st,\emptyset} \in F \Rightarrow \pair{s,\emptyset}\in F \tag{N2}\\
  & \pair{s,X} \in F \wedge Y\subseteq X \Rightarrow \pair{s,Y}\in F \tag{N3}\\
  & \pair{s,X} \in F \wedge \forall c\in Y .\, \pair{sc,\emptyset}\notin F \Rightarrow \pair{s,X\cup Y}\in F \tag{N4}\\
  & \forall Y \in\Pow_{\it fin}(X).\, \pair{s,Y} \in F \Rightarrow \pair{s,X}\in F \tag{N5}\\[1ex]
  & s\in D \Rightarrow st \in D \tag{D1}\\
  & s\in D \Rightarrow \pair{st,X}. \tag{D2}
\end{align}
Here $\varepsilon\in\Sigma^*$ is the empty sequence of communications and
$st$ denotes the concatenation of sequences $s$ and $t\in \Sigma^*$.
If $\langle F, D\rangle$ is the semantics of a process $P$, $\pair{s,\emptyset}\in F$, with $s\not\in D$,
tells that $P$ can perform the sequence of communications $s$, possibly interspersed with internal actions.
Such a sequence is called a \emph{trace} of $P$, and Conditions N1 and N2 say that the set of traces
of any processes is non-empty and prefix-closed. A failure $\pair{s,X}\in F$, with $s\notin D$, says
that after performing the trace $s$, $P$ may reach a state in which it can perform none of the
actions in $X$, nor the internal action. A communication $x\in \Sigma$ is thought to occur in
cooperation between a process and its environment. Thus $\pair{s,X}\in F$ indicates that deadlock
can occur if after performing $s$ the process runs in an environment that allows the execution of
actions in $X$ only. From this perspective, Conditions N3 and N4 are obvious.

A divergence $s \in D$ is a trace after which an infinite sequence of internal actions is possible.
In the failures-divergences model of CSP divergence is regarded \emph{catastrophic}: all further
information about the process' behaviour past a divergence trace is erased. This is accomplished by
\emph{flooding}: all conceivable failures $\pair{st,X}$ and divergences $st$ that have $s$ as a
prefix are added to the model (regardless whether $P$ actually has a trace $st$).

A CSP process $P$ from the syntax of \Sec{CSP} has the property that for any trace $s$ of $P$, with
$s\notin D$, the set ${\it next}(s)$ of actions $c$ such that $sc$ is also a trace of $P$ is finite.
By (N3--4), $\pair{s,X}\in F$ iff $\pair{s,X\cap {\it next}(s)}\in F$.
It follows that if $\pair{s,X}\notin F$, then there is a finite subset $Y$ of $X$, namely $X \cap {\it next}(s)$,
such that $\pair{s,Y}\notin F$. This explains Condition (N5).

In {\sc Brookes \& Roscoe} \cite{BR85} the semantics of CSP is defined denotationally: for each $n$-ary CSP operator $Op$,
a function is defined that extracts the failures and divergences of $Op(P_1,\dots,P_n)$ out of the
failures and divergences of the argument processes $P_1,\dots,P_n$. The meaning of a recursively
defined CSP process $\mu \X.P$ is obtained by means of fixed-point theory.
Alternatively, the failures and divergences of a CSP process can be extracted from its operational
semantics:

\begin{definition}\rm\label{df:FPoperational}
  Write $P \dto{} Q$ if there are processes $P_0,\dots,P_n$, with $n\geq 0$, such that $P=P_0$,
  $P_i \goto{\tau} P_{i+1}$ for all $0\leq i < n$, and $P_n=Q$.

  Write $P \dto{\alpha} Q$ if there are processes $P',Q'$ with
  $P \dto{} P' \goto{\alpha} Q' \dto{} Q$.

  Write $P \dto{\hat\alpha} Q$ if either $\alpha\in\Sigma$
  and $P \dto{\alpha} Q$, or $\alpha=\tau$ and $P \dto{} Q$.

  Write $P \dto{s} Q$, for $s=a_1a_2 \dots a_n \in \Sigma^*$ with $n\geq 0$, if there are processes
  $P_0,\dots,P_n$ such that $P=P_0$, $P_i \dto{a_i} P_{i+1}$ for all $0\leq i < n$, and $P_n=Q$.

  Let $I(P)=\{\alpha \in \Sigma \cup \{\tau\} \mid \exists Q.\, P \goto{\alpha} Q\}$.

  Write $P{\Uparrow}$ if there are processes $P_i$ for
  all $i\mathbin\geq 0$ with $P \dto{s} P_0 \goto{\tau} P_1 \goto{\tau} \dots$.

  $s\in \Sigma^*$ is a \emph{divergence trace} of a process $P$ if there is a $Q$ with $P \dto{s} Q{\Uparrow}$.

  The \emph{divergence set} of $P$ is $\D(P):=\{st\mid s \mbox{~is a divergence trace of~} P\}$.

  A \emph{stable failure} of a process $P$ is a pair $\pair{s,X} \in \Sigma^*\times\Pow(\Sigma)$ such
  that \plat{$P \dto{s} Q$} for some $Q$ with $I(Q)\cap (X\cup\{\tau\}) = \emptyset$.
  The \emph{failure set} of a process $P$ is
  $\F(p) = \{\pair{s,X} \mid s \in \D(P) \mbox{~or~}\pair{s,X} \mbox{~is a stable failure of~} P\}$.

  The semantics $\denote{P}_{\F\D}$ of a CSP process $P$ is the pair $\langle \F(P), \D(P) \rangle$.

  Processes $P$ and $Q$ are \emph{failures-divergences equivalent}, notation $P \equiv_{FD} Q$,
  iff $\denote{P}_{\F\D}=\denote{Q}_{\F\D}$.
  Process $P$ is a \emph{failures-divergences refinement} of $Q$, notation $P \sqsupseteq_{FD} Q$,
  iff $\F(P) \subseteq \F(Q) \wedge \D(P) \subseteq \D(Q)$.
\end{definition}
The operational semantics of \Sec{CSP} (then without the operators $\timeout$, $\interrupt$ and
$\throw$) appears, for instance, in \cite{OH86}, and was created after the denotational semantics.
In {\sc Olderog \& Hoare} \cite{OH86} it is shown that the semantics $\denote{P}$ of a CSP process defined operationally
through \df{FPoperational} equals the denotational semantics given in~\cite{BR85}.
The argument extends smoothly to the new operators $\timeout$, $\interrupt$ and $\throw$ \cite{Ro10}.
This can be seen as a justification of the operational semantics of \Sec{CSP}.

In {\sc Brookes, Hoare \& Roscoe} \cite{BHR84} a denotational semantics of CSP was given involving failures only.  Divergences
were included only implicitly, namely by thinking of a trace $s$ as a divergence of a process $P$
iff $P$ has all failures $\pair{st,X}$.\linebreak[2] So the semantics of $\div$ or $\mu X.X$ is simply the
set of all failure pairs.  As observed in {\sc De Nicola} \cite{DN85}, this approach invalidates a
number of intuitively valid laws, such as $P \extchoice \div = \div$. The improved semantics of
\cite{BR85} solves this problem.

In {\sc Hoare} \cite{Ho85} a slightly different semantics of CSP is given, in which a process is determined by
its failures, divergences, as well as its \emph{alphabet}. The latter is a superset of the set of
communications the process can ever perform. Rather than a parallel composition $\|_A$ for each set
of synchronising actions $A \subseteq \Sigma$, this approach has an operator $\|$ where the set of 
synchronising actions is taken to be the intersection of the alphabets of its
arguments. Additionally, there is an operator $\interleave$, corresponding to $\|_\emptyset$.
This approach is equally expressive as the one of \cite{BR85}, in the sense that there are semantics 
preserving translations in both directions. The work reported in this paper could just as well have
been carried out in this \emph{typed} version of CSP\@.

\section{A Complete Axiomatisation}

\newcommand{\Bot}{\mathbf{\bot}}
\newcommand{\Ii}{\mathbf{I1}}
\newcommand{\Ic}{\mathbf{I2}}
\newcommand{\Ia}{\mathbf{I3}}
\newcommand{\Io}{\mathbf{I4}}
\newcommand{\Ei}{\mathbf{E1}}
\newcommand{\Ec}{\mathbf{E2}}
\newcommand{\Ea}{\mathbf{E3}}
\newcommand{\Es}{\mathbf{E4}}
\newcommand{\Ed}{\mathbf{E5}}
\newcommand{\Si}{\mathbf{S1}}
\newcommand{\Sa}{\mathbf{S2}}
\newcommand{\SE}{\mathbf{S3}}
\newcommand{\SI}{\mathbf{S4}}
\newcommand{\Ss}{\mathbf{S5}}
\newcommand{\IS}{\mathbf{S6}}
\newcommand{\ES}{\mathbf{S7}}
\newcommand{\EI}{\mathbf{D1}}
\newcommand{\IE}{\mathbf{D2}}
\newcommand{\EA}{\mathbf{D3}}
\newcommand{\AI}{\mathbf{D4}}
\newcommand{\Sl}{\mathbf{SC}}
\newcommand{\Pru}{\mathbf{Prune}}
\newcommand{\Pa}{\mathbf{P0}}
\newcommand{\Pc}{\mathbf{P1}}
\newcommand{\PI}{\mathbf{P2}}
\newcommand{\PE}{\mathbf{P4}}
\newcommand{\PD}{\mathbf{P3}}
\newcommand{\HI}{\mathbf{H1}}
\newcommand{\HA}{\mathbf{H2}}
\newcommand{\HE}{\mathbf{H3}}
\newcommand{\HD}{\mathbf{H4}}
\newcommand{\CE}{\mathbf{H5}}
\newcommand{\CED}{\mathbf{H6}}
\newcommand{\CEI}{\mathbf{H7}}
\newcommand{\CEDI}{\mathbf{H8}}
\newcommand{\RS}{\mathbf{R0}}
\newcommand{\RI}{\mathbf{R1}}
\newcommand{\RE}{\mathbf{R2}}
\newcommand{\RA}{\mathbf{R3}}
\newcommand{\Rs}{\mathbf{R4}}
\newcommand{\RD}{\mathbf{R5}}
\newcommand{\TS}{\mathbf{T0}}
\newcommand{\TI}{\mathbf{T1}}
\newcommand{\TE}{\mathbf{T2}}
\newcommand{\TA}{\mathbf{T3}}
\newcommand{\AT}{\mathbf{T4}}
\newcommand{\Ts}{\mathbf{T5}}
\newcommand{\TD}{\mathbf{T6}}
\newcommand{\US}{\mathbf{U0}}
\newcommand{\UI}{\mathbf{U1}}
\newcommand{\UE}{\mathbf{U2}}
\newcommand{\UA}{\mathbf{U3}}
\newcommand{\Us}{\mathbf{U4}}
\newcommand{\UD}{\mathbf{U5}}

In \cite{BHR84,BR85,DN85,Ho85,Ro97,Ro10} many algebraic laws  $P=Q$, resp.\ $P \sqsubseteq Q$, are stated
that are \emph{valid} w.r.t.\ the failures-divergences semantics of CSP, meaning that $P \equiv_{FD} Q$,
resp.\ $P \sqsubseteq_{FD} Q$.
If $\Th$ is a collection of equational laws $P=Q$ then $\Th\vdash R=S$ denotes that the equation
$R=S$ is derivable from the equations in $\Th$ using reflexivity, symmetry, transitivity and the
rule of congruence, saying that if $Op$ is an $n$-ary CSP operator and $P_i = Q_i$ for $i=1,\dots,n$
then $Op(P_1,\dots,P_n)=Op(Q_1,\dots,Q_n)$. Likewise, if $\Th$ is a collection of inequational laws
$P\sqsubseteq Q$ then $\Th\vdash R\sqsubseteq S$ denotes that the inequation $R\sqsubseteq S$ is
derivable from the inequations in $\Th$ using reflexivity, transitivity and the rule saying that if
$Op$ is an $n$-ary CSP operator and $P_i \sqsubseteq Q_i$ for $i=1,\dots,n$ then
$Op(P_1,\dots,P_n)\sqsubseteq Op(Q_1,\dots,Q_n)$. 

\begin{definition}\rm
An equivalence $\sim$ on process expressions is called a \emph{congruence} for an  $n$-ary operator
$Op$ if $P_i \sim Q_i$ for $i=1,\dots,n$ implies $Op(P_1,\dots,P_n)\sim Op(Q_1,\dots,Q_n)$.
A preorder $\preceq$ is a \emph{precongruence} for $Op$, or $Op$ is \emph{monotone} for $\preceq$,
if $P_i \preceq Q_i$ for $i=1,\dots,n$ implies $Op(P_1,\dots,P_n)\preceq Op(Q_1,\dots,Q_n)$.
\end{definition}
If $\sim$ is a congruence for all operators of CSP (resp.\ $\preceq$ is a precongruence for all operators of CSP)
and $\Th$ is a set of (in)equational laws that are valid for $\sim$ (resp.\ $\preceq$) then any (in)equation
$R=S$ with $\Th\vdash R=S$ (resp.\ $R\sqsubseteq S$ with $\Th\vdash R\sqsubseteq S$) is valid for $\sim$ (resp.\ $\preceq$).

$\equiv_{FD}$ is a congruence for all operators of CSP\@. This follows
immediately from the existence of the denotational failures-divergences semantics.
Likewise, $\sqsubseteq_{FD}$ is a precongruence for all operators of CSP \cite{BHR84,BR85,DN85,Ho85,OH86,Ro97,Ro10}.

\begin{definition}\rm
A set $\Th$ of (in)equational laws---an \emph{axiomatisation}---is \emph{sound and complete} for an equivalence
$\sim$ (or a preorder $\preceq$) if $\Th \vdash R=S$ iff $R \sim S$ (resp.\ $\Th \vdash R\sqsubseteq S$
iff $R \preceq S$). Here ``$\Rightarrow$'' is \emph{soundness} and ``$\Leftarrow$'' completeness.
\end{definition}
In {\sc De Nicola} \cite{DN85} a sound and complete axiomatisation of $\sqsubseteq_{FD}$ for recursion-free CSP,
and no process identifiers or variables, is presented.
It is quoted in \tab{axsFD}.
As this axiomatisation consist of a mix of equations and inequations, formally it is an inequational
axiomatisation, where an equation $P=Q$ is understood as the conjunction of $P \sqsubseteq Q$ and
$Q \sqsubseteq P$. This mixed use is justified because $\equiv_{FD}$ is the \emph{kernel} of $\sqsubseteq_{FD}$:
one has $P \equiv_{FD} Q$ iff $P \sqsubseteq_{FD} Q \wedge Q \sqsubseteq_{FD} P$.

\begin{table}[p]
\[
\begin{array}{@{}l@{\quad}rcl@{}}
\textcolor{red}{\Bot}  &  \div              & \sqsubseteq & P                       \\[1ex]
\Ii  &  P \intchoice P              & = & P                       \\
\Ic  &  P \intchoice Q              & = & Q \intchoice P              \\
\Ia  &  P \intchoice (Q \intchoice R)   & = & (P \intchoice Q) \intchoice R   \\
\textcolor{red}{\Io}  &  P \intchoice Q              & \sqsubseteq & P                       \\[1ex]
\textcolor{red}{\Ei}  &  P \extchoice P                & = & P                       \\
\Ec  &  P \extchoice Q                & = & Q \extchoice P                \\
\Ea  &  P \extchoice (Q \extchoice R)       & = & (P \extchoice Q) \extchoice R       \\
\Es  &  P \extchoice \STOP            & = & P                       \\
\textcolor{red}{\Ed}  &  P \extchoice \div            & = & \div                       \\[1ex]
\EI  &  P \extchoice (Q \intchoice R)     & = & (P \extchoice Q) \intchoice (P \extchoice R) \\
\textcolor{red}{\IE}  &  P \intchoice (Q \extchoice R)     & = & (P \intchoice Q) \extchoice (P \intchoice R) \\
\textcolor{red}{\EA}  &  (a\rightarrow P) \extchoice (a\rightarrow Q)& = &a\rightarrow (P\intchoice Q)\\
\textcolor{red}{\AI}  & (a\rightarrow P) \intchoice (a\rightarrow Q) & = &a\rightarrow (P\intchoice Q)\\
\textcolor{red}{\Sl}  & P \timeout Q & = & (P\extchoice Q)\intchoice Q\\[1ex]
\Pa  &  P \|_A (Q \|_A R)       & = & (P \|_A Q) \|_A R        \\
\Pc  &  P \|_A Q                & = & Q \|_A P                 \\
\textcolor{red}{\PI}  &  (P \intchoice Q) \|_A R     & = & (P \|_A R) \intchoice (Q \|_A R) \\
\textcolor{red}{\PD}  &  P \|_A \div             & = & \div                       \\
\PE  & \multicolumn{3}{@{}l}{\mbox{~If~} P = \Extchoice_{i \in I}(a_i\rightarrow P_i)
  \mbox{~and~} Q = \Extchoice_{j \in J}(b_j\rightarrow Q_j) \mbox{~then}:}\\
     &  P \| Q                  & = &
                  \Extchoice_{a_i\notin A} (a_i \rightarrow (P_i\|_A Q)) \extchoice \\
             &&& \Extchoice_{a_j=b_j\in A} (a_i \rightarrow (P_i \|_A Q_j)) \extchoice\\
             &&& \Extchoice_{b_j\notin A} (b_j \rightarrow (P\|_A Q_j)) \\[2ex]
\HI  &  (P \intchoice Q)\conceal A   & = & (P\conceal A) \intchoice (Q\conceal A) \\
\textcolor{red}{\HA}  &  (P \extchoice a \rightarrow Q)\conceal A   & = & ((P\extchoice Q)\conceal A) \intchoice (Q\conceal A) \\
\HE  &  \big(\Extchoice_{i\in I} (b_i \rightarrow P_i )\big)\conceal A & = &
                 \big(\Extchoice_{i\in I} (b_i \rightarrow (P_i\conceal A) )\big)\qquad \mbox{if~}\forall i\in I.\, b_i\notin A\\
\HD  &  \div\conceal A         & = & \div   \\[1ex]
\RI  &  f(P \intchoice Q)         & = & f(P) \intchoice f(Q)   \\
\RE  &  f(P \extchoice Q)         & = & f(P) \extchoice f(Q)   \\
\RA  &  f(a \rightarrow P)        & = & f(a) \rightarrow f(P)   \\
\Rs  &  f(\STOP)                   & = & \STOP   \\
\RD  &  f(\div)                   & = & \div   \\[1ex]
\TI  &  (P \intchoice Q)\throw R         & = & (P\throw R) \intchoice (Q\throw R)   \\
\TE  &  (P \extchoice Q)\throw R         & = & (P\throw R) \extchoice (Q\throw R)   \\
\TA  &  (a \rightarrow P) \throw Q       & = & a \rightarrow (P\throw Q) \hfill\mbox{if $a \notin A$}  \\
\AT  &  (a \rightarrow P) \throw Q       & = & a \rightarrow Q \hfill \mbox{if $a \in A$}  \\
\Ts  &  \STOP \throw Q       & = & \STOP  \\
\TD  &  \div \throw Q       & = & \div  \\[1ex]
\UI  &  (P \intchoice Q)\interrupt R         & = & (P\interrupt R) \intchoice (Q\interrupt R)   \\
\textcolor{red}{\UE}  &  (P \extchoice Q)\interrupt R         & = & (P\interrupt R) \extchoice (Q\interrupt R)   \\
\textcolor{red}{\UA}  &  (a \rightarrow P) \interrupt Q       & = & (a \rightarrow (P\interrupt Q))\extchoice Q \\
\Us  &  \STOP \interrupt P                   & = & P \\
\textcolor{red}{\UD}  &  \div \interrupt P                    & = & \div
\vspace{5pt}
\end{array}
\]
\caption{A complete axiomatisation of $\sqsubseteq_{FD}$ for recursion-free CSP}\label{tab:axsFD}
\end{table}

In \cite{DN85}, following \cite{BHR84,BR85}, two parallel composition operators $\|$ and $\interleave$ were considered,
instead of the parametrised operator $\|_A$. Here $\|=\|_\Sigma$ and $\interleave = \|_\emptyset$.
In \tab{axsFD} the axioms for these two operators are unified into an axiomatisation of $\|_A$.
Additionally, I added axioms for sliding choice, renaming, interrupt and throw---these operators were not
considered in \cite{DN85}. The associativity of parallel composition (Axiom $\Pa$) is not included
in \cite{DN85} and is not needed for completeness. I added it anyway, because of its
importance in equational reasoning.

The soundness of the axiomatisation of \tab{axsFD} follows from $\sqsubseteq_{FD}$ being a
precongruence, and the validity of the axioms---a fairly easy inspection using the denotational
characterisation of $\denote{\_}$. To obtain completeness,
write \plat{$\Extchoice_{i\in I}P_i$}, with $I\mathbin=\{i_1,\ldots,i_n\}$ any finite index set, for
$P_{i_1} \mathbin{\extchoice} P_{i_2} \mathbin{\extchoice} \dots \mathbin{\extchoice} P_{i_n}$,
where \plat{$\Extchoice_{i\in \emptyset}P_i$} represents $\STOP$. This notation is justified by Axioms \textbf{E2--4}.
Furthermore,\vspace{-2pt}  $\Intchoice_{j\in J}P_j$,\vspace{1pt} with $J\mathbin=\{j_1,..,j_m\}$ any finite, nonempty index set, denotes
$P_{j_1} \mathbin{\intchoice} P_{j_2} \mathbin{\intchoice} \dots \mathbin{\intchoice} P_{j_m}$.
This notation is justified by Axioms $\Ic$ and $\Ia$.
Now a \emph{normal form} is a defined as a CSP expression of the form $\div$ or \plat{$\Intchoice_{j\in J}R_j$},
with \plat{$R_j=\big(\Extchoice_{k\in K_j}(a_{kj}\rightarrow R_{kj})\big)$}\vspace{2pt} for $j\in J$,
where the subexpressions $R_{kj}$ are again in normal form.
Here $J$ and the $K_j$ are finite index sets, $J$ nonempty.

Axioms $\Bot$ and $\Io$ derive $P \intchoice \div =\div$. Together with
Axioms $\EI$, $\Sl$, \mbox{\textbf{P1--4}}, \textbf{H1--4}, \textbf{R1--5}, \textbf{T1--6} and \textbf{U1--5}
this allows any recursion-free CSP expression to be rewritten into normal form.
In \cite{DN85} it is shown that for any two normal forms $P$ and $Q$ with $P \sqsubseteq_{FD} Q$,
Axioms $\Bot$, \textbf{I1--4}, \mbox{\textbf{E1--5}} and \mbox{\textbf{D1--4}} derive $\vdash P=Q$.
Together, this yields the completeness of the axiomatisation of \tab{axsFD}.

\section{Other Models of CSP}

Several alternative models of CSP have been proposed in the literature, including the
readiness-divergences model of {\sc Olderog \& Hoare} \cite{OH86} and the stable revivals model of
{\sc Roscoe} \cite{Ro09}.
A hierarchy of such models is surveyed in {\sc Roscoe}~\cite{Ro10}.
Each of these models corresponds with a preorder (and associated semantic equivalence) on labelled
transition systems. In \cite{vG93} I presented a survey of semantic equivalences and preorders on
labelled transition systems, ordered by inclusion in a lattice. Each model occurring in \cite{Ro10}
correspond exactly with with one of the equivalences of \cite{vG93}, or---like the stable revivals
model---arises as the meet or join of two such equivalences.

In the other direction, not every semantic equivalence or preorder from \cite{vG93} yields a sensible model of CSP\@.
First of all, one would want to ensure that it is a (pre)congruence for the operators of CSP\@.
Additionally, one might impose sanity requirements on the treatment of recursion.

The hierarchy of models in \cite{Ro10} roughly consist of two hierarchies: the stable models, and
the divergence-strict ones. The failures-divergences model could be seen as the centre piece in the
divergence-strict hierarchy, and the stable failures model \cite{Ro97}, which outside CSP stems
from {\sc Bergstra, Klop \& Olderog} \cite{BKO87}, plays the same role in the stable hierarchy.
Each of these hierarchies has a maximal (least discriminating) element, called $\cal{FL}^\Downarrow$
and $\cal{FL}$ in \cite{Ro10}. These correspond to the ready trace models ${\it RT}^\downarrow$ and
${\it RT}$ of \cite{vG93}.

The goal of the present paper is to propose a sensible model of CSP that is strictly finer than all
models thus far considered, and thus unites the two hierarchies mentioned above.
As all models of CSP considered so far have a distinctly linear time flavour, I here propose a
branching time model, thereby showing that the syntax of CSP is not predisposed towards linear time models.
My model can be given as an equivalence relation on labelled transition system, provided I show
that it is a congruence for the operators of CSP\@. I aim for an equivalence that allows a
complete axiomatisation in the style of \tab{axsFD}, obtained by replacing axioms that are no longer valid by
weaker ones.

One choice could be to base a model on strong bisimulation equivalence \cite{Mi90ccs}.
Strong bisimilarity is a congruence for all CSP operators, because their operational semantics fits
the tyft/tyxt format of \cite{GrV92}. However, this is an unsuitable equivalence for CSP, because it
fails to abstract from internal actions. Even the axiom $\Ii$ would not be valid, as the two sides
differ by an internal action.

A second proposal could be based on weak bisimilarity \cite{Mi90ccs}. This equivalence abstracts
from internal activity, and validates $\Ii$.
%  Weak bisimilarity is a congruence for all CSP operators, because their operational semantics is
%  \emph{WB cool} \cite{Bl95,vG11}.
The default incarnation of weak bisimilarity is not finer than failures-divergences equivalence,
because it satisfies $\div\mathbin=\STOP$. Therefore, one would take a divergence-preserving variant of this
notion: the \emph{weak bisimulation with explicit divergence} of {\sc Bergstra, Klop \& Olderog} \cite{BKO87}.
Yet, some crucial CSP laws are invalidated, such as $\Ia$ and $\EI$. This destroys any hope of
a complete axiomatisation along the lines of \tab{axsFD}.

My final choice is \emph{divergence-preserving coupled similarity} \cite{vG93}, based on coupled
similarity for divergence-free processes from {\sc Parrow \& Sj\"odin} \cite{PS92}. This is the finest equivalence in
\cite{vG93} that satisfies $\Ia$ and $\EI$. In fact, it satisfies all of the axioms of \tab{axsFD},
except for the ones marked red: $\Bot$, $\Io$, $\Ei$, $\Ed$, \mbox{\textbf{\textbf{D2--4}}},
$\Sl$, $\PI$, $\PD$, $\HA$, $\UE$, $\UA$ and $\UD$.

Divergence-preserving coupled similarity belongs to the bisimulation family of semantic equivalences,
in the sense that on transition systems without internal actions it coincides with strong bisimilarity.

In \Sec{coupled} I present divergence-preserving coupled similarity.
In \Sec{congruence} I prove that it is a congruence for the operators of CSP,
and in \Sec{complete} I present a complete axiomatisation for recursion-free CSP processes without interrupts.

\section{Divergence-Preserving Coupled Similarity}\label{sec:coupled}

\begin{definition}\rm\label{df:coupled}
A \emph{coupled simulation} is a binary relation $\R$ on CSP processes,
such that, for all $\alpha\in\Sigma \cup\{\tau\}$,
\begin{itemize}
\item if $P \R Q$ and $P \goto{\alpha} P'$ then there exists a $Q'$ with $Q \dto{\hat\alpha} Q'$ and $P' \R Q'$,
\item and if $P \R Q$ then there exists a $Q'$ with $Q \dto{} Q'$ and $Q' \R P$.
\end{itemize}
It is \emph{divergence-preserving} if $P \R Q$ and $P {\Uparrow}$ implies $Q {\Uparrow}$.
Write \plat{$P \sqsupseteq_{CS}^\Delta Q$} if there exists a divergence-preserving coupled simulation $\R$ with $P \R Q$.
Two processes $P$ and $Q$ are \emph{divergence-preserving coupled similar}, notation \plat{$P \equiv_{CS}^\Delta Q$},
if $P \sqsupseteq_{CS}^\Delta Q$ and $Q \sqsupseteq_{CS}^\Delta P$.
\end{definition}
Note that the union of any collection of divergence-preserving coupled simulations is itself a
divergence-preserving coupled simulation. In particular, $\sqsupseteq_{CS}^\Delta$ is a
divergence-preserving coupled simulation.
Also note that in the absence of the internal action $\tau$, coupled simulations are symmetric, and
coupled similarity coincides with strong bisimilarity (as defined in \cite{Mi90ccs}).

Intuitively, \plat{$P \sqsupseteq_{CS}^\Delta Q$} says that $P$ is ``ahead'' of a state matching
$Q$, where $P'$ is ahead of $P$ if $P \dto{} P'$. The first clause says that if $P$ is ahead of a
state matching $Q$, then any transition performed by $P$ can be matched by $Q$---possibly after $Q$
``caught up'' with $P$ by performing some internal transitions. The second clause says that if $P$
is ahead of $Q$, then $Q$ can always catch up, so that it is ahead of $P$. Thus, if $P$ and $Q$ are
in stable states---where no internal actions are possible---then $P \sqsubseteq_{CS}^\Delta Q$
implies $Q \sqsubseteq_{CS}^\Delta P$.  In all other situations, $P$ and $Q$ do not need to be
matched exactly, but there do exists under- and overapproximations of a match. The result is that the
relation behaves like a weak bisimulation w.r.t.\ visible actions, but is not so pedantic in
matching internal actions.

\begin{proposition}\label{pr:preorder}
$\sqsupseteq_{CS}^\Delta$ is reflexive and transitive, and thus a preorder.
\end{proposition}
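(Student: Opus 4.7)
For \textbf{reflexivity}, I would verify that the identity relation $\Id = \{(P,P)\mid P\text{ a CSP process}\}$ is a divergence-preserving coupled simulation. Clause~1 holds because $P\goto{\alpha} P'$ directly entails $P\dto{\hat\alpha} P'$; clause~2 holds by taking $Q':=P$ with the empty sequence of $\tau$-steps; and divergence preservation is tautological. Hence $P \sqsupseteq_{CS}^\Delta P$ for every $P$.

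For \textbf{transitivity}, suppose $P \sqsupseteq_{CS}^\Delta Q$ and $Q \sqsupseteq_{CS}^\Delta R$. My plan is to exhibit a divergence-preserving coupled simulation containing the pair $(P,R)$; by the maximality of $\sqsupseteq_{CS}^\Delta$ noted right after \df{coupled} this yields $P \sqsupseteq_{CS}^\Delta R$. The candidate is the self-composition $\R$ defined by $(X,Z)\in \R$ iff there exists $Y$ with $X \sqsupseteq_{CS}^\Delta Y \sqsupseteq_{CS}^\Delta Z$. Verifying the clauses relies on a standard auxiliary \emph{weak simulation lemma}, proved by induction on the length of a weak transition: for any coupled simulation $\R'$, if $P \R' Q$ and $P \dto{\hat\alpha} P'$ then some $Q'$ satisfies $Q \dto{\hat\alpha} Q'$ and $P' \R' Q'$. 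Only clause~1 is needed in the induction, applied step by step through each $\tau$- or labelled transition.

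Fix $(X,Z)\in\R$ witnessed by $Y$ with $X \sqsupseteq_{CS}^\Delta Y \sqsupseteq_{CS}^\Delta Z$. Clause~1 is routine: given $X\goto{\alpha} X'$, clause~1 on the first factor yields $Y \dto{\hat\alpha} Y'$ with $X' \sqsupseteq_{CS}^\Delta Y'$, and the weak simulation lemma applied to the second factor yields $Z \dto{\hat\alpha} Z'$ with $Y' \sqsupseteq_{CS}^\Delta Z'$, so $(X',Z')\in\R$. Divergence preservation chains trivially: $X{\Uparrow}$ forces $Y{\Uparrow}$ and hence $Z{\Uparrow}$.

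The delicate part is clause~2, whose hypothesis asks only that $Y$, not $Z$, catches up with $X$, so a naive chain fails. My plan is a ``flip--propagate--flip'' manoeuvre. First, clause~2 on $X \sqsupseteq_{CS}^\Delta Y$ produces $Y\dto{} Y'$ with $Y' \sqsupseteq_{CS}^\Delta X$. Second, the weak simulation lemma propagates $Y\dto{} Y'$ through $Y \sqsupseteq_{CS}^\Delta Z$ to give $Z\dto{} Z'$ with $Y' \sqsupseteq_{CS}^\Delta Z'$. Third, clause~2 on $Y' \sqsupseteq_{CS}^\Delta Z'$ produces $Z'\dto{} Z^*$ with $Z^* \sqsupseteq_{CS}^\Delta Y'$. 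Then $Z\dto{} Z^*$ and $Z^* \sqsupseteq_{CS}^\Delta Y' \sqsupseteq_{CS}^\Delta X$, placing $(Z^*,X)\in\R$ as required. The subtle point I want to highlight is that it is essential to compose $\sqsupseteq_{CS}^\Delta$ with itself rather than two arbitrary witnessing coupled simulations $\R_1,\R_2$: with distinct witnesses, the clause~2 flip would land in $\R_2\circ\R_1$ instead of the original $\R_1\circ\R_2$, and these need not coincide, so the closure argument would break.
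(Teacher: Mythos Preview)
Your proof is correct and follows the same overall structure as the paper's---show that a suitable composite relation is itself a divergence-preserving coupled simulation---but the two resolve the clause-2 ``flip'' obstacle differently. You sidestep it by composing the single relation $\sqsupseteq_{CS}^\Delta$ with itself (exploiting that it is already a divergence-preserving coupled simulation), so that the flipped pair $(Z^*,X)$ lands back in the same composite. The paper instead proves the more general statement that for \emph{any} two divergence-preserving coupled simulations $\R,\R'$ the symmetrised composite $\R;\R' \cup \R';\R$ is again one; the flip then simply moves a pair from one half of the union to the other. Your version is slightly more direct for the stated goal, while the paper's yields a reusable closure lemma at the level of arbitrary witnesses---precisely the generality you correctly flag as unavailable for a bare $\R_1;\R_2$.
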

\begin{proof}
The identity relation $\Id$ is a divergence-preserving coupled simulation,
and if $\R$, $\R'$ are divergence-preserving coupled simulations, then so is $\R;\R' \cup \R';\R$.
Here ${\R};{\R'}$ is defined by $P \mathrel{{\R};{\R'}} R$ iff there is a $Q$ with $P \R Q \R' R$.

$\R{;}\R'$ is divergence-preserving: if $P {\R} Q {\R'} R$ and $P {\Uparrow}$, then $Q {\Uparrow}$,
and thus $R {\Uparrow}\!$.
The same holds for $\R';\R$, and thus for $\R;\R' \cup \R';\R$.

To check that $\R;\R' \cup \R';\R$ satisfies the first clause of \df{coupled},
note that if $Q \R' R$ and $Q \dto{\hat\alpha} Q'$, then,
by repeated application of the first clause of \df{coupled}, there is an $R'$ with $R \dto{\hat\alpha} R'$ and $Q' \R' R'$.

Towards the second clause, if $P \R Q \R' R$,
then, using the second clause for $\R$, there is a $Q'$ with $Q \dto{} Q'$ and $Q' \R P$.
Hence, using the first clause for $\R'$, there is an $R'$ with $R \dto{} R'$ and $Q' \R' R'$.
Thus, using the second clause for $\R'$, there is an $R''$ with $R' \dto{} R''$ and $R'' \R' Q'$,
and hence $R'' \mathrel{{\R'};{\R}} P'$.
\qed
\end{proof}

\begin{proposition}\label{pr:tau}
If $P \dto{} Q$ then $P \sqsubseteq_{CS}^\Delta Q$.
\end{proposition}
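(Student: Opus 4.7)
The plan is to exhibit a single divergence-preserving coupled simulation that contains the pair $(Q,P)$, where the notation $P \sqsubseteq_{CS}^\Delta Q$ is read as $Q \sqsupseteq_{CS}^\Delta P$. The natural candidate is
\[
\R := \{(R,S) \mid S \dto{} R\},
\]
the ``$S$ is behind $R$ along a $\tau$-path'' relation. Since $R \dto{} R$ via the empty chain (the $n=0$ case of \df{FPoperational}), $\R$ contains the identity, and by the hypothesis $P \dto{} Q$ we have $(Q,P) \in \R$. Thus it suffices to verify that $\R$ is a divergence-preserving coupled simulation.

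For the first clause of \df{coupled}, suppose $(R,S) \in \R$, so $S \dto{} R$, and suppose $R \goto{\alpha} R'$. Concatenating gives $S \dto{} R \goto{\alpha} R'$, which is exactly $S \dto{\hat\alpha} R'$ in the sense of \df{FPoperational} (the empty trailing $\tau$-chain suffices). Taking the matching state to be $R'$ itself, the pair $(R', R')$ lies in $\R$ because $R' \dto{} R'$. For the second (coupling) clause, suppose $(R,S) \in \R$. Choose the catching-up state to be $R$ itself: then $S \dto{} R$ holds by assumption, and $(R,R) \in \R$ again by reflexivity of $\dto{}$.

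For divergence preservation, suppose $(R,S) \in \R$ and $R {\Uparrow}$. By definition there is some $s \in \Sigma^*$ and an infinite $\tau$-chain $R \dto{s} R_0 \goto{\tau} R_1 \goto{\tau} \dots$ Prepending the $\tau$-chain witnessing $S \dto{} R$ gives $S \dto{s} R_0 \goto{\tau} R_1 \goto{\tau} \dots$, so $S {\Uparrow}$.

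There is no real obstacle here: each clause is a one-line chaining of $\tau$-transitions, and the only conceptual point to check is that the ``ahead-of'' intuition described informally in the text is faithfully captured by this particular $\R$—which it is, precisely because the empty $\tau$-chain makes $\R$ reflexive and so allows both the forward matching and the coupling move to terminate at the state that has already been reached.
\qed
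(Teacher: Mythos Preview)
Your proof is correct and follows essentially the same approach as the paper: exhibit a divergence-preserving coupled simulation containing the pair $(Q,P)$ and verify the clauses by prepending the $\tau$-chain $P \dto{} Q$. The only difference is cosmetic---the paper uses the minimal relation $\Id \cup \{(Q,P)\}$ for the fixed $P$ and $Q$, whereas you use the larger uniform relation $\{(R,S)\mid S \dto{} R\}$; the verification steps are identical.
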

\begin{proof}
I show that $\Id \cup\{(Q,P)\}$, with $\Id$ the identity relation, is a coupled simulation.
Namely if \plat{$Q \goto{\alpha}Q'$} then surely \plat{$P \dto{\alpha} Q'$}.
The second clause of \df{coupled} is satisfied because \plat{$P \dto{} Q$}.
Furthermore, if $Q {\Uparrow}$ then certainly $P {\Uparrow}$, so the relation is divergence-preserving.
\qed
\end{proof}

\begin{proposition}\label{pr:CSpreorder}
$P \sqsupseteq_{CS}^\Delta Q$ iff $P\intchoice Q \equiv_{CS}^\Delta Q$.
\end{proposition}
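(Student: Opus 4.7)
The plan is to prove each direction separately, exploiting the fact that $P \intchoice Q$ has exactly the two $\tau$-transitions $P \intchoice Q \goto{\tau} P$ and $P \intchoice Q \goto{\tau} Q$, together with \pr{tau} and the transitivity of $\sqsupseteq_{CS}^\Delta$ supplied by \pr{preorder}.

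For the ``$\Leftarrow$'' direction, suppose $P \intchoice Q \equiv_{CS}^\Delta Q$; in particular some divergence-preserving coupled simulation $\R$ relates $P \intchoice Q$ to $Q$. Applying the first clause of \df{coupled} to $P \intchoice Q \goto{\tau} P$ produces a $Q'$ with $Q \dto{} Q'$ and $P \R Q'$, hence $P \sqsupseteq_{CS}^\Delta Q'$. Proposition~\pr{tau} gives $Q' \sqsupseteq_{CS}^\Delta Q$, and transitivity then delivers $P \sqsupseteq_{CS}^\Delta Q$.

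For the ``$\Rightarrow$'' direction, assume $P \sqsupseteq_{CS}^\Delta Q$. The half $Q \sqsupseteq_{CS}^\Delta P \intchoice Q$ is immediate from \pr{tau} applied to $P \intchoice Q \dto{} Q$. For the other half, $P \intchoice Q \sqsupseteq_{CS}^\Delta Q$, I would propose the candidate $\R := {\sqsupseteq_{CS}^\Delta} \cup \{(P \intchoice Q, Q)\}$. The clauses inherited from $\sqsupseteq_{CS}^\Delta$ are automatic, so it suffices to check the new pair. The first clause is verified by matching both $\tau$-transitions of $P \intchoice Q$ with the idle move $Q \dto{\hat\tau} Q$: the residual pairs $(P,Q)$ and $(Q,Q)$ both lie in $\sqsupseteq_{CS}^\Delta$, by hypothesis and reflexivity respectively. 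Divergence-preservation on the new pair follows by case analysis on the first $\tau$-step of any divergence of $P \intchoice Q$: it must go to $P$ or to $Q$, and in the former case $P {\Uparrow}$ forces $Q {\Uparrow}$ because $P \sqsupseteq_{CS}^\Delta Q$ is itself divergence-preserving.

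The delicate point, and the one I expect to be the main obstacle to spot, is the second clause of \df{coupled} on $(P \intchoice Q, Q)$: the clause swaps argument order and therefore demands a $Q^*$ with $Q \dto{} Q^*$ and $Q^* \R (P \intchoice Q)$. Taking $Q^* := Q$ reduces the requirement to $Q \sqsupseteq_{CS}^\Delta P \intchoice Q$, which is precisely what \pr{tau} supplied above. Without this observation one would have to build a fresh witnessing simulation of $Q$ by $P \intchoice Q$ by hand, so the entire argument essentially turns on recognising that $P \intchoice Q \dto{} Q$ already gives, via \pr{tau}, the ``missing'' inequation.
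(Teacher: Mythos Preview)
Your proof is correct and follows essentially the same strategy as the paper: the ``$\Leftarrow$'' direction is identical, and for ``$\Rightarrow$'' both arguments exhibit a divergence-preserving coupled simulation extending $\sqsupseteq_{CS}^\Delta$ by the pair $(P\intchoice Q,Q)$. The only cosmetic difference is that the paper also throws the pair $(Q,P\intchoice Q)$ into its single relation and verifies it directly, whereas you handle that half separately via \pr{tau} and then reuse the same fact to discharge the second clause on $(P\intchoice Q,Q)$.
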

\begin{proof}
``$\Rightarrow$'':
Let $\R$ be the smallest relation such that, for any $P$ and $Q$,
$P \sqsupseteq_{CS}^\Delta Q$ implies $P \R Q$, $(P\intchoice Q) \R Q$ and $Q \R (P\intchoice Q)$.
It suffices to show that $\R$ is a divergence-preserving coupled simulation.

That $\R$ is divergence-preserving is trivial, using that $(P \intchoice Q){\Uparrow}$ iff
$P{\Uparrow} \vee Q{\Uparrow}$.

Suppose $P^* \R Q$ and $P^* \goto{\alpha} P'$.
The case that $P^*=P$ with $P \sqsupseteq_{CS}^\Delta Q$ is trivial.
Now let $Q$ be $Q^* \intchoice P^*$. Since \plat{$P^* \goto{\alpha} P'$}, surely \plat{$Q \dto{\alpha} P'$}, and $P' \R P'$.
Finally, let $P^*=(P\intchoice Q)$ with \plat{$P \sqsupseteq_{CS}^\Delta Q$}.
Then $\alpha=\tau$ and $P'$ is either $P$ or $Q$. Both cases are trivial, taking $Q'=Q$.

Towards the second clause of \df{coupled}, suppose $P^* \R Q$.
The case $P^*=P$ with \plat{$P \sqsupseteq_{CS}^\Delta Q$} is trivial.
Now let $Q$ be $Q^* \intchoice P^*$. Then $Q\dto{} P^*$ and $P^* \R P^*$.
Finally, let $P^*=(P\intchoice Q)$ with \plat{$P \sqsupseteq_{CS}^\Delta Q$}.
Then $Q \dto{} Q$ and $Q \R (P\intchoice Q)$.

``$\Leftarrow$'':
Suppose \plat{$P \intchoice Q \sqsupseteq_{CS}^\Delta Q$}.
Since $P \intchoice Q \goto{\tau} P$ there exists a $Q'$ with $Q \dto{} Q'$ and $P \sqsupseteq_{CS}^\Delta Q'$.
By \pr{tau} \plat{$Q' \sqsupseteq_{CS}^\Delta Q$} and by \pr{preorder} $P \sqsupseteq_{CS}^\Delta Q$.
\qed
\end{proof}

\section{Congruence Properties}\label{sec:congruence}

\begin{proposition}\label{pr:prefixing}
$\equiv_{CS}^\Delta$ is a congruence for action prefixing.
\end{proposition}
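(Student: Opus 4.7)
The plan is to show $\equiv_{CS}^\Delta$ is preserved by $a\rightarrow{-}$ by directly constructing a witnessing divergence-preserving coupled simulation. Assume $P \equiv_{CS}^\Delta Q$; let $\R_1$ be a divergence-preserving coupled simulation with $P \R_1 Q$ and $\R_2$ one with $Q \R_2 P$. I propose the relation
\[
\R \;:=\; \R_1 \,\cup\, \R_2 \,\cup\, \{(a\rightarrow P,\,a\rightarrow Q),\,(a\rightarrow Q,\,a\rightarrow P)\},
\]
and aim to verify that it is a divergence-preserving coupled simulation, which immediately yields both $a\rightarrow P \sqsupseteq_{CS}^\Delta a\rightarrow Q$ and $a\rightarrow Q \sqsupseteq_{CS}^\Delta a\rightarrow P$.

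Pairs inherited from $\R_1$ and $\R_2$ continue to satisfy both clauses of \df{coupled}, because the witnesses demanded by those clauses already lie in $\R_i \subseteq \R$. For the two newly added pairs I would check the clauses directly. The only transition available from $a\rightarrow X$ is $a\rightarrow X \goto{a} X$, which is matched by $a\rightarrow Y \dto{a} Y$ on the other side; the resulting pair $(P,Q)$ or $(Q,P)$ lies in $\R_1$ or $\R_2$ respectively. For the coupling clause applied to $(a\rightarrow P, a\rightarrow Q)$, since $a\rightarrow Q$ has no $\tau$-transitions the required $Q'$ with $(a\rightarrow Q)\dto{} Q'$ must be $a\rightarrow Q$ itself; the companion pair $(a\rightarrow Q,a\rightarrow P)$ that I have deliberately included then supplies the needed $Q' \R (a\rightarrow P)$. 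The symmetric case is analogous. Divergence-preservation is immediate since $a\rightarrow X$ admits no infinite $\tau$-path, while pairs from $\R_1,\R_2$ remain divergence-preserving.

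The subtle point—and the only place where a naive proof fails—is precisely this coupling clause. A first instinct would be to use only $\R_1 \cup \{(a\rightarrow P, a\rightarrow Q)\}$ and prove that $\sqsupseteq_{CS}^\Delta$ itself is a precongruence for prefixing. But because $a\rightarrow Q$ is stable (cannot perform $\tau$), the coupling clause forces $(a\rightarrow Q, a\rightarrow P)$ to belong to the relation as well, which in turn (by clause~1 on the new pair) demands $Q \sqsupseteq_{CS}^\Delta P$. This is why the proof must start from $\equiv_{CS}^\Delta$ rather than $\sqsupseteq_{CS}^\Delta$, and why the symmetric enrichment by $\R_2$ together with both oriented prefix-pairs is essential. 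Apart from this observation the verification is mechanical, so I would expect the write-up to be short.
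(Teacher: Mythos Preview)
Your proposal is correct and follows essentially the same approach as the paper: construct a relation by adjoining the prefix pairs $(a\rightarrow P,\,a\rightarrow Q)$ and $(a\rightarrow Q,\,a\rightarrow P)$ to an underlying divergence-preserving coupled simulation, then verify the clauses directly, with the key observation being that the coupling clause at the stable state $a\rightarrow Q$ forces the symmetric pair to be present. The only cosmetic difference is that the paper takes the whole of $\sqsupseteq_{CS}^\Delta$ as the underlying relation (so one relation works uniformly for all $P\equiv_{CS}^\Delta Q$), whereas you work with specific witnesses $\R_1,\R_2$ for the given $P,Q$; since unions of divergence-preserving coupled simulations are again such, this makes no material difference.
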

\begin{proof}
I have to show that $P \equiv_{CS}^\Delta Q$ implies $(a\rightarrow P) \equiv_{CS}^\Delta (a\rightarrow Q)$.

Let $\R$ be the smallest relation such that, for any $P$ and $Q$,
$P \sqsubseteq_{CS}^\Delta Q$ implies $P \R Q$, and
$P \equiv_{CS}^\Delta Q$ implies $(a\rightarrow P) \R (a\rightarrow Q)$.
It suffices to show that $\R$ is a divergence-preserving coupled simulation.

Checking the conditions of \df{coupled} for the case $P \R Q$ with $P \sqsubseteq_{CS}^\Delta Q$ is trivial.
So I examine the case $(a\rightarrow P) \R (a\rightarrow Q)$ with $P \equiv_{CS}^\Delta Q$.

Suppose $(a\rightarrow P) \goto{\alpha} P'$. Then $\alpha=a$ and $P'=P$.
Now $(a\rightarrow Q)\goto{\alpha} Q$ and $P \R Q$, so the first condition of \df{coupled} is
satisfied.

For the second condition, $(a\rightarrow Q)\dto{} (a\rightarrow Q)$, and, since $Q \equiv_{CS}^\Delta P$,
$(a\rightarrow Q) \R (a\rightarrow P)$.
Thus, $\R$ is a coupled simulation.

As $a\rightarrow P$ does not diverge, $\R$ moreover is divergence-preserving.
\qed
\end{proof}
Since $\STOP \sqsupseteq_{CS}^\Delta (a\mathbin\rightarrow STOP)\timeout\STOP$ but
$\STOP \not\sqsubseteq_{CS}^\Delta (a\mathbin\rightarrow STOP)\timeout\STOP$,\linebreak and thus
\plat{$b\rightarrow \STOP \not\sqsupseteq_{CS}^\Delta b\rightarrow\big((a\rightarrow STOP) \timeout\STOP\big)$},
the relation \plat{$\sqsupseteq_{CS}^\Delta$} is \emph{not} a precongruence for action prefixing.

It is possible to express action prefixing in terms of the throw operator:
$a \rightarrow P$ is strongly bisimilar with $(a \rightarrow \STOP) \mathbin{\Theta_{\{a\}}} P$.
Consequently, \plat{$\sqsupseteq_{CS}^\Delta$} is not a precongruence for the throw operator.

\begin{proposition}
  $\equiv_{CS}^\Delta$ is a congruence for the throw operator.
\end{proposition}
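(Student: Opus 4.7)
The plan is to exhibit, for arbitrary $P_1\equiv_{CS}^\Delta P_2$ and $Q_1\equiv_{CS}^\Delta Q_2$, a divergence-preserving coupled simulation containing the pair $(P_1\throw Q_1,\ P_2\throw Q_2)$; by the symmetry of $\equiv_{CS}^\Delta$ this yields both $\sqsupseteq_{CS}^\Delta$-directions, and hence the congruence claim. The witness I take is
\[
\R \;=\; \{(P\throw Q,\ P'\throw Q') \mid P\sqsupseteq_{CS}^\Delta P' \text{ and } Q\equiv_{CS}^\Delta Q'\} \;\cup\; {\sqsupseteq_{CS}^\Delta}.
\]
The asymmetry---only a preorder on the first throw argument, but full equivalence on the second---is deliberate and is forced by the example immediately preceding the statement, which shows that $\sqsupseteq_{CS}^\Delta$ is not monotone in the second slot of $\throw$.

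To check the first clause of \df{coupled} I case-split on the rule producing $(P\throw Q)\goto{\alpha}S$. If $P\goto{\alpha}\bar P$ with $\alpha\notin A$, so $S=\bar P\throw Q$, then $P\sqsupseteq_{CS}^\Delta P'$ supplies a match $P'\dto{\hat\alpha}\bar P'$ with $\bar P\sqsupseteq_{CS}^\Delta \bar P'$; since $\tau\notin A$ and $\alpha\notin A$, every single step of this weak transition lifts through $\throw$ to give $P'\throw Q'\dto{\hat\alpha}\bar P'\throw Q'$, and the pair $(\bar P\throw Q,\bar P'\throw Q')$ remains in the first part of $\R$. If instead $P\goto{a}\bar P$ with $a\in A$, so $S=Q$, then $P'\dto{a}\bar P'$ decomposes as $P'\dto{}P'_1\goto{a}P'_2\dto{}\bar P'$; lifting only the initial $\tau$-segment and then firing the second throw rule yields $P'\throw Q'\dto{}P'_1\throw Q'\goto{a}Q'$, and the residual obligation $(Q,Q')\in\R$ follows from $Q\equiv_{CS}^\Delta Q'$ via $\sqsupseteq_{CS}^\Delta\subseteq\R$. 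For the coupling clause I apply the coupling clause of $\sqsupseteq_{CS}^\Delta$ to $P\sqsupseteq_{CS}^\Delta P'$ to get $P^*$ with $P'\dto{}P^*$ and $P^*\sqsupseteq_{CS}^\Delta P$; lifting gives $P'\throw Q'\dto{}P^*\throw Q'$, and the reversed pair $(P^*\throw Q',\ P\throw Q)$ lies in $\R$ by the symmetry of $\equiv_{CS}^\Delta$ on the $Q$-side. Divergence-preservation is almost free: $\throw$ introduces no internal action of its own, so $(P\throw Q){\Uparrow}$ forces $P{\Uparrow}$, hence $P'{\Uparrow}$ by divergence-preservation of $\sqsupseteq_{CS}^\Delta$, hence $(P'\throw Q'){\Uparrow}$.

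The main obstacle is calibrating the asymmetry of $\R$ correctly. Weakening the $Q$-side to $\sqsupseteq_{CS}^\Delta$ would still satisfy the first clause, but would break the coupling clause: after swapping the pair one would need the reverse relation between $Q'$ and $Q$, which is not guaranteed. Strengthening the $P$-side to $\equiv_{CS}^\Delta$, on the other hand, is unnecessary, because the first throw rule propagates $\sqsupseteq_{CS}^\Delta$ between residuals verbatim. Verifying that these are precisely the right conditions under which the simulation closes up is the technical heart of the argument, and it is tightly matched to the non-monotonicity counterexample stated just above.
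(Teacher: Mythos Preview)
Your proof is correct and follows essentially the same approach as the paper: the witnessing relation $\R = \{(P\throw Q, P'\throw Q') \mid P\sqsupseteq_{CS}^\Delta P' \wedge Q\equiv_{CS}^\Delta Q'\} \cup {\sqsupseteq_{CS}^\Delta}$ is exactly the one the paper uses (modulo variable naming), and your case analysis matches the paper's, with somewhat more detail on how the weak $a$-transition lifts through $\throw$ when $a\in A$. The only point you leave implicit is that pairs coming from the ${\sqsupseteq_{CS}^\Delta}$ component of $\R$ satisfy the clauses trivially, which the paper also dismisses in one line.
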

\begin{proof}
  Let $A\subseteq \Sigma$. Let $\R$ be the smallest relation such that, for any $P_1,P_2,Q_1$, $Q_2$,
  $P_1 \sqsupseteq_{CS}^\Delta Q_1$ and $P_2 \equiv_{CS}^\Delta Q_2$ implies $P_1 \R Q_1$ and $(P_1\throw P_2) \R (Q_1\throw Q_2)$.
  It suffices to show that $\R$ is a divergence-preserving coupled simulation.

  So let $P_1 \sqsupseteq_{CS}^\Delta Q_1$, $P_2 \equiv_{CS}^\Delta Q_2$ and $(P_1\throw P_2) \goto{\alpha} P'$.
  Then \plat{$P_1 \goto{\alpha} P_1'$} for some $P_2'$, and
  either $\alpha\notin A$ and $P'=P_1' \throw P_2$, or $\alpha\in A$ and $P'=P_2$.
  So there is a $Q_1'$ with \plat{$Q_1 \dto{\hat\alpha} Q_1'$} and $P_1' \sqsupseteq_{CS}^\Delta Q_1'$.
  If $\alpha\notin A$ it follows that \plat{$(Q_1\throw Q_2) \dto{\hat\alpha} (Q_1'\throw Q_2)$}
  and $(P'_1\throw P_2) \R (Q'_1\throw Q_2)$.
  If $\alpha\in A$ it follows that \plat{$(Q_1\throw Q_2) \dto{\alpha} Q_2$} and $P_2 \R Q_2$.

  Now let $P_1 \sqsupseteq_{CS}^\Delta Q_1$ and $P_2 \equiv_{CS}^\Delta Q_2$.
  Then there is a $Q_1'$ with $Q_1 \dto{} Q_1'$ and \plat{$Q_1' \sqsupseteq_{CS}^\Delta P_1$}.
  Hence \plat{$Q_1\throw Q_2 \dto{} Q_1' \throw Q_2$} and $(Q_1'\throw Q_2) \R (P_1\throw P_2)$.

  The same two conditions for the case $P \R Q$ because $P \sqsupseteq_{CS}^\Delta Q$ are trivial.
  Thus $\R$ is a coupled simulation. That $\R$ is divergence-preserving follows because
  $P_1 \throw P_2 {\Uparrow}$ iff $P_1 {\Uparrow}$. 
\qed
\end{proof}
I proceed to show that \plat{$\sqsupseteq_{CS}^\Delta$} is a precongruence for all the other operators of
CSP\@. This implies that \plat{$\equiv_{CS}^\Delta$} is a congruence for all the operators of CSP\@.

\begin{proposition}
  $\sqsupseteq_{CS}^\Delta$ is a precongruence for internal choice.
\end{proposition}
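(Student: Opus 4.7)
The plan is to mimic the structure of the preceding congruence proofs: I would define a relation $\R$ as the smallest relation on CSP processes such that
\begin{itemize}
\item $P \sqsupseteq_{CS}^\Delta Q$ implies $P \R Q$, and
\item $P_1 \sqsupseteq_{CS}^\Delta Q_1$ together with $P_2 \sqsupseteq_{CS}^\Delta Q_2$ implies $(P_1 \intchoice P_2) \R (Q_1 \intchoice Q_2)$,
\end{itemize}
and then verify that $\R$ is a divergence-preserving coupled simulation. Once that is done, the conclusion $P_1 \intchoice P_2 \sqsupseteq_{CS}^\Delta Q_1 \intchoice Q_2$ is immediate. As usual, the case $P \R Q$ coming from $P \sqsupseteq_{CS}^\Delta Q$ is trivial, so everything reduces to the clauses for the new pairs $(P_1 \intchoice P_2, Q_1 \intchoice Q_2)$.

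For the first coupled-simulation clause, I would use that any transition out of $P_1 \intchoice P_2$ must be of the form $P_1 \intchoice P_2 \goto{\tau} P_i$ for $i\in\{1,2\}$; then $Q_1 \intchoice Q_2 \dto{\hat\tau} Q_i$ via the matching $\tau$-rule, and $P_i \R Q_i$ holds because $P_i \sqsupseteq_{CS}^\Delta Q_i$. Divergence-preservation is equally simple: the operational rules for $\intchoice$ give $P_1 \intchoice P_2 {\Uparrow}$ iff $P_1{\Uparrow} \vee P_2{\Uparrow}$, and either disjunct transfers to the $Q_i$ side since $P_i \sqsupseteq_{CS}^\Delta Q_i$ is divergence-preserving.

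The one step that takes a little thought, and which I expect to be the main obstacle, is the coupling clause: I need some $Q'$ with $Q_1 \intchoice Q_2 \dto{} Q'$ and $Q' \R (P_1 \intchoice P_2)$. Taking $Q' = Q_1 \intchoice Q_2$ would require $Q_i \sqsupseteq_{CS}^\Delta P_i$, which we do not have. Instead, I would apply the second clause of \df{coupled} to $P_1 \sqsupseteq_{CS}^\Delta Q_1$ to get $Q_1'$ with $Q_1 \dto{} Q_1'$ and $Q_1' \sqsupseteq_{CS}^\Delta P_1$. Composing $Q_1 \intchoice Q_2 \goto{\tau} Q_1 \dto{} Q_1'$ yields $Q_1 \intchoice Q_2 \dto{} Q_1'$. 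Now using \pr{tau} on $P_1 \intchoice P_2 \goto{\tau} P_1$ gives $P_1 \sqsupseteq_{CS}^\Delta (P_1 \intchoice P_2)$, and by \pr{preorder} the chain $Q_1' \sqsupseteq_{CS}^\Delta P_1 \sqsupseteq_{CS}^\Delta (P_1 \intchoice P_2)$ collapses, so $Q_1' \R (P_1 \intchoice P_2)$ via the first generating clause for $\R$. This completes the coupled-simulation check and hence the proof.
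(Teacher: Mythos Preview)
Your proposal is correct and follows essentially the same approach as the paper: the same relation $\R$ (up to the harmless difference that you include all of $\sqsupseteq_{CS}^\Delta$ rather than just the specific pairs $P_i \R Q_i$), the same transition analysis for the first clause, and the same use of \pr{tau} and \pr{preorder} to handle the coupling clause via $Q_1 \intchoice Q_2 \dto{} Q_1'$ with $Q_1' \sqsupseteq_{CS}^\Delta P_1 \sqsupseteq_{CS}^\Delta P_1 \intchoice P_2$.
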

\begin{proof}
  Let $\R$ be the smallest relation such that, for any $P_i$ and $Q_i$,
  $P_i \sqsupseteq_{CS}^\Delta Q_i$ for $i=1,2$ implies $P_i \R Q_i$ ($i=1,2$) and $(P_1\intchoice P_2) \R (Q_1\intchoice Q_2)$.
  It suffices to show that $\R$ is a divergence-preserving coupled simulation.

  So let \plat{$P_i \sqsupseteq_{CS}^\Delta Q_i$} for $i=1,2$ and \plat{$(P_1\intchoice P_2) \goto{\alpha} P'$}.
  Then $\alpha=\tau$ and $P'=P_i$ for $i=1$ or $2$.
  Now $Q_1 \intchoice Q_2 \dto{} Q_i$ and $P_i \R Q_i$.

  Now let \plat{$P_i \sqsupseteq_{CS}^\Delta Q_i$} for $i=1,2$.
  Then there is a $Q_1'$ with $Q_1 \dto{} Q_1'$ and \plat{$Q_1' \sqsupseteq_{CS}^\Delta P_1$}.
  By \pr{tau} \plat{$P_1 \sqsupseteq_{CS}^\Delta P_1\intchoice P_2$} and by \pr{preorder}
  $Q_1' \sqsupseteq_{CS}^\Delta P_1\intchoice P_2$.\vspace{1pt}

  The same two conditions for the case $P \R Q$ because $P \sqsupseteq_{CS}^\Delta Q$ are trivial.
  Thus $\R$ is a coupled simulation. That $\R$ is divergence-preserving follows because
  $P_1 \intchoice P_2 {\Uparrow}$ iff $P_1 {\Uparrow} \vee P_2 {\Uparrow}$.
  \qed
\pagebreak[2]
\end{proof}

\begin{proposition}
  $\sqsupseteq_{CS}^\Delta$ is a precongruence for external choice.
\end{proposition}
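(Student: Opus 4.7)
The plan is to mirror the strategy of the preceding congruence proofs by defining a candidate relation $\R$ and verifying that it is a divergence-preserving coupled simulation. Let $\R$ be the smallest relation such that, for any $P_i, Q_i$ ($i=1,2$), whenever $P_i \sqsupseteq_{CS}^\Delta Q_i$ holds, both $P_i \R Q_i$ and $(P_1 \extchoice P_2) \R (Q_1 \extchoice Q_2)$ are included. Pairs of the first kind satisfy \df{coupled} trivially by hypothesis, so the work lies entirely with pairs of the second kind; this also correctly captures what happens once a visible transition exits the external choice.

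For the first clause of \df{coupled} applied to $(P_1 \extchoice P_2) \R (Q_1 \extchoice Q_2)$, I would split on whether the transition is visible or internal. If $P_1 \extchoice P_2 \goto{a} P'$ with $a \in \Sigma$, the operational rules force $P_i \goto{a} P_i'$ with $P' = P_i'$ for some $i$; WLOG take $i=1$. The simulation then yields $Q_1'$ with $Q_1 \dto{a} Q_1'$ and $P_1' \sqsupseteq_{CS}^\Delta Q_1'$. Decomposing $Q_1 \dto{a} Q_1'$ as $Q_1 \dto{} \bar Q_1 \goto{a} \bar Q_1' \dto{} Q_1'$, the leading $\tau$-steps lift through the external-choice $\tau$-rule to $Q_1 \extchoice Q_2 \dto{} \bar Q_1 \extchoice Q_2$, the visible-action rule gives $\bar Q_1 \extchoice Q_2 \goto{a} \bar Q_1'$, and the trailing $\tau$-steps complete $Q_1 \extchoice Q_2 \dto{a} Q_1'$, with $P_1' \R Q_1'$. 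If instead $P_1 \extchoice P_2 \goto{\tau} P'$, then WLOG $P' = P_1' \extchoice P_2$ with $P_1 \goto{\tau} P_1'$; the simulation gives $Q_1 \dto{} Q_1'$ with $P_1' \sqsupseteq_{CS}^\Delta Q_1'$, these steps lift to $Q_1 \extchoice Q_2 \dto{} Q_1' \extchoice Q_2$, and the defining clause of $\R$ (using $P_1' \sqsupseteq_{CS}^\Delta Q_1'$ and $P_2 \sqsupseteq_{CS}^\Delta Q_2$) yields $(P_1' \extchoice P_2) \R (Q_1' \extchoice Q_2)$.

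For the coupling clause, the second clause of \df{coupled} applied to each side gives $Q_i'$ with $Q_i \dto{} Q_i'$ and $Q_i' \sqsupseteq_{CS}^\Delta P_i$; then $Q_1 \extchoice Q_2 \dto{} Q_1' \extchoice Q_2 \dto{} Q_1' \extchoice Q_2'$, and $(Q_1' \extchoice Q_2') \R (P_1 \extchoice P_2)$ by the defining clause with the roles of the two sides swapped. For divergence preservation I would first observe that $(P_1 \extchoice P_2) \Uparrow$ iff $P_1 \Uparrow$ or $P_2 \Uparrow$: every $\tau$-transition of the external choice updates exactly one component, so an infinite $\tau$-sequence contains infinitely many steps from at least one side, inducing an infinite $\tau$-sequence there; conversely, any infinite $\tau$-sequence in a component lifts unchanged. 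Divergence-preservation of the component simulations then transfers divergence through $\R$.

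The main obstacle I expect is the two-faced role of internal transitions in external choice: a $\tau$-step on one side preserves the external-choice structure (forcing the relation to be closed under partial advancement on a single component), while the coupling clause demands sequentially advancing both sides so as to reach a state relating back to $(P_1 \extchoice P_2)$. Once the defining clauses of $\R$ accommodate both kinds of pairs, the verification reduces to routine lifting of the component coupled simulations through the operational rules for $\extchoice$.
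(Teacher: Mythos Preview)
Your proposal is correct and follows essentially the same approach as the paper: the candidate relation $\R$, the case split on visible versus internal transitions, the coupling argument via advancing both components, and the divergence characterisation $P_1 \extchoice P_2 {\Uparrow} \Leftrightarrow P_1 {\Uparrow} \vee P_2 {\Uparrow}$ are all identical. You merely spell out in more detail the lifting of $Q_i \dto{a} Q_i'$ through the $\extchoice$-rules and the justification of the divergence equivalence, which the paper leaves implicit.
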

\begin{proof}
  Let $\R$ be the smallest relation such that, for any $P_i$ and $Q_i$,
  $P_i \sqsupseteq_{CS}^\Delta Q_i$ for $i=1,2$ implies $P_i \R Q_i$ ($i=1,2$) and $(P_1\extchoice P_2) \R (Q_1\extchoice Q_2)$.
  It suffices to show that $\R$ is a divergence-preserving coupled simulation.

  So let $P_i \sqsupseteq_{CS}^\Delta Q_i$ for $i=1,2$ and $(P_1\extchoice P_2) \goto{\alpha} P'$.
  If $\alpha\in\Sigma$ then $P_i \goto{\alpha} P'$ for $i=1$ or $2$, and there exists a $Q'$ with
  \plat{$Q_i \dto{\alpha} Q'$} and \plat{$P'\sqsupseteq_{CS}^\Delta Q'$}. Hence
  $Q_1\extchoice Q_2 \dto{\alpha} Q'$ and $P'\R Q'$.
  If $\alpha=\tau$ then either \plat{$P_1 \goto{\tau} P_1'$} for some $P_1'$ with $P'=P_1'\extchoice P_2$,
  or \plat{$P_2 \goto{\tau} P_2'$} for some $P_2'$ with $P'=P_1\extchoice P_2'$.
  I pursue only the first case, as the other follows by symmetry.
  Here \plat{$Q_1 \dto{} Q_1'$} for some $Q_1'$ with \plat{$P_1' \sqsupseteq_{CS}^\Delta Q_1'$}.
  Thus \plat{$Q_1\extchoice Q_2 \dto{} Q_1' \extchoice Q_2$} and $(P_1'\extchoice P_2) \R (Q_1'\extchoice Q_2)$.

  Now let \plat{$P_i \sqsupseteq_{CS}^\Delta Q_i$} for $i=1,2$.
  Then, for $i=1,2$, there is a $Q_i'$ with $Q_i \dto{} Q_i'$ and \plat{$Q_i' \sqsupseteq_{CS}^\Delta P_i$}.
  Hence \plat{$Q_1\extchoice Q_2 \dto{} Q_1' \extchoice Q_2'$} and $(Q_1'\extchoice Q_2') \R (P_1\extchoice P_2)$.

  Thus $\R$ is a coupled simulation. That $\R$ is divergence-preserving follows because
  $P_1 \extchoice P_2 {\Uparrow}$ iff $P_1 {\Uparrow} \vee P_2 {\Uparrow}$.
  \qed
\end{proof}

\begin{proposition}
  $\sqsupseteq_{CS}^\Delta$ is a precongruence for sliding choice.
\end{proposition}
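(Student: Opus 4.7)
The plan is to follow the blueprint of the preceding propositions: define $\R$ as the smallest relation such that, whenever $P_i \sqsupseteq_{CS}^\Delta Q_i$ for $i=1,2$, one has $P_i \R Q_i$ and $(P_1 \timeout P_2) \R (Q_1 \timeout Q_2)$, and then verify that $\R$ is a divergence-preserving coupled simulation. Divergence preservation is immediate because $(P_1 \timeout P_2){\Uparrow}$ holds iff $P_1 {\Uparrow}$ (the diverging $\tau$-sequence never fires the timeout) or $P_2 {\Uparrow}$ (the timeout fires after finitely many $P_1$-steps, and $P_2$ carries the divergence from there), and each case is transferred to $Q_i$ by the divergence-preserving property of the hypothesis $P_i \sqsupseteq_{CS}^\Delta Q_i$.

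For the first clause of \df{coupled}, assume $(P_1 \timeout P_2) \R (Q_1 \timeout Q_2)$ and $(P_1 \timeout P_2) \goto{\alpha} P'$. Inspection of the operational rules for $\timeout$ yields three cases: (a) $\alpha = a \in \Sigma$ with $P_1 \goto{a} P'$; (b) $\alpha = \tau$ with $P_1 \goto{\tau} P_1'$ and $P' = P_1' \timeout P_2$; or (c) $\alpha = \tau$ and $P' = P_2$ (the timeout firing). Cases (a) and (b) are handled by feeding the transition of $P_1$ through $P_1 \sqsupseteq_{CS}^\Delta Q_1$ and lifting the resulting weak transition on $Q_1$ to $Q_1 \timeout Q_2$ via the $\timeout$-rules; the matching pair lies in $\R$ either because $P' \sqsupseteq_{CS}^\Delta Q_1'$ (case a) or because the resulting pair of sliding choices $(P_1' \timeout P_2, Q_1' \timeout Q_2)$ directly satisfies the defining hypothesis of $\R$ (case b). In case (c), $Q_1 \timeout Q_2 \goto{\tau} Q_2$ via the timeout rule, and $P_2 \R Q_2$ since $\sqsupseteq_{CS}^\Delta$ is contained in $\R$ by definition.

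The main obstacle is the second clause of \df{coupled}, because $\timeout$ is asymmetric in its arguments, so one cannot simultaneously catch up in both positions in the manner used for $\extchoice$. The resolution exploits the timeout rule itself: since $P_1 \timeout P_2 \goto{\tau} P_2$, \pr{tau} gives $P_2 \sqsupseteq_{CS}^\Delta (P_1 \timeout P_2)$. Applying the second clause of coupled simulation to $P_2 \sqsupseteq_{CS}^\Delta Q_2$ produces a $Q_2'$ with $Q_2 \dto{} Q_2'$ and $Q_2' \sqsupseteq_{CS}^\Delta P_2$, whence $Q_2' \sqsupseteq_{CS}^\Delta (P_1 \timeout P_2)$ by transitivity (\pr{preorder}). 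The required witness is then obtained by firing the timeout on the right, $Q_1 \timeout Q_2 \goto{\tau} Q_2 \dto{} Q_2'$, and $Q_2' \R (P_1 \timeout P_2)$ holds because $\sqsupseteq_{CS}^\Delta \subseteq \R$. The case $P \R Q$ arising from $P \sqsupseteq_{CS}^\Delta Q$ is trivial in both clauses.
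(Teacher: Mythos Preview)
Your proof is correct and follows essentially the same approach as the paper's: the same candidate relation $\R$, the same three-case analysis for the first clause, and the same resolution of the second clause via $Q_1 \timeout Q_2 \goto{\tau} Q_2 \dto{} Q_2'$ combined with \pr{tau} and \pr{preorder} to obtain $Q_2' \sqsupseteq_{CS}^\Delta P_1 \timeout P_2$. Your treatment of divergence preservation is slightly more explicit than the paper's one-line appeal to $(P_1 \timeout P_2){\Uparrow} \Leftrightarrow P_1{\Uparrow} \vee P_2{\Uparrow}$, but the content is identical.
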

\begin{proof}
  Let $\R$ be the smallest relation such that, for any $P_i$ and $Q_i$,
  $P_i \sqsupseteq_{CS}^\Delta Q_i$ for $i=1,2$ implies $P_i \R Q_i$ ($i=1,2$) and $(P_1\timeout P_2) \R (Q_1\timeout Q_2)$.
  It suffices to show that $\R$ is a divergence-preserving coupled simulation.

  So let $P_i \sqsupseteq_{CS}^\Delta Q_i$ for $i=1,2$ and $(P_1\timeout P_2) \goto{\alpha} P'$.
  If $\alpha\in\Sigma$ then $P_1 \goto{\alpha} P'$, and there exists a $Q'$ with
  \plat{$Q_1 \dto{\alpha} Q'$} and \plat{$P'\sqsupseteq_{CS}^\Delta Q'$}. Hence
  \plat{$Q_1\timeout Q_2 \dto{\alpha} Q'$} and $P'\R Q'$.
  If $\alpha\mathbin=\tau$ then either $P'\mathbin=P_2$ or \plat{$P_1 \goto{\tau} P_1'$}
  for some $P_1'$ with $P'\mathbin=P_1'\timeout P_2$.
  In the former case \plat{$Q_1 \timeout Q_2 \dto{} Q_2$} and $P_2 \R Q_2$.
  In the latter case \plat{$Q_1 \dto{} Q_1'$} for some $Q_1'$ with \plat{$P_1' \sqsupseteq_{CS}^\Delta Q_1'$}.
  Thus \plat{$Q_1\timeout Q_2 \dto{} Q_1' \timeout Q_2$} and $(P_1'\timeout P_2) \R (Q_1'\timeout Q_2)$.

  Now let \plat{$P_i \sqsupseteq_{CS}^\Delta Q_i$} for $i=1,2$.
  Then there is a $Q_2'$ with $Q_2 \dto{} Q_2'$ and \plat{$Q_2' \sqsupseteq_{CS}^\Delta P_2$}.
  By \pr{tau} \plat{$P_2 \sqsupseteq_{CS}^\Delta P_1\timeout P_2$} and by \pr{preorder}
  $Q_2' \sqsupseteq_{CS}^\Delta P_1\timeout P_2$.

  Thus $\R$ is a coupled simulation. That $\R$ is divergence-preserving follows because
  $P_1 \timeout P_2 {\Uparrow}$ iff $P_1 {\Uparrow} \vee P_2 {\Uparrow}$.
  \qed
\end{proof}

\begin{proposition}
  $\sqsupseteq_{CS}^\Delta$ is a precongruence for parallel composition.
\end{proposition}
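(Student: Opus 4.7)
The plan is to follow the template established by the previous propositions in this section. Fix $A\subseteq\Sigma$ and let $\R$ be the smallest relation such that $P\sqsupseteq_{CS}^\Delta Q$ implies $P\R Q$, and $P_i\sqsupseteq_{CS}^\Delta Q_i$ for $i=1,2$ implies $(P_1\|_A P_2)\R(Q_1\|_A Q_2)$. It then suffices to show that $\R$ is a divergence-preserving coupled simulation; the case $P\R Q$ because $P\sqsupseteq_{CS}^\Delta Q$ is immediate, so only the parallel case needs real work.

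For the first clause, suppose $P_i\sqsupseteq_{CS}^\Delta Q_i$ for $i=1,2$ and $(P_1\|_A P_2)\goto{\alpha}P'$. The SOS rules in \tab{CSP} split this into three cases: (i) $P_1\goto{\alpha}P_1'$ with $\alpha\notin A$ and $P'=P_1'\|_A P_2$; (ii) the symmetric case on $P_2$; (iii) $\alpha=a\in A$ with $P_1\goto{a}P_1'$ and $P_2\goto{a}P_2'$ and $P'=P_1'\|_A P_2'$. In cases (i) and (ii) I invoke the first clause on the relevant $\sqsupseteq_{CS}^\Delta$-pair to obtain $Q_1\dto{\hat\alpha}Q_1'$ with $P_1'\sqsupseteq_{CS}^\Delta Q_1'$, whence $Q_1\|_A Q_2\dto{\hat\alpha}Q_1'\|_A Q_2$ and $\R$ relates the two residuals. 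In case (iii) I get $Q_i\dto{a}Q_i'$ with $P_i'\sqsupseteq_{CS}^\Delta Q_i'$ for $i=1,2$; since $\tau\notin A$, the $\tau$-fragments of the two weak $\dto{a}$ transitions do not interact with synchronisation and can be interleaved in any order, giving $Q_1\|_A Q_2\dto{a} Q_1'\|_A Q_2'$. The second clause is routine: pick $Q_i'$ with $Q_i\dto{}Q_i'$ and $Q_i'\sqsupseteq_{CS}^\Delta P_i$ from the coupling of each component, and assemble $Q_1\|_A Q_2\dto{}Q_1'\|_A Q_2'$, which is $\R$-related to $P_1\|_A P_2$.

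Divergence-preservation rests on the observation that $P_1\|_A P_2{\Uparrow}$ iff $P_1{\Uparrow}\vee P_2{\Uparrow}$. The right-to-left direction is immediate since the parallel context simply shadows the diverging component. For the left-to-right direction I take an infinite $\tau$-computation from $P_1\|_A P_2$; since $\tau\notin A$, each step is driven by exactly one side, so by pigeonhole at least one side, say $P_1$, is the active component infinitely often, and the corresponding projection together with the finitely many preceding $P_1$-steps (from earlier positions of the computation) yields an infinite $\tau$-chain from $P_1$. Combined with the divergence-preservation of each $\sqsupseteq_{CS}^\Delta Q_i$, this gives $Q_1{\Uparrow}$ or $Q_2{\Uparrow}$, hence $Q_1\|_A Q_2{\Uparrow}$.

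The main obstacle I foresee is the synchronised case (iii): unlike the unary and externally-composed cases already handled, here two coupled simulation matches must be merged into a single weak transition of the composite, and one must be explicit that the non-interference of $\tau$-steps with the synchronising action (because $\tau\notin A$) is what permits this merging. The divergence argument is the only other place where a genuinely parallel-composition-specific fact is used.
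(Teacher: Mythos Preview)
Your proposal is correct and follows essentially the same route as the paper. Two minor differences: the paper's relation $\R$ contains \emph{only} the parallel pairs $(P_1\|_A P_2,\,Q_1\|_A Q_2)$, omitting your extra clause ``$P\sqsupseteq_{CS}^\Delta Q$ implies $P\R Q$'', since residuals of $\|_A$ are again of this shape; your larger $\R$ is harmless but unnecessary. Also, the paper merely asserts $P_1\|_A P_2{\Uparrow}\iff P_1{\Uparrow}\vee P_2{\Uparrow}$ without proof, whereas you spell out the pigeonhole argument---which is a welcome addition, not a deviation.
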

\begin{proof}
  Let $A\subseteq \Sigma$.
  Let $\R$ be the smallest relation such that, for any $P_i$ and $Q_i$,
  $P_i \sqsupseteq_{CS}^\Delta Q_i$ for $i=1,2$ implies $(P_1\|_A P_2) \R (Q_1\|_A Q_2)$.
  It suffices to show that $\R$ is a divergence-preserving coupled simulation.

  So let $P_i \sqsupseteq_{CS}^\Delta Q_i$ for $i=1,2$ and $(P_1\|_A P_2) \goto{\alpha} P'$.
  If $\alpha\notin A$ then $P_i \goto{\alpha} P_i'$ for $i=1$ or $2$, and $P'=P_1'\|_A P_2'$,
  where $P_{3-i}':=P_{3-i}$. Hence there exists a $Q_i'$ with
  \plat{$Q_i \dto{\hat\alpha} Q_i'$} and \plat{$P_i'\sqsupseteq_{CS}^\Delta Q_i'$}. Let $Q_{3-i}':=Q_{3-i}$.
  Then \plat{$Q_1\|_A Q_2 \dto{\hat\alpha} Q_1'\|Q'_2$} and $(P_1'\|P'_2)\R (Q_1'\|Q'_2)$.
  If $\alpha\in A$ then \plat{$P_i \goto{\alpha} P_i'$} for $i=1$ and $2$.
  Hence, for $i=1,2$, \plat{$Q_i \dto{\alpha} Q_i'$} for some $Q_i'$ with \plat{$P_i' \sqsupseteq_{CS}^\Delta Q_i'$}.
  Thus \plat{$Q_1\|_A Q_2 \dto{\alpha} Q_1' \|_A Q_2'$} and $(P_1'\|_A P'_2) \R (Q_1'\|_A Q'_2)$.

  Now let \plat{$P_i \sqsupseteq_{CS}^\Delta Q_i$} for $i=1,2$.
  Then, for $i=1,2$, there is a $Q_i'$ with $Q_i \dto{} Q_i'$ and \plat{$Q_i' \sqsupseteq_{CS}^\Delta P_i$}.
  Hence \plat{$Q_1\|_A Q_2 \dto{} Q_1' \|_A Q_2'$} and $(Q_1'\|_A Q_2') \R (P_1\|_A P_2)$.

  Thus $\R$ is a coupled simulation. That $\R$ is divergence-preserving follows because
  $P_1 \|_A P_2 {\Uparrow}$ iff $P_1 {\Uparrow} \vee P_2 {\Uparrow}$.
  \qed
\end{proof}

\begin{proposition}
  $\sqsupseteq_{CS}^\Delta$ is a precongruence for concealment.
\end{proposition}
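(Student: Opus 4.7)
The approach follows the template established by the preceding propositions. I take $\R$ to be the smallest binary relation on CSP processes such that, for all $P,Q$, $P \sqsupseteq_{CS}^\Delta Q$ implies $P \R Q$ and $(P \conceal A) \R (Q \conceal A)$, and I verify that $\R$ is a divergence-preserving coupled simulation. The plain case $P \R Q$ with $P \sqsupseteq_{CS}^\Delta Q$ is immediate, so the work lies in the lifted case.

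For the first clause of \df{coupled} in the concealment case, I case-split on the operational rule used for $(P \conceal A) \goto{\alpha} P'$. Either $\alpha \notin A$ with $P \goto{\alpha} P''$ and $P' = P'' \conceal A$, in which case the match $Q \dto{\hat\alpha} Q''$ provided by $P \sqsupseteq_{CS}^\Delta Q$ lifts directly to $(Q \conceal A) \dto{\hat\alpha} (Q'' \conceal A)$, because $\tau$-transitions of $Q$ remain $\tau$'s under concealment and a visible $\alpha \notin A$ remains $\alpha$. Or $\alpha = \tau$ comes from $P \goto{a} P''$ with $a \in A$, in which case the match $Q \dto{a} Q''$ lifts to $(Q \conceal A) \dto{} (Q'' \conceal A)$, because every transition along that path---either $\tau$ or $a \in A$---becomes $\tau$ in the concealment. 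In both subcases $(P'' \conceal A) \R (Q'' \conceal A)$ by the defining clause of $\R$. The second clause is routine: given $P \sqsupseteq_{CS}^\Delta Q$, the coupling witness $Q \dto{} Q'$ with $Q' \sqsupseteq_{CS}^\Delta P$ lifts to $(Q \conceal A) \dto{} (Q' \conceal A)$ with $(Q' \conceal A) \R (P \conceal A)$.

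The main obstacle is divergence preservation, since concealment can create new divergences: an infinite sequence of $A$-labelled transitions in $P$ becomes a $\tau$-divergence of $P \conceal A$ even when $P$ itself does not diverge. Suppose $(P \conceal A){\Uparrow}$. Reading off the concealment rules, this is equivalent to the existence of an infinite path $P = P_0 \goto{\alpha_0} P_1 \goto{\alpha_1} \cdots$ with every $\alpha_i \in A \cup \{\tau\}$. Using the first clause of \df{coupled} inductively, I construct $Q_0 = Q$ and $Q_{i+1}$ with $Q_i \dto{\hat{\alpha_i}} Q_{i+1}$ and $P_{i+1} \sqsupseteq_{CS}^\Delta Q_{i+1}$, and I split on whether infinitely many $\alpha_i$ lie in $A$. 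If so, infinitely many segments $Q_i \dto{\alpha_i} Q_{i+1}$ each contain at least one concrete transition (the $\alpha_i$-step), and every transition along these segments, being labelled in $A \cup \{\tau\}$, lifts to a $\tau$-step of $Q \conceal A$; concatenating yields an infinite $\tau$-path from $Q \conceal A$. Otherwise, from some index $k$ onwards all $\alpha_i = \tau$, so $P_k {\Uparrow}$; divergence preservation of $\sqsupseteq_{CS}^\Delta$ gives $Q_k {\Uparrow}$, hence $(Q_k \conceal A){\Uparrow}$, and prepending the $\tau$-lifted finite prefix from $Q \conceal A$ to $Q_k \conceal A$ yields $(Q \conceal A){\Uparrow}$. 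This case analysis, in which the divergence-preserving strengthening of coupled similarity is precisely what saves the second subcase, is the technical crux of the proof.
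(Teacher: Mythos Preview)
Your proof is correct and follows essentially the same route as the paper: you build $\R$ from the $\sqsupseteq_{CS}^\Delta$-related concealed pairs, verify the two clauses of \df{coupled} by lifting the operational rules for $\conceal A$, and handle divergence preservation via the same case split on whether infinitely many labels along the divergent path lie in $A$. The only cosmetic difference is that you also include the plain pairs $P \R Q$ with $P \sqsupseteq_{CS}^\Delta Q$, which are redundant here (concealed processes only step to concealed processes) but harmless.
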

\begin{proof}
Let $A\subseteq \Sigma$. Let $\R$ be the smallest relation such that, for any $P$ and $Q$,
$P \sqsubseteq_{CS}^\Delta Q$ implies $(P\conceal A) \R (Q \conceal A)$.
It suffices to show that $\R$ is a divergence-preserving coupled simulation.

So let $P \sqsubseteq_{CS}^\Delta Q$ and $P\conceal A \goto{\alpha} P^*$.
Then $P^*=P'\conceal A$ for some $P'$ with \plat{$P \goto{\beta} P'$},
and either $\beta\in A$ and $\alpha=\tau$, or $\beta=\alpha\notin A$.
Hence \plat{$Q \goto{\beta} Q'$} for some $Q'$ with $P' \sqsubseteq_{CS}^\Delta Q'$.
Therefore \plat{$Q\conceal A \goto{\alpha} Q'\conceal A$} and $(P'\conceal A) \R (Q'\conceal A)$.

Now let $P \sqsubseteq_{CS}^\Delta Q$. Then there is a $Q'$ with $Q \dto{} Q'$ and \plat{$Q' \sqsupseteq_{CS}^\Delta P$}.
Hence \plat{$Q\conceal A \dto{} Q'\conceal A$} and \plat{$(Q'\conceal A) \R (P\conceal A)$}.

To check that $\R$ is divergence-preserving, suppose $(P\conceal A) {\Uparrow}$.
Then there are $P_i$ and $\alpha_i \in A\cup\{\tau\}$ for all $i>0$ such that
\plat{$P \goto{\alpha_1} P_1 \goto{\alpha_2} P_2 \goto{\alpha_3} \dots$}.
By the first condition of \df{coupled}, there are $Q_i$ for all $i>0$ such that
$P_i \R Q_i$ and \plat{$Q \dto{\hat\alpha_1} Q_1 \dto{\hat\alpha_2} Q_2 \dto{\hat\alpha_3} \dots$}.
This implies $Q\conceal A \dto{} Q_1\conceal A \dto{} Q_2\conceal A \dto{} \dots$.

In case $\alpha_i \in \Sigma$ for infinitely many $i$, then for infinitely many $i$ one has
$Q_{i-1} \dto{\alpha_i} Q_i$ and thus \plat{$Q_{i-1}\conceal A \dto{\tau} Q_i\conceal A$}.
This implies that $(Q\conceal A) {\Uparrow}$.

Otherwise there is an $n>0$ such that $\alpha_i=\tau$ for all $i \geq n$.
In that case $P_n {\Uparrow}$ and thus $Q_n {\Uparrow}$.
Hence $(Q_n\conceal A) {\Uparrow}$ and thus $(Q\conceal A) {\Uparrow}$.
\qed
\end{proof}

\begin{proposition}
  $\sqsupseteq_{CS}^\Delta$ is a precongruence for renaming.
\end{proposition}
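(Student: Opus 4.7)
The plan is to follow the template that has worked for all the previous precongruence results in this section. Define $\R$ as the smallest relation such that, for any $P$ and $Q$, $P \sqsupseteq_{CS}^\Delta Q$ implies $P \R Q$ and $f(P) \R f(Q)$. It then suffices to show that $\R$ is a divergence-preserving coupled simulation; the case $P \R Q$ with $P \sqsupseteq_{CS}^\Delta Q$ is trivial, so only the case $f(P) \R f(Q)$ needs real attention.

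For the first clause of \df{coupled}, suppose $f(P) \goto{\alpha} P^*$. The only operational rule for renaming forces $P \goto{\beta} P'$ for some $\beta$ with $f(\beta) = \alpha$ and $P^* = f(P')$. From $P \sqsupseteq_{CS}^\Delta Q$ one obtains a $Q'$ with $Q \dto{\hat\beta} Q'$ and $P' \sqsupseteq_{CS}^\Delta Q'$. Applying the renaming rule to each transition in the witnessing sequence for $Q \dto{\hat\beta} Q'$ lifts it to $f(Q) \dto{\widehat{f(\beta)}} f(Q')$, i.e.\ $f(Q) \dto{\hat\alpha} f(Q')$, and $f(P') \R f(Q')$ by construction. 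The second clause is easier: from $P \sqsupseteq_{CS}^\Delta Q$ pick $Q'$ with $Q \dto{} Q'$ and $Q' \sqsupseteq_{CS}^\Delta P$; then $f(Q) \dto{} f(Q')$ and $f(Q') \R f(P)$.

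For divergence-preservation, suppose $f(P) {\Uparrow}$. Here I exploit that $f$ has type $\Sigma \rightarrow \Sigma$ and is extended by $f(\tau) = \tau$, so $f(\beta) = \tau$ forces $\beta = \tau$. Consequently every $\tau$-transition out of $f(P_i)$ must arise from a $\tau$-transition out of $P_i$, and an infinite $\tau$-path from $f(P)$ lifts to an infinite $\tau$-path from $P$. Thus $P {\Uparrow}$, hence $Q {\Uparrow}$ by divergence-preservation of the underlying $\sqsupseteq_{CS}^\Delta$, and finally $f(Q) {\Uparrow}$.

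There is no real obstacle here; the only subtlety is the case split $\beta = \tau$ versus $\beta \in \Sigma$ when translating $\hat\beta$ into $\hat\alpha$, and the routine observation that $f(\tau) = \tau$ is the \emph{only} way to produce a $\tau$ under $f$, which is what makes divergence-reflection (and so divergence-preservation of $\R$) go through. This argument is strictly simpler than the concealment case, because renaming, unlike $\conceal A$, cannot turn visible actions into internal ones.
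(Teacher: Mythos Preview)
Your proof is correct and essentially identical to the paper's. The only cosmetic difference is that you include the base pairs $P \R Q$ in $\R$, whereas the paper's relation contains only pairs of the form $(f(P),f(Q))$; since every transition of $f(P)$ targets another process of the form $f(P')$, those extra pairs are never needed, but their presence does no harm.
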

\begin{proof}
Let $f:\Sigma\rightarrow \Sigma$.
Let $\R$ be the smallest relation such that, for any $P$ and $Q$,
$P \sqsubseteq_{CS}^\Delta Q$ implies $f(P) \R f(Q)$.
It suffices to show that $\R$ is a divergence-preserving coupled simulation.

So let $P \sqsubseteq_{CS}^\Delta Q$ and $f(P) \goto{\alpha} P^*$.
Then $P^*=f(P')$ for some $P'$ with \plat{$P \goto{\beta} P'$} and $f(\beta)=\alpha$.
Hence \plat{$Q \goto{\beta} Q'$} for some $Q'$ with $P' \sqsubseteq_{CS}^\Delta Q'$.
Therefore \plat{$f(Q) \goto{\alpha} f(Q')$} and $f(P') \R f(Q')$.

Now let $P \sqsubseteq_{CS}^\Delta Q$. Then there is a $Q'$ with $Q \dto{} Q'$ and \plat{$Q' \sqsupseteq_{CS}^\Delta P$}.
Hence \plat{$f(Q) \dto{} f(Q')$} and \plat{$f(Q') \R f(P)$}.

To check that $\R$ is divergence-preserving, suppose $f(P) {\Uparrow}$.
Then $P {\Uparrow}$, so $Q {\Uparrow}$ and $f(Q) {\Uparrow}$.
\qed
\end{proof}

\begin{proposition}
  $\sqsupseteq_{CS}^\Delta$ is a precongruence for the interrupt operator.
\end{proposition}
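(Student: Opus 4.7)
The plan is to mirror exactly the strategy used for parallel composition and sliding choice. Let $\R$ be the smallest relation such that whenever $P_i \sqsupseteq_{CS}^\Delta Q_i$ for $i=1,2$ one has both $P_i \R Q_i$ and $(P_1\interrupt P_2) \R (Q_1\interrupt Q_2)$. It suffices to show $\R$ is a divergence-preserving coupled simulation. The base case $P \R Q$ with $P \sqsupseteq_{CS}^\Delta Q$ is immediate since $\sqsupseteq_{CS}^\Delta$ is itself one, so all work is in the composite case.

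For the first clause, I would do a case analysis on the three operational rules for $\interrupt$. If $P_1 \goto{\alpha} P_1'$ and thus $(P_1\interrupt P_2) \goto{\alpha} (P_1'\interrupt P_2)$, then there is $Q_1'$ with $Q_1\dto{\hat\alpha} Q_1'$ and $P_1'\sqsupseteq_{CS}^\Delta Q_1'$, giving $(Q_1\interrupt Q_2) \dto{\hat\alpha} (Q_1'\interrupt Q_2)$ and $(P_1'\interrupt P_2)\R(Q_1'\interrupt Q_2)$. If $P_2 \goto{\tau} P_2'$ and thus $(P_1\interrupt P_2)\goto{\tau}(P_1\interrupt P_2')$, there is $Q_2'$ with $Q_2 \dto{} Q_2'$ and $P_2' \sqsupseteq_{CS}^\Delta Q_2'$, giving $(Q_1\interrupt Q_2)\dto{}(Q_1\interrupt Q_2')$. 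If $P_2 \goto{a} P_2'$ for $a\in\Sigma$ and thus $(P_1\interrupt P_2)\goto{a} P_2'$, then there is $Q_2'$ with $Q_2 \dto{a} Q_2'$ and $P_2' \sqsupseteq_{CS}^\Delta Q_2'$; decomposing $Q_2 \dto{a} Q_2'$ as $Q_2 \dto{} Q_2^* \goto{a} Q_2^{**} \dto{} Q_2'$, I can lift the pre-$\tau$-steps through rule two of $\interrupt$, use rule three to fire the visible $a$ (escaping the interrupt), and run the trailing $\tau$-steps outside the interrupt, yielding $(Q_1\interrupt Q_2) \dto{a} Q_2'$ with $P_2'\R Q_2'$.

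For the second clause, from $P_i \sqsupseteq_{CS}^\Delta Q_i$ I pick $Q_i'$ with $Q_i \dto{} Q_i'$ and $Q_i' \sqsupseteq_{CS}^\Delta P_i$; then $(Q_1\interrupt Q_2)\dto{}(Q_1'\interrupt Q_2')$ and the defining clause of $\R$ gives $(Q_1'\interrupt Q_2')\R(P_1\interrupt P_2)$.

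Divergence preservation will reduce, as in the previous propositions, to the equivalence $(P_1\interrupt P_2){\Uparrow}$ iff $P_1{\Uparrow}\vee P_2{\Uparrow}$. The nontrivial direction needs a short pigeonhole argument: any infinite $\tau$-path from $P_1\interrupt P_2$ is obtained by interleaving $\tau$-moves of $P_1$ (rule one) with $\tau$-moves of $P_2$ (rule two), so one of the two component-sequences must itself be infinite, producing an infinite $\tau$-path in $P_1$ or $P_2$. The main obstacle, as signalled in the note between Propositions~\ref{pr:prefixing} and the throw congruence above, is that $\sqsupseteq_{CS}^\Delta$ is actually \emph{not} a precongruence for prefixing (and hence not for throw), so one might worry the same issue arises here; however, the interrupt operator differs crucially in that its left argument keeps running during the composite's life, rather than being guarded by a visible action that could be bypassed, so the weak matching transitions from the first clause transport cleanly through the $\interrupt$ context.
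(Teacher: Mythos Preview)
Your proof is correct and follows essentially the same approach as the paper: the same candidate relation (up to the harmless inclusion of $P_1 \R Q_1$, which the paper omits), the same three-way case split on the operational rules for $\interrupt$, the same coupling argument via $Q_i \dto{} Q_i'$, and the same reduction of divergence preservation to $(P_1\interrupt P_2){\Uparrow} \Leftrightarrow P_1{\Uparrow}\vee P_2{\Uparrow}$. Your explicit decomposition of $Q_2 \dto{a} Q_2'$ in the third case and the pigeonhole justification for the divergence equivalence are more detailed than the paper's version but add nothing new.
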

\begin{proof}
  Let $\R$ be the smallest relation such that, for any $P_i$ and $Q_i$,
  $P_i \sqsupseteq_{CS}^\Delta Q_i$ for $i=1,2$ implies $P_2 \R Q_2$ and $(P_1\interrupt P_2) \R (Q_1\interrupt Q_2)$.
  It suffices to show that $\R$ is a divergence-preserving coupled simulation.

  So let $P_i \sqsupseteq_{CS}^\Delta Q_i$ for $i=1,2$ and $(P_1\interrupt P_2) \goto{\alpha} P'$.
  Then either $P'=P_1' \interrupt P_2$ for some $P_1'$ with \plat{$P_1 \goto{\alpha} P_1'$},
  or $\alpha=\tau$ and $P'=P_1 \interrupt P_2'$ for some $P_2'$ with \plat{$P_2 \goto{\tau} P_2'$},
  or $\alpha\in\Sigma$ and \plat{$P_2  \goto{\alpha} P'$}.

  In the first case there is a $Q_1'$ with \plat{$Q_1 \dto{\hat\alpha} Q_1'$} and $P_1' \sqsupseteq_{CS}^\Delta Q_1'$.
  It follows that \plat{$(Q_1\interrupt Q_2) \dto{\hat\alpha} (Q_1'\interrupt Q_2)$}
  and $(P'_1\interrupt P_2) \R (Q'_1\interrupt Q_2)$.

  In the second case there is a $Q_2'$ with \plat{$Q_2 \dto{} Q_2'$} and $P_2' \sqsupseteq_{CS}^\Delta Q_2'$.
  It follows that \plat{$(Q_1\interrupt Q_2) \dto{} (Q_1\interrupt Q_2')$}
  and $(P_1\interrupt P_2') \R (Q_1\interrupt Q_2')$.

  In the last case there is a $Q_2'$ with \plat{$Q_2 \dto{\alpha} Q_2'$} and $P_2' \sqsupseteq_{CS}^\Delta Q_2'$.
  It follows that \plat{$(Q_1\interrupt Q_2) \dto{\alpha}  Q_2'$} and $P_2' \R Q_2'$.

  Now let \plat{$P_i \sqsupseteq_{CS}^\Delta Q_i$} for $i=1,2$.
  Then, for $i=1,2$, there is a $Q_i'$ with $Q_i \dto{} Q_i'$ and \plat{$Q_i' \sqsupseteq_{CS}^\Delta P_i$}.
  Hence \plat{$Q_1\interrupt Q_2 \dto{} Q_1' \interrupt Q_2'$} and $(Q_1'\interrupt Q_2') \R (P_1\interrupt P_2)$.

  Thus $\R$ is a coupled simulation. That $\R$ is divergence-preserving follows because
  $P_1 \interrupt P_2 {\Uparrow}$ iff $P_1 {\Uparrow} \vee P_2 {\Uparrow}$.
\qed
\end{proof}

\section{A Complete Axiomatisation of $\equiv_{CS}^\Delta$}\label{sec:complete}

A set of equational laws valid for $\equiv_{CS}^\Delta$ is presented in \tab{axsCS}.
It includes the laws from \tab{axsFD} that are still valid for $\equiv_{CS}^\Delta$.
I will show that this axiomatisation is sound and complete for $\equiv_{CS}^\Delta$ for
recursion-free CSP without the interrupt operator. The axioms $\UE$ and $\UA$, which are not valid
for $\equiv_{CS}^\Delta$, played a crucial r\^ole in reducing CSP expressions with interrupt into normal form.
It is not trivial to find valid replacements, and due to lack of space and time I do not tackle this problem here.

The axiom $\CE$ replaces the fallen axiom $\HA$, and is due to \cite{Ro10}.
Here the result of hiding actions results in a process that cannot be expressed as a normal form built
up from $a \rightarrow$, $\intchoice$ and $\extchoice$. For this reason, one needs a richer normal
form, involving the sliding choice operator. It is given by the context-free grammar
\[\begin{array}{l}
N \rightarrow D \mid D \timeout I \\
I \rightarrow D \mid I \intchoice I \\
D \rightarrow \STOP \mid \div \mid E \mid \div\extchoice E \\
E \rightarrow (a\rightarrow N) \mid (a\rightarrow N)\extchoice E\;.
\vspace{-2ex}
\end{array}\]
\begin{definition}\rm
A CSP expression is in \emph{head normal form} if  it is of the form
$\big([\div \mathrel{\extchoice]} \Extchoice_{i\in I} (a_i\rightarrow R_i)\big)\timeout \Intchoice_{j\in J}R_j$,
with $R_j=\big([\div \mathrel{\extchoice]}\Extchoice_{k\in K_j}(a_{kj}\rightarrow R_{kj})\big)$ for $j\in J$.
Here $I$, $J$ and the $K_j$ are finite index sets, and the parts between square brackets are optional.
Here, although \plat{$\Intchoice_{i\in \emptyset}P_i$} is undefined, I use \plat{$P \timeout \Intchoice_{i\in \emptyset}P_i$}\vspace{2pt}
to represent $P$. An expression is in \emph{normal form} if it has this form and also
the subexpressions $R_i$ and $R_{kj}$ are in normal form.

A head normal form is \emph{saturated} if the $\div$-summand on the left is present
whenever any of the $R_j$ has a $\div$-summand, and
for any $j\mathbin\in J$ and any $k\mathbin\in K_j$ there is an $i\mathbin\in I$ 
with $a_i\mathbin=a_{kj}$ and $R_i \mathbin= R_{kj}$.
\end{definition}
My proof strategy is to ensure that there are enough axioms to transform any CSP process without recursion
and interrupt operators into normal form, and to make these forms saturated; then to
equate saturated normal forms that are divergence-preserving coupled simulation equivalent.

Due to the optional presence in head normal forms of a $\div$-summand and a sliding choice,
I need four variants of the axiom $\CE$; so far I have not seen a way around this.
Likewise, there are $4\times 4$ variants of the axiom $\PE$ from \tab{axsFD}, of which 6 could be
suppressed by symmetry (\textbf{P4--P13}). There are also 3 axioms replacing $\PI$ (\textbf{P14--P16}).

\begin{table}[p]
\[
\begin{array}{@{}lrcl@{}}
\Ii  &  P \intchoice P              & = & P                       \\
\Ic  &  P \intchoice Q              & = & Q \intchoice P              \\
\Ia  &  P \intchoice (Q \intchoice R)   & = & (P \intchoice Q) \intchoice R   \\[1ex]
\Ec  &  P \extchoice Q                & = & Q \extchoice P                \\
\Ea  &  P \extchoice (Q \extchoice R)       & = & (P \extchoice Q) \extchoice R       \\
\Es  &  P \extchoice \STOP            & = & P                       \\[1ex]
\Si  &  P \timeout P & = & P \\
\Sa  &  P \timeout (Q \timeout R)  & = & (P \timeout Q)\timeout R \\
\SE  &  (P \timeout Q)\timeout R   & = & (P \extchoice Q) \timeout R \\
\SI  &  (P \intchoice Q)\timeout R & = & (P \extchoice Q) \timeout R \\
%\SI &  (P \intchoice Q)\timeout R & = & (P \timeout R) \intchoice (Q \timeout R) \\[1ex]
\Ss  &  \STOP \timeout P            & = & P                       \\[1ex]
\IS  &  (P \timeout Q) \intchoice (R \timeout S) & = & (P \extchoice R) \timeout (Q\intchoice S)\\
\ES  &  (P \timeout Q) \extchoice (R \timeout S) & = & (P \extchoice R) \timeout (Q \extchoice S)\\
\EI  &  P \extchoice (Q \intchoice R)     & = & (P \extchoice Q) \intchoice (P \extchoice R) \\[1ex]
\Pru  &  (a\rightarrow P) \extchoice a\rightarrow(P \intchoice Q) & = & a\rightarrow(P \intchoice Q)\\[1ex]
\Pa  &  P \|_A (Q \|_A R)       & = & (P \|_A Q) \|_A R        \\
\Pc  &  P \|_A Q                & = & Q \|_A P                 \\
\textbf{P4--P13} & \multicolumn{3}{c}{\mbox{\it more axioms for parallel composition follow on the next page}}\\[1ex]
\HI  &  (P \intchoice Q)\conceal A   & = & (P\conceal A) \intchoice (Q\conceal A) \\
\CE  &  \big(\Extchoice_{i\in I} (a_i \rightarrow P_i )\big)\conceal A & = &
                 \big(\Extchoice_{a_i\not\in A} (a_i \rightarrow (P_i\conceal A) )\big)\\
              &&&\mbox{}  \timeout \Intchoice_{a_i\in A} (P_i\conceal A)\\
\CED  &  \big(\div \extchoice \Extchoice_{i\in I} (a_i \rightarrow P_i )\big)\conceal A & = &
                 \big(\div \extchoice \Extchoice_{a_i\not\in A} (a_i \rightarrow (P_i\conceal A) )\big)\\
              &&&\mbox{} \timeout \Intchoice_{a_i\in A} (P_i\conceal A)\\
\CEI  &  \left(\big(\Extchoice_{i\in I} (a_i \rightarrow P_i )\big)\timeout P'\right)\conceal A & = &
                 \big(\Extchoice_{a_i\not\in A} (a_i \rightarrow (P_i\conceal A) )\big)\\
              &&&\mbox{} \timeout \big( P'\conceal A \intchoice \Intchoice_{a_i\in A} (P_i\conceal A)\big)\\
\CEDI  &  \left(\big(\div \extchoice \Extchoice_{i\in I} (a_i \rightarrow P_i )\big)\timeout P'\right)\conceal A & = &
                 \big(\div \extchoice \Extchoice_{a_i\not\in A} (a_i \rightarrow (P_i\conceal A) )\big)\\
              &&&\mbox{} \timeout \big( P'\conceal A \intchoice \Intchoice_{a_i\in A} (P_i\conceal A)\big)\\[1ex]
\RS  &  f(P \timeout Q)         & = & f(P) \timeout f(Q)   \\
\RI  &  f(P \intchoice Q)         & = & f(P) \intchoice f(Q)   \\
\RE  &  f(P \extchoice Q)         & = & f(P) \extchoice f(Q)   \\
\RA  &  f(a \rightarrow P)        & = & f(a) \rightarrow f(P)   \\
\Rs  &  f(\STOP)                   & = & \STOP   \\
\RD  &  f(\div)                   & = & \div   \\[1ex]
\TS  &  (P \timeout Q)\throw R           & = & (P\throw R) \timeout (Q\throw R)   \\
\TI  &  (P \intchoice Q)\throw R         & = & (P\throw R) \intchoice (Q\throw R)   \\
\TE  &  (P \extchoice Q)\throw R         & = & (P\throw R) \extchoice (Q\throw R)   \\
\TA  &  (a \rightarrow P) \throw Q       & = & a \rightarrow (P\throw Q) \hfill\mbox{if $a \notin A$}  \\
\AT  &  (a \rightarrow P) \throw Q       & = & a \rightarrow Q \hfill \mbox{if $a \in A$}  \\
\Ts  &  \STOP \throw Q       & = & \STOP  \\
\TD  &  \div \throw Q       & = & \div  \\[1ex]
%
%\US  &  (P \timeout Q)\interrupt R           & = & (P\interrupt R) \timeout (Q\interrupt R)   \\
%\UI  &  (P \intchoice Q)\interrupt R         & = & (P\interrupt R) \intchoice (Q\interrupt R)   \\
%\Us  &  \STOP \interrupt P                   & = & P \\
%\UD  &  \div \interrupt P                    & = & \div
%% \mathbf{(US2)}  &  P \interrupt(Q \timeout R)           & = & (P\interrupt Q) \timeout (P\interrupt R)   \\ ?
%% \mathbf{(UI2)}  &  P \interrupt(Q \intchoice R)         & = & (P\interrupt Q) \intchoice (P\interrupt R) \\ ?
\end{array}
\]
\caption{A complete axiomatisation of $\equiv_{CS}^\Delta$ for recursion-free CSP without interrupt}\label{tab:axsCS}
\end{table}

\begin{table}
Below $P = \Extchoice_{i \in I}(a_i\rightarrow P_i)$
and $Q = \Extchoice_{j \in J}(b_j\rightarrow Q_j)$.\vspace{-5pt}
\[
\begin{array}{@{}l@{\quad}rcl@{}}
% \mathbf{(UE)}  &  P \interrupt Q & = &   \big(\Extchoice_{i\in I} (a_i \rightarrow (P_i\interrupt Q) )\big) \extchoice Q\\[1ex]
\mathbf{(P4)}  &  P \|_A Q                  & = &
                  \Extchoice_{a_i\notin A} (a_i \rightarrow (P_i\|_A Q)) \extchoice \\
             &&& \Extchoice_{a_j=b_j\in A} (a_i \rightarrow (P_i \|_A Q_j)) \extchoice\\
             &&& \Extchoice_{b_j\notin A} (b_j \rightarrow (P\|_A Q_j)) \\
\mathbf{(P5)}  &  (\div \extchoice P) \|_A Q                  & = &
                  \div \extchoice \Extchoice_{a_i\notin A} (a_i \rightarrow (P_i\|_A Q)) \extchoice \\
             &&& \Extchoice_{a_j=b_j\in A} (a_i \rightarrow (P_i \|_A Q_j)) \extchoice\\
             &&& \Extchoice_{b_j\notin A} (b_j \rightarrow ((\div \extchoice P)\|_A Q_j)) \\
\mathbf{(P6)}  &  (\div \extchoice P) \|_A (\div \extchoice Q)                  & = &
                  \div \extchoice \Extchoice_{a_i\notin A} (a_i \rightarrow (P_i\|_A (\div \extchoice Q))) \extchoice \\
             &&& \Extchoice_{a_j=b_j\in A} (a_i \rightarrow (P_i \|_A Q_j)) \extchoice\\
             &&& \Extchoice_{b_j\notin A} (b_j \rightarrow ((\div \extchoice P)\|_A Q_j)) \\
\mathbf{(P7)}  &  (P \timeout P') \|_A Q                  & = &
                  \big(\Extchoice_{a_i\notin A} (a_i \rightarrow (P_i\|_A Q))  \extchoice \\
             &&& \Extchoice_{a_j=b_j\in A} (a_i \rightarrow (P_i \|_A Q_j)) \extchoice\\
             &&& \Extchoice_{b_j\notin A} (b_j \rightarrow ((P\timeout P')\|_A Q_j))\big)
                 \timeout P' \|_A Q\\
\mathbf{(P8)}  &  ((\div \extchoice P) \timeout P') \|_A Q                  & = &
                 \big(\div \extchoice \Extchoice_{a_i\notin A} (a_i \rightarrow (P_i\|_A Q))  \extchoice \\
             &&& \Extchoice_{a_j=b_j\in A} (a_i \rightarrow (P_i \|_A Q_j)) \extchoice\\
             &&& \Extchoice_{b_j\notin A} (b_j \rightarrow (((\div \extchoice P)\timeout P')\|_A Q_j))\big)\\
             &&& \mbox{} \timeout P' \|_A Q\\
\mathbf{(P9)}  &  (P \timeout P') \|_A (\div \extchoice Q)                 & = &
                  \big((\div \extchoice \Extchoice_{a_i\notin A} (a_i \rightarrow (P_i\|_A (\div \extchoice Q) ))  \extchoice \\
             &&& \Extchoice_{a_j=b_j\in A} (a_i \rightarrow (P_i \|_A Q_j)) \extchoice\\
             &&& \Extchoice_{b_j\notin A} (b_j \rightarrow ((P\timeout P')\|_A Q_j))\big)\\
             &&& \mbox{}\timeout P' \|_A (\div \extchoice Q) \\
\mathbf{(P10)} &  ((\div \extchoice P) \timeout P') \|_A (\div \extchoice Q)                  & = &
                 \big(\div \extchoice \Extchoice_{a_i\notin A} (a_i \rightarrow (P_i\|_A (\div \extchoice Q)))  \extchoice \\
             &&& \Extchoice_{a_j=b_j\in A} (a_i \rightarrow (P_i \|_A Q_j)) \extchoice\\
             &&& \Extchoice_{b_j\notin A} (b_j \rightarrow (((\div \extchoice P)\timeout P')\|_A Q_j))\big)\\
             &&& \mbox{} \timeout P' \|_A (\div \extchoice Q)\\
\mathbf{(P11)} &  (P \timeout P') \|_A (Q \timeout Q')  & = &
                 \big(\Extchoice_{a_i\notin A} (a_i \rightarrow (P_i\|_A (Q\timeout Q')))  \extchoice \\
             &&& \Extchoice_{a_j=b_j\in A} (a_i \rightarrow (P_i \|_A Q_j)) \extchoice\\
             &&& \Extchoice_{b_j\notin A} (b_j \rightarrow ((P\timeout P')\|_A Q_j))\big)\\
             &&& \mbox{} \timeout \big(P' \|_A (Q\timeout Q')  \intchoice (P \timeout P') \|_A Q'\big)\\[3pt]
\mathbf{(P12)} &  ((\div \extchoice P) \timeout P') \|_A (Q \timeout Q')  & = &
                 \big(\div \extchoice \Extchoice_{a_i\notin A} (a_i \rightarrow (P_i\|_A (Q\timeout Q')))  \extchoice \\
             &&& \Extchoice_{a_j=b_j\in A} (a_i \rightarrow (P_i \|_A Q_j)) \extchoice\\
             &&& \Extchoice_{b_j\notin A} (b_j \rightarrow (((\div \extchoice P)\timeout P')\|_A Q_j))\big)\\
             &&& \mbox{}\timeout \big(P' \|_A (Q\timeout Q')  \intchoice ((\div \extchoice P) \timeout P') \|_A Q'\big)\\
\end{array}
\]
\begin{center}
\textbf{Table~\ref*{tab:axsCS}.} A complete axiomatisation of $\equiv_{CS}^\Delta$ for recursion-free CSP (continued)
\end{center}
\end{table}
\begin{table}[t]
\[
\begin{array}{@{}l@{\qquad\quad}rl@{}}
\mathbf{(P13)} &  ((\div \extchoice P) \timeout P') \|_A ((\div \extchoice Q) \timeout Q')  = \big(\div \extchoice
              \hspace{-100pt} \\
             && \Extchoice_{a_i\notin A} (a_i \rightarrow (P_i\|_A ((\div \extchoice Q)\timeout Q'))) \extchoice \\
             && \Extchoice_{a_j=b_j\in A} (a_i \rightarrow (P_i \|_A Q_j)) \extchoice\\
             && \Extchoice_{b_j\notin A} (b_j \rightarrow (((\div \extchoice P)\timeout P')\|_A Q_j))\big)\\
             && \mbox{}\timeout
                 \big(P' \|_A ((\div \extchoice Q)\timeout Q')  \intchoice ((\div \extchoice P) \timeout P') \|_A Q'\big)\\
\end{array}
\]
Below $P = \Extchoice_{i \in I}(a_i\rightarrow P_i)$
and $Q = \Intchoice_{j \in J}Q_j$.\vspace{-5pt}
\[
\begin{array}{@{}l@{\quad}rcl@{}}
\mathbf{(P14)}  &  P \|_A Q                  & = &
                  \Extchoice_{a_i\notin A} (a_i \rightarrow (P_i\|_A Q)) \timeout \Intchoice_{j\in J} (P\|_A Q_j)) \\
\mathbf{(P15)}  &  (\div \extchoice P )\|_A Q                  & = &
                  \big(\div \extchoice \Extchoice_{a_i\notin A} (a_i \rightarrow (P_i\|_A Q))\big) \timeout
                   \Intchoice_{j\in J} ((\div \extchoice P)\|_A Q_j)) \\
\end{array}
\]
Below $P = \Intchoice_{i \in I}P_i$
and $Q = \Intchoice_{j \in J}Q_j$.\vspace{-5pt}
\[
\begin{array}{@{}l@{\qquad\qquad\qquad\qquad\quad}rcl@{\qquad\qquad\qquad}}
\mathbf{(P16)}  &  P \|_A Q                  & = &
                  \Intchoice_{a_i\notin A} (P_i\|_A Q) \intchoice \Intchoice_{j\in J} (P\|_A Q_j)) \\
\end{array}
\]
\begin{center}
\textbf{Table~\ref*{tab:axsCS}.} A complete axiomatisation of $\equiv_{CS}^\Delta$ for recursion-free CSP (continued)
\end{center}
\vspace{-1ex}
\end{table}

\section{Soundness}

Since divergence-preserving coupled similarity is a congruence for all CSP operators, to establish
the soundness of the axiomatisation of \tab{axsCS} it suffices to show the validity w.r.t.\ $\equiv_{CS}^\Delta$ of all axioms.
When possible, I show validity w.r.t.\ strong bisimilarity, which is a strictly finer equivalence.

\begin{definition}\rm
Two processes are \emph{strongly bisimilar} \cite{Mi90ccs} if they are related by a binary relation
$\R$ on processes such that, for all $\alpha\in\Sigma \cup\{\tau\}$,
\begin{itemize}
\item if $P \R Q$ and $P \goto{\alpha} P'$ then there exists a $Q'$ with $Q \goto{\alpha} Q'$ and $P' \R Q'$,
\item if $P \R Q$ and $Q \goto{\alpha} Q'$ then there exists a $P'$ with $P \goto{\alpha} P'$ and $P' \R Q'$.
\end{itemize}
\end{definition}

\begin{proposition}
  Axiom $\Ii$ is valid for $\equiv_{CS}^\Delta$.
\end{proposition}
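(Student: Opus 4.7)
The plan is to derive this validity essentially for free from results already established in \Sec{coupled}. Specifically, Proposition \pr{CSpreorder} asserts that $P \sqsupseteq_{CS}^\Delta Q$ iff $P \intchoice Q \equiv_{CS}^\Delta Q$, so instantiating $Q$ to $P$ will do the work, provided $\sqsupseteq_{CS}^\Delta$ is reflexive---which is exactly part of Proposition \pr{preorder}.

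More concretely, the argument proceeds in three short steps. First, by Proposition \pr{preorder}, $\sqsupseteq_{CS}^\Delta$ is a preorder, so in particular $P \sqsupseteq_{CS}^\Delta P$ for every CSP process $P$. Second, applying Proposition \pr{CSpreorder} with $Q := P$, the premise $P \sqsupseteq_{CS}^\Delta P$ yields $P \intchoice P \equiv_{CS}^\Delta P$. Third, this is exactly the validity of axiom $\Ii$ with respect to $\equiv_{CS}^\Delta$.

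There is no real obstacle here: the hard work of showing that internal choice absorbs duplicated arguments up to divergence-preserving coupled similarity is already packaged into \pr{CSpreorder}. If one preferred a direct proof, one could verify that $\{(P \intchoice P, P), (P, P \intchoice P)\} \cup \Id$ is a divergence-preserving coupled simulation---the first-clause obligations amount to the obvious transitions $P \intchoice P \goto{\tau} P$ and $P \dto{} P$, the second-clause obligations use $P \intchoice P \dto{} P$, and divergence preservation is immediate from $P \intchoice P {\Uparrow}$ iff $P {\Uparrow}$---but the shortcut through \pr{CSpreorder} avoids even this small case analysis.
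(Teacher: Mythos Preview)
Your proof is correct. It differs from the paper's in that the paper takes the direct route you sketch at the end: it simply states that $\{(P\intchoice P, P), (P, P\intchoice P) \mid P~\mbox{a process}\} \cup \Id$ is a divergence-preserving coupled simulation, without invoking \pr{CSpreorder}. Your main argument instead instantiates \pr{CSpreorder} at $Q:=P$ and uses reflexivity from \pr{preorder}, which is a cleaner derivation that reuses already-established machinery and avoids any clause-by-clause check. The paper's choice is more self-contained and keeps the soundness section uniform---every axiom is validated by exhibiting an explicit relation---whereas your shortcut highlights that $\Ii$ is really a degenerate case of the characterisation in \pr{CSpreorder}. Either is perfectly adequate here.
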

\begin{proof}
  $\{(P\intchoice P, P), (P, P\intchoice P) \mid P ~~\mbox{a process}\} \cup \Id$
  is a divergence-preserving coupled simulation.
\qed
\end{proof}

\begin{proposition}
  Axiom $\Ic$ is valid even for strong bisimilarity.
\end{proposition}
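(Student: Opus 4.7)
The plan is to exhibit an explicit strong bisimulation that relates $P \intchoice Q$ and $Q \intchoice P$ for every pair of processes $P, Q$. A minimal choice that suffices is
\[
\R := \{(P \intchoice Q,\; Q \intchoice P) \mid P, Q \text{ processes}\} \cup \Id,
\]
where $\Id$ is the identity relation. Since bisimilarity has two symmetric clauses, and $\R$ is already symmetric by construction (swapping the roles of $P$ and $Q$), it suffices to check the forward clause.

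The key observation is that, inspecting the SOS rules in \tab{CSP}, the only rules whose conclusion has $P \intchoice Q$ on the left are the two axioms $P \intchoice Q \goto{\tau} P$ and $P \intchoice Q \goto{\tau} Q$. Hence if $P \intchoice Q \goto{\alpha} R$, then $\alpha = \tau$ and either $R = P$ or $R = Q$. In the first case, $Q \intchoice P \goto{\tau} P$ matches, and $(P,P) \in \Id \subseteq \R$; in the second case, $Q \intchoice P \goto{\tau} Q$ matches, and $(Q,Q) \in \Id \subseteq \R$. For pairs in $\Id$ the clause is immediate.

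There is no real obstacle: the only subtle point is making sure one has captured \emph{all} transitions out of $P \intchoice Q$, which relies on the fact that no other SOS rule in \tab{CSP} has $\intchoice$ as its outermost operator on the left-hand side of the conclusion.
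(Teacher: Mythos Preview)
Your proof is correct and uses exactly the same bisimulation relation as the paper, namely $\{(P\intchoice Q,\, Q\intchoice P)\mid P,Q \text{ processes}\}\cup\Id$; you simply spell out the transition analysis that the paper leaves implicit.
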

\begin{proof}
  $\{(P\intchoice Q, Q \intchoice P)\mid P,Q ~~\mbox{processes}\} \cup \Id$ is a strong bisimulation.
\qed
\end{proof}

\begin{proposition}
  Axiom $\Ia$ is valid for $\equiv_{CS}^\Delta$.
\end{proposition}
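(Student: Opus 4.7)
The plan is to exhibit a single divergence-preserving coupled simulation $\R$ witnessing both $P \intchoice (Q \intchoice R) \sqsupseteq_{CS}^\Delta (P \intchoice Q) \intchoice R$ and the converse, in the style of the proofs of $\Ii$ and $\Ic$. Abbreviating $L := P \intchoice (Q \intchoice R)$ and $M := (P \intchoice Q) \intchoice R$, I would take $\R$ to be $\Id$ together with six extra pairs: the main pairs $(L,M)$ and $(M,L)$; the \emph{halfway} pairs $(Q \intchoice R,\, M)$ and $(P \intchoice Q,\, L)$; and the \emph{shadow} pairs $(Q,\, Q \intchoice R)$ and $(P,\, P \intchoice Q)$.

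For the first clause of \df{coupled}: starting from $(L, M)$, the transition $L \goto{\tau} P$ is matched by $M \dto{} P$ (two $\tau$-steps via $P \intchoice Q$), landing in $\Id$; the transition $L \goto{\tau} Q \intchoice R$ is matched by $M$ taking zero steps, landing in the first halfway pair. The pair $(M, L)$ is symmetric, using the second halfway pair. For the halfway pair $(Q \intchoice R,\, M)$ the transitions $\goto{\tau} Q$ and $\goto{\tau} R$ are matched by $M$ via two or one $\tau$-steps, respectively, into $\Id$, and $(P \intchoice Q,\, L)$ is symmetric; for the shadow pair $(Q, Q\intchoice R)$ any transition $Q \goto{\alpha} Q'$ is matched by $Q \intchoice R \goto{\tau} Q \goto{\alpha} Q'$ into $\Id$, and similarly for $(P, P\intchoice Q)$. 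The coupling clause is handled by $M \dto{} P \intchoice Q \R L$ for $(L,M)$; by $L \dto{} Q \intchoice R \R M$ for $(M,L)$; by $M \dto{} Q$ with $(Q, Q \intchoice R) \in \R$ for $(Q \intchoice R,\, M)$, and analogously for the other halfway pair; and for the shadow pairs $Q \intchoice R \dto{} Q$ with $(Q, Q) \in \Id$ suffices (and symmetrically for $(P,P\intchoice Q)$).

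Divergence-preservation is immediate: every CSP expression formed from $P$, $Q$, $R$ using $\intchoice$ diverges iff at least one of its constituent subterms among $\{P,Q,R\}$ does, and for each pair in $\R \setminus \Id$ the set of such subterms on the left side is contained in that on the right. The main subtlety---and what distinguishes $\Ia$ from the strong bisimulation argument that sufficed for $\Ic$---is that the asymmetric halfway pairs cannot be replaced by their reverses: the pair $(M,\, Q \intchoice R)$ would fail the first clause, since the transition $M \goto{\tau} P \intchoice Q$ admits no $\dto{\hat\tau}$-match from $Q \intchoice R$ for arbitrary $P$. This asymmetry is precisely what makes coupled similarity weaker than divergence-preserving weak bisimilarity, and is what validates associativity of internal choice here.
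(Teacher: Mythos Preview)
Your proof is correct and follows essentially the same approach as the paper: both exhibit a divergence-preserving coupled simulation consisting of $\Id$ together with the two main pairs, the two ``halfway'' pairs, and two auxiliary pairs needed for the coupling clause of the halfway pairs. The only cosmetic difference is in the choice of auxiliary pairs --- the paper uses $(R,\, Q \intchoice R)$ where you use $(Q,\, Q \intchoice R)$ --- and both choices work, yielding different but equally valid witnesses for the coupling clause at $(Q \intchoice R,\, M)$.
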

\begin{proof}
  The relation $\{\big( P \intchoice (Q \intchoice R) , (P \intchoice Q) \intchoice R \big),
  \big((P \intchoice Q) \intchoice R , P \intchoice (Q \intchoice R) \big),\linebreak[3]
  \big(Q \intchoice R , (P \intchoice Q) \intchoice R \big),
  \big(P \intchoice Q, P \intchoice (Q \intchoice R)\big), \big(R,Q \intchoice R\big),
  \big(P,P\intchoice Q\big) \,|\, P,Q,R ~~\mbox{processes}\}\linebreak \cup \Id$
  is a divergence-preserving coupled simulation.
\qed
\pagebreak[3]
\end{proof}

\begin{proposition}
  Axioms {\bf E2--4} are valid for strong bisimilarity.
\end{proposition}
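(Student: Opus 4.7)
The plan is to handle each of the three axioms \textbf{E2}, \textbf{E3}, \textbf{E4} separately by exhibiting an explicit strong bisimulation, using the operational rules of \tab{CSP} for external choice: a visible action $a$ resolves the choice (sending $P \extchoice Q$ to $P'$ or $Q'$), while a $\tau$ transition preserves the choice structure (sending $P \extchoice Q$ to $P' \extchoice Q$ or $P \extchoice Q'$).

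For \textbf{E2} I would take the relation $\{(P\extchoice Q,\,Q\extchoice P)\mid P,Q\text{ processes}\} \cup \Id$ and verify the two clauses of strong bisimulation by case analysis on the last rule used. A visible-action transition from $P \extchoice Q$ lands in some $P'$ or $Q'$, matched by the symmetric rule on $Q \extchoice P$ into the same target, so the pair is in $\Id$. A $\tau$-transition $P\extchoice Q \goto{\tau} P'\extchoice Q$ is matched by $Q\extchoice P \goto{\tau} Q\extchoice P'$, and the resulting pair $(P'\extchoice Q, Q\extchoice P')$ is again in the relation.

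For \textbf{E4} the witnessing relation is $\{(P\extchoice \STOP,\,P),\,(P,\,P\extchoice\STOP)\mid P\text{ process}\}\cup\Id$. Since $\STOP$ has no transitions, every transition from $P\extchoice\STOP$ comes from a rule that fires on the $P$-side: visible actions match directly into $\Id$, and $\tau$-transitions yield pairs $(P'\extchoice\STOP,\,P')$ which sit in the relation.

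For \textbf{E3} the relation must be closed under the $\tau$-propagation inside both associations, so I would take all pairs of the form $\bigl((P\extchoice Q)\extchoice R,\,P\extchoice (Q\extchoice R)\bigr)$ (and their symmetric counterparts), quantified over \emph{all} processes $P,Q,R$, together with $\Id$. Then any visible-action transition resolves to a common target $P'$, $Q'$, or $R'$, landing in $\Id$; a $\tau$-transition on either side affects exactly one of the three components, and the re-bracketed result is again a pair of the same shape and thus in the relation. The only potential obstacle is bookkeeping: making sure that the $\tau$-case for, say, $(P\extchoice Q)\extchoice R \goto{\tau} (P'\extchoice Q)\extchoice R$ is correctly matched by $P\extchoice (Q\extchoice R)\goto{\tau} P'\extchoice(Q\extchoice R)$, and symmetrically for the middle and right components; this is a straightforward enumeration of six subcases.
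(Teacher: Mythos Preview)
Your proposal is correct and takes essentially the same approach as the paper: exhibiting explicit strong bisimulations closed under the $\tau$-propagation of $\extchoice$, together with $\Id$ to absorb the targets of visible transitions. The only cosmetic difference is that the paper uses the one-directional relations $\{(P\extchoice Q, Q\extchoice P)\}\cup\Id$, $\{(P\extchoice(Q\extchoice R),(P\extchoice Q)\extchoice R)\}\cup\Id$, and $\{(P\extchoice\STOP,P)\}\cup\Id$, relying on the two-clause definition of bisimulation to cover both directions, whereas you add the symmetric pairs explicitly; this is redundant but harmless.
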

\begin{proof}
  The relation $\{\big( P \extchoice (Q \extchoice R) , (P \extchoice Q) \extchoice R \big)
  \mid P,Q,R ~~\mbox{processes}\} \cup \Id$ is a strong bisimulation.
  So is $\{(P\extchoice Q, Q \extchoice P)\mid P,Q ~~\mbox{processes}\} \cup \Id$,
  as well as $\{(P\extchoice \STOP, P)\mid P ~~\mbox{a process}\} \cup \Id$.
\qed
\end{proof}

\begin{proposition}
  Axiom $\Si$ is valid for $\equiv_{CS}^\Delta$.
\end{proposition}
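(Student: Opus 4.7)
The plan is to prove $P \timeout P \equiv_{CS}^\Delta P$ by exhibiting a single divergence-preserving coupled simulation that witnesses both inequalities at once. I propose the symmetric candidate
\[
\R = \{(Q \timeout P, P), (P, Q \timeout P) \mid P \dto{} Q\} \cup \Id;
\]
specialising $Q := P$ places $(P \timeout P, P)$ and $(P, P \timeout P)$ in $\R$, yielding both $P \timeout P \sqsupseteq_{CS}^\Delta P$ and $P \sqsupseteq_{CS}^\Delta P \timeout P$.

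For the first clause on $(Q \timeout P, P)$ with $P \dto{} Q$, I split on the three kinds of transition of $Q \timeout P$. A visible step $Q \goto{a} Q'$ gives $Q \timeout P \goto{a} Q'$, matched by $P \dto{a} Q'$ (appending the $a$-step to $P \dto{} Q$), landing in $\Id$. A $\tau$-step $Q \goto{\tau} Q'$ sends $Q \timeout P$ to $Q' \timeout P$, matched by the empty walk $P \dto{} P$ and landing in $(Q' \timeout P, P) \in \R$ since $P \dto{} Q \goto{\tau} Q'$. The slide $Q \timeout P \goto{\tau} P$ is matched by $P \dto{} P$ and lands in $\Id$. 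Symmetrically, on $(P, Q \timeout P)$ a transition $P \goto{\alpha} P'$ is matched by $Q \timeout P \goto{\tau} P \goto{\alpha} P'$ (using the slide first), landing in $\Id$. The coupling clause holds trivially in both directions: $(Q \timeout P, P)$ couples via the empty $P$-walk to $(P, Q \timeout P) \in \R$, and $(P, Q \timeout P)$ couples via the slide $Q \timeout P \goto{\tau} P$ to $(P, P) \in \Id$.

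Divergence preservation rests on the claim that whenever $P \dto{} Q$, one has $(Q \timeout P) {\Uparrow}$ iff $P {\Uparrow}$: any infinite $\tau$-run of $Q \timeout P$ either uses the slide at some step and continues from $P$---forcing $P {\Uparrow}$---or avoids the slide forever, in which case it traces an infinite $\tau$-run of $Q$, giving $Q {\Uparrow}$ and hence $P {\Uparrow}$ via $P \dto{} Q$; conversely $P {\Uparrow}$ delivers the infinite run $Q \timeout P \goto{\tau} P \goto{\tau} \cdots$. The only pitfall I anticipate is the choice of matching partner after a $\tau$-move of $Q$: I insist on keeping $P$ rather than the evolving $Q$ as the right-hand component, because matching $Q \timeout P \goto{\tau} Q' \timeout P$ by pairing with $Q'$ would break divergence preservation whenever $P$ diverges through the slide while some intermediate $Q'$ is stable.
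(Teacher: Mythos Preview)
Your proof is correct and follows essentially the same approach as the paper's: both exhibit a single symmetric divergence-preserving coupled simulation containing $(P\timeout P,P)$ and $(P,P\timeout P)$, closed under $\tau$-evolution of the left argument of $\timeout$. The paper parametrises its relation by $P' \sqsupseteq_{CS}^\Delta P$ and appeals to \pr{tau}, whereas you use the more concrete condition $P \dto{} Q$---a subset of the paper's relation---which lets visible moves of $Q$ be matched directly via $P \dto{} Q \goto{a} Q'$ landing in $\Id$.
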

\begin{proof}
  $\{(P' \timeout P, P),(P,P' \timeout P) \mid P' \sqsupseteq_{CS}^\Delta P\} \cup \Id$
  is a divergence-preserving coupled simulation. This follows from \pr{tau}.
\qed
\end{proof}

\begin{proposition}
  Axiom $\Sa$ is valid for $\equiv_{CS}^\Delta$.
\end{proposition}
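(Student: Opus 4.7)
The plan is to follow the pattern of the preceding propositions: define an explicit relation $\R$ and verify it is a divergence-preserving coupled simulation. I will let $\R$ be the smallest relation containing, for all processes $P,Q,R$, the two pairs $\big(P\timeout(Q\timeout R)\big)\R\big((P\timeout Q)\timeout R\big)$ and $\big((P\timeout Q)\timeout R\big)\R\big(P\timeout(Q\timeout R)\big)$, together with the identity $\Id$. Since $\R$ is symmetric on its non-identity pairs, showing it is a divergence-preserving coupled simulation will directly yield $P\timeout(Q\timeout R) \equiv_{CS}^\Delta (P\timeout Q)\timeout R$ without a separate argument for each direction.

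For the first clause of \df{coupled} I will trace each transition on each side and exhibit a matching $\dto{\hat\alpha}$-transition on the other side whose target lies again in $\R$. The three kinds of transition from $P\timeout(Q\timeout R)$ (an inherited $a$-action from $P$, a $\tau$-step from $P$ producing $P'\timeout(Q\timeout R)$, and the internal timeout to $Q\timeout R$) each match a single transition of the right-hand side with the target pair falling in $\R$. In the opposite direction, the transitions of $(P\timeout Q)\timeout R$ that arise from the inner expression $P\timeout Q$ are handled symmetrically; the one non-routine case---which I expect to be the main obstacle---is the outer timeout $(P\timeout Q)\timeout R \goto{\tau} R$, which has no single-step counterpart on the left. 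I will match it by the two-step weak transition $P\timeout(Q\timeout R) \goto{\tau} Q\timeout R \goto{\tau} R$, which is permitted because the first clause requires only $\dto{\hat\tau}$, i.e.\ zero or more $\tau$-steps. This is precisely the point at which strong bisimilarity would fail but coupled similarity succeeds.

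The second clause of \df{coupled} is immediate from the symmetry built into $\R$: for any pair $X \R Y$ among those just defined, the zero-step $Y\dto{} Y$ together with $Y \R X$ discharges the obligation. For divergence-preservation I will use the standard fact, provable by direct inspection of the operational rules for $\timeout$, that $S_1\timeout S_2 {\Uparrow}$ iff $S_1{\Uparrow}$ or $S_2{\Uparrow}$; applied twice, this shows that both $P\timeout(Q\timeout R)$ and $(P\timeout Q)\timeout R$ diverge exactly when at least one of $P$, $Q$, $R$ diverges, so $\R$ is divergence-preserving.
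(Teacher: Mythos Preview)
Your proposal is correct and uses exactly the same relation as the paper's proof, namely $\{\big(P\timeout(Q\timeout R),(P\timeout Q)\timeout R\big),\big((P\timeout Q)\timeout R,P\timeout(Q\timeout R)\big)\mid P,Q,R\}\cup\Id$, which the paper simply asserts to be a divergence-preserving coupled simulation. Your detailed case analysis---in particular the two-step match $P\timeout(Q\timeout R)\goto{\tau}Q\timeout R\goto{\tau}R$ for the outer timeout $(P\timeout Q)\timeout R\goto{\tau}R$---is precisely what verifies that assertion.
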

\begin{proof}
  $\{\big( P \timeout (Q \timeout R) , (P \timeout Q) \timeout R \big),
  \big((P \timeout Q) \timeout R , P \timeout (Q \timeout R) \big) \mid P,Q,R ~~\mbox{processes}\} \cup \Id$
  is a divergence-preserving coupled simulation.
\qed
\pagebreak[2]
\end{proof}

\begin{proposition}
  Axiom $\SE$ is valid for $\equiv_{CS}^\Delta$.
\end{proposition}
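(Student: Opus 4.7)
The plan is to establish $(P \timeout Q) \timeout R \equiv_{CS}^\Delta (P \extchoice Q) \timeout R$ by exhibiting two divergence-preserving coupled simulations, one witnessing each direction of $\sqsupseteq_{CS}^\Delta$. Write $L = (P\timeout Q)\timeout R$ and $M = (P\extchoice Q)\timeout R$. The difficulty is that $L$ and $M$ are not strongly bisimilar: $M$ offers $Q$'s visible actions immediately, whereas $L$ can only reach them after the inner slide $L \goto{\tau} Q\timeout R$, and conversely $M$ can evolve the $Q$-component alone by a $\tau$, producing $(P\extchoice Q')\timeout R$---a state that no $\tau$-derivative of $L$ matches syntactically.

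For the direction $L \sqsupseteq_{CS}^\Delta M$, I would take
\[
\R_1 = \{((A\timeout B)\timeout C,\, (A\extchoice B)\timeout C) \mid A,B,C\} \cup \{(B\timeout C,\, (A\extchoice B)\timeout C) \mid A,B,C\} \cup \sqsupseteq_{CS}^\Delta.
\]
The first family handles pairs of the shape $(L,M)$; the inner-slide move $L \goto{\tau} Q\timeout R$ is matched by zero $\tau$-steps of $M$, which is legitimate because the resulting pair $(Q\timeout R, M)$ lies in the second family, and that family is itself closed under the transitions of $B\timeout C$. Coupling for both families is discharged by $(A\extchoice B)\timeout C \goto{\tau} C$ combined with \pr{tau}, which yields $C \sqsupseteq_{CS}^\Delta B\timeout C$ and $C \sqsupseteq_{CS}^\Delta (A\timeout B)\timeout C$, so the required catch-up witnesses lie in the $\sqsupseteq_{CS}^\Delta$-component of $\R_1$.

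For the harder direction $M \sqsupseteq_{CS}^\Delta L$, I would close the left-hand side under $\tau$-derivatives in each component and set
\[
\R_2 = \{((P'\extchoice Q')\timeout R',\, (P\timeout Q)\timeout R) \mid P \dto{} P',\ Q \dto{} Q',\ R \dto{} R'\} \cup \sqsupseteq_{CS}^\Delta.
\]
The pair $(M,L)$ is the instance $P'=P$, $Q'=Q$, $R'=R$. In a general member of the first family, a visible $P'$-action is matched by $L \dto{} (P'\timeout Q)\timeout R \goto{a}$; a visible $Q'$-action by $L \dto{} Q'\timeout R \goto{a}$; an internal move of $P'$ or $Q'$ by zero $\tau$-steps of $L$, landing in another member of the same family with the chain $P\dto{}P'$ (resp.\ $Q\dto{}Q'$) extended by one step; and the outer slide $\goto{\tau} R'$ by $L \dto{} R \dto{} R'$. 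Coupling is again supplied by $L \dto{} R'$ together with $R' \sqsupseteq_{CS}^\Delta (P'\extchoice Q')\timeout R'$ from \pr{tau}. Divergence-preservation for all the new pairs follows because $(A\timeout B)\timeout C{\Uparrow}$, $(A\extchoice B)\timeout C{\Uparrow}$, $B\timeout C{\Uparrow}$ and $(P'\extchoice Q')\timeout R'{\Uparrow}$ all reduce to divergence of one of their components, and divergence is preserved along $\dto{}$.

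The main obstacle is the second direction: the naive symmetric choice $\{(L,M), (M,L)\} \cup \Id$ fails, because a $\tau$-move of $Q$ inside $P \extchoice Q$ produces $(P\extchoice Q')\timeout R$, for which there is no syntactically matching $\tau$-derivative of $L$. The indexed family in $\R_2$ is exactly what is required to absorb these moves back into the relation, and verifying that this family is stable under the transitions listed above is the core calculation.
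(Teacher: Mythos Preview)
Your argument is correct and follows the same overall strategy as the paper: exhibit an explicit divergence-preserving coupled simulation containing $(L,M)$ and one containing $(M,L)$.  The paper compresses both directions into a single relation
\[
\{(L,M),\ ((P\extchoice Q')\timeout R,\,L),\ (Q\timeout R,\,M),\ (R,\,Q\timeout R)\mid Q'\sqsupseteq_{CS}^\Delta Q\}\cup\Id,
\]
parametrising the $M$-to-$L$ pairs only in the $Q$-component, via the semantic relation $Q'\sqsupseteq_{CS}^\Delta Q$, whereas you parametrise all three components by the syntactic relation $\dto{}$ and split the two directions into separate relations.  Your inclusion of $\sqsupseteq_{CS}^\Delta$ rather than merely $\Id$ is the more careful choice: when $(P\extchoice Q')\timeout R\goto{a}Q''$ via a visible move of $Q'$, the matching state on the $L$-side is some $Q^*$ with $Q''\sqsupseteq_{CS}^\Delta Q^*$, and that pair lives in $\sqsupseteq_{CS}^\Delta$ but not in general in $\Id$.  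The extra generality you carry in $R'$ is harmless but unnecessary, since $R'$ never changes inside the family before the outer slide discharges the pair into $\sqsupseteq_{CS}^\Delta$.
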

\begin{proof}
  $\{\big( (P \timeout Q) \timeout R , (P \extchoice Q) \timeout R \big),
  \big( (P \extchoice Q') \timeout R , (P \timeout Q) \timeout R \big),
  \big( Q \timeout R , (P \extchoice Q) \timeout R \big),\linebreak[3]
  \big( R , Q \timeout R \big)
  \mid  Q' \sqsupseteq_{CS}^\Delta Q\} \cup \Id$
  is a divergence-preserving coupled simulation.
\qed
\end{proof}

\begin{proposition}
  Axiom $\SI$ is valid for $\equiv_{CS}^\Delta$.
\end{proposition}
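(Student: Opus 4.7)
My plan is to exhibit a single divergence-preserving coupled simulation $\R$ containing both the pair $\big((P \intchoice Q) \timeout R,\ (P \extchoice Q) \timeout R\big)$ and its converse, in the style used for $\SE$. The relation will be a union of a few pair-schemas together with $\sqsupseteq_{CS}^\Delta$ and the identity; including $\sqsupseteq_{CS}^\Delta$ lets me close off the second (coupling) clause uniformly using \pr{tau}.

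For the forward direction (internal side leading) the only new schema I need beyond the main pair is $\big(P \timeout R,\ (P \extchoice Q) \timeout R\big)$ (and its mirror with $Q$): the $\tau$-moves $(P \intchoice Q) \timeout R \goto{\tau} P \timeout R,\ Q \timeout R,\ R$ are matched respectively by zero $\tau$-steps on the right, zero $\tau$-steps on the right, and the sliding $\tau$ on the right. These schemas then close on themselves, because a subsequent $P \goto{\tau} P'$ is matched by $(P \extchoice Q) \timeout R \goto{\tau} (P' \extchoice Q) \timeout R$, a visible $P \goto{a} P'$ is matched by $(P \extchoice Q) \timeout R \goto{a} P'$, and the sliding $\tau$ from $P \timeout R$ to $R$ is matched by the corresponding sliding $\tau$ on the right.

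For the reverse direction (external side leading) the obstacle is that $(P \extchoice Q) \timeout R \goto{\tau} (P' \extchoice Q) \timeout R$ has no obvious $\dto{}$-match on the internal side, since no state of the shape $(\cdot \extchoice \cdot) \timeout R$ is $\tau$-reachable from $(P \intchoice Q) \timeout R$. I handle this by including the parametric schema $\big((P^* \extchoice Q^*) \timeout R,\ (P \intchoice Q) \timeout R\big)$ for all $P^*, Q^*$ with $P \dto{} P^*$ and $Q \dto{} Q^*$; every $\tau$-move advancing $P^*$ (resp.\ $Q^*$) is then answered by zero $\tau$-steps on the right, using that $P \dto{} P^* \goto{\tau} P^{**}$ still yields $P \dto{} P^{**}$. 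A visible $P^* \goto{a} P^{**}$ is answered by $(P \intchoice Q) \timeout R \goto{\tau} P \timeout R \dto{} P^* \timeout R \goto{a} P^{**}$, and the sliding $\tau$ is matched as before.

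The coupling clause is uniformly discharged by letting the right-hand component take its sliding $\tau$-step to $R$ and invoking \pr{tau}, which supplies $R \sqsupseteq_{CS}^\Delta X$ for every $X$ with $X \dto{} R$; this is what the $\sqsupseteq_{CS}^\Delta$-component of $\R$ is for. Divergence-preservation is immediate from the observations that $X \timeout R {\Uparrow}$ iff $X {\Uparrow}$ or $R {\Uparrow}$, that both $P \intchoice Q {\Uparrow}$ and $P \extchoice Q {\Uparrow}$ amount to $P {\Uparrow} \vee Q {\Uparrow}$, and that $P \dto{} P^*$ makes $P^* {\Uparrow}$ imply $P {\Uparrow}$. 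I expect the only non-routine step to be identifying the parametric schema for the reverse direction; once it is in place the coupled-simulation verification becomes a mechanical case split on the transition rules for $\timeout$, $\intchoice$ and $\extchoice$.
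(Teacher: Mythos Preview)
Your proposal is correct and follows essentially the same strategy as the paper: exhibit an explicit divergence-preserving coupled simulation built from a handful of pair-schemas together with an ``identity-like'' closure. The only differences are cosmetic: the paper parametrises the reverse-direction schema by $P' \sqsupseteq_{CS}^\Delta P \wedge Q' \sqsupseteq_{CS}^\Delta Q$ rather than your concrete $P \dto{} P^* \wedge Q \dto{} Q^*$, and it closes with the single schema $(R,\,Q \timeout R)$ plus $\Id$ where you use all of $\sqsupseteq_{CS}^\Delta$; your choice of $\dto{}$ has the pleasant side effect that the visible-action case in the reverse direction lands exactly on an $\Id$-pair, so no appeal to $\sqsupseteq_{CS}^\Delta$ is needed there.
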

\begin{proof}
  $\{\big( (P \intchoice Q) \timeout R , (P \extchoice Q) \timeout R \big),
  \big( (P' \extchoice Q') \timeout R , (P \intchoice Q) \timeout R \big),
  \big( P \timeout R , (P \extchoice Q) \timeout R \big),\linebreak[3]
  \big( Q \timeout R , (P \extchoice Q) \timeout R \big),
  \big( R , Q \timeout R \big)
  \mid P' \sqsupseteq_{CS}^\Delta P \wedge Q' \sqsupseteq_{CS}^\Delta Q\} \cup \Id$
  is a divergence-preserving coupled simulation. Checking this involves \pr{tau}.
\qed
\end{proof}

\begin{proposition}
  Axiom $\Ss$ is valid for $\equiv_{CS}^\Delta$.
\end{proposition}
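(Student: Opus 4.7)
The plan is to exhibit a single divergence-preserving coupled simulation $\R$ containing both $(\STOP \timeout P, P)$ and $(P, \STOP \timeout P)$ for every process $P$, which immediately yields $\STOP \timeout P \equiv_{CS}^\Delta P$. The key structural observation driving the proof is that, in the operational semantics of \tab{CSP}, $\STOP$ has no outgoing transitions, so the only transition issuing from $\STOP \timeout P$ is the ``timeout'' move $\STOP \timeout P \goto{\tau} P$ obtained from the rule $P \timeout Q \goto{\tau} Q$; the two premise-propagating $\timeout$-rules fire vacuously because $\STOP$ offers no premise.

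Following the pattern of the preceding propositions for $\Si$, $\Sa$, $\SE$ and $\SI$, I would take $\R := \{(\STOP \timeout P,\, P),\, (P,\, \STOP \timeout P) \mid P~\mbox{a process}\} \cup \Id$, and verify the two clauses of \df{coupled}. For pairs in $\Id$ the conditions are trivial. For the pair $(\STOP \timeout P, P)$ the result is essentially immediate from \pr{tau}, since $\STOP \timeout P \dto{} P$; but it can equally be checked directly: the unique transition to match is $\STOP \timeout P \goto{\tau} P$, paired against $P \dto{} P$ using $(P,P) \in \Id$, and the coupling clause is witnessed by the same $P \dto{} P$ with $(P, \STOP \timeout P) \in \R$.

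For the pair $(P, \STOP \timeout P)$ I would match an arbitrary transition $P \goto{\alpha} P'$ by the two-step derivation $\STOP \timeout P \goto{\tau} P \goto{\alpha} P'$, which yields $\STOP \timeout P \dto{\hat\alpha} P'$ with $(P',P') \in \Id$; the coupling clause is witnessed by $\STOP \timeout P \goto{\tau} P$ with $(P, P) \in \Id$. Divergence-preservation then reduces to the evident equivalence $\STOP \timeout P {\Uparrow}$ iff $P {\Uparrow}$, valid because the only $\tau$-successor of $\STOP \timeout P$ is $P$ itself, so any infinite $\tau$-chain from $\STOP \timeout P$ is in effect an infinite $\tau$-chain from $P$ prefixed by one step, and conversely.

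No significant obstacle is anticipated here: the sparsity of transitions out of $\STOP$ forces each clause, and the only mild subtlety is remembering to treat the $\alpha=\tau$ case in the first clause under the $\hat\alpha$ convention, where $\STOP \timeout P \goto{\tau} P \goto{\tau} P'$ still qualifies as $\STOP \timeout P \dto{\hat\tau} P'$.
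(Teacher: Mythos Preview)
Your proposal is correct and takes essentially the same approach as the paper: the paper likewise exhibits the relation $\{(\STOP\timeout P, P),\, (P, \STOP\timeout P) \mid P~\mbox{a process}\} \cup \Id$ and asserts it is a divergence-preserving coupled simulation, without spelling out the verification. Your added detail (the unique $\tau$-transition out of $\STOP\timeout P$, the matching via $\Id$, and the divergence argument) is exactly what one would fill in, and is all correct.
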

\begin{proof}
  The relation $\{(\STOP\timeout P, P), (P, \STOP\timeout P) \mid P ~~\mbox{a process}\} \cup \Id$
  is a divergence-preserving coupled simulation.
\qed
\end{proof}

\begin{proposition}
  Axiom $\IS$ is valid for $\equiv_{CS}^\Delta$.
\end{proposition}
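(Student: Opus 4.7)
The plan is to exhibit a divergence-preserving coupled simulation $\R$ witnessing both directions of the equation, following the pattern used above for the closely related axiom $\SI$. Write $L$ for $(P\timeout Q)\intchoice(R\timeout S)$ and $M$ for $(P\extchoice R)\timeout(Q\intchoice S)$. The relation $\R$ will be built from the following families of pairs, quantified over all processes $P,Q,R,S$ and all $P',R'$ with $P\dto{}P'$ and $R\dto{}R'$:
\begin{itemize}
\item the canonical pairs $(L,M)$ and $(M,L)$;
\item the intermediate pairs $(P\timeout Q,M)$ and $(R\timeout S,M)$, matching $L$'s initial $\tau$-commitment to one of its internal-choice branches;
\item the coupling helpers $(Q,P\timeout Q)$ and $(S,R\timeout S)$, needed to close the second clause of the preceding pairs via the two-step $\tau$-path from $M$ through $Q\intchoice S$;
\item the mixed pair $\big((P'\extchoice R')\timeout(Q\intchoice S),\,L\big)$, matching the $\tau$-evolutions of $M$ inside $P\extchoice R$ back to $L$ itself with zero $\tau$-steps;
\item the pair $(Q\intchoice S,\,L)$, handling the timeout-$\tau$ from the previous state.
\end{itemize}
Together with $\sqsupseteq_{CS}^\Delta$ and $\Id$, this yields $\R$.

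I would then verify that each pair satisfies the two clauses of \df{coupled} and is divergence-preserving. First clauses are mostly routine: transitions either match directly on the other side or produce pairs of the same form with $P$ or $R$ replaced by their $\tau$-reducts. Second clauses appeal to \pr{tau} to supply the coupling targets: for the intermediate pairs the coupling helpers apply; for the mixed pair one takes $L\dto{}(P'\timeout Q)$ and invokes the intermediate pair at $P'$; and for $(Q\intchoice S,L)$ one uses $L\dto{}Q$ together with $(Q,Q\intchoice S)\in{\sqsupseteq_{CS}^\Delta}$.

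The main obstacle is the mixed pair. The left-hand state $(P'\extchoice R')\timeout(Q\intchoice S)$ admits no $\tau$-isomorphic counterpart anywhere in $L$'s $\tau$-closure, because $L$ irrevocably discards one internal-choice branch on its very first $\tau$-step; there is simply no state reachable from $L$ that simultaneously offers both $P'$- and $R'$-behaviour. The resolution, central to the coupled-similarity approach, is that we do not need an isomorphic match: coupling the mixed state back to $L$ itself suffices, provided $L$'s entire $\tau$-closure weakly simulates the individual transitions---$P'$-actions routed through the $(P\timeout Q)$-branch using $P\dto{}P'$, and $R'$-actions through the $(R\timeout S)$-branch using $R\dto{}R'$. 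Divergence-preservation is immediate in every case, since both $L{\Uparrow}$ and $M{\Uparrow}$ hold iff one of $P,Q,R,S$ diverges, and the intermediate forms decompose analogously.
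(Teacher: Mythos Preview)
Your proposal is correct and takes essentially the same approach as the paper: both exhibit a single divergence-preserving coupled simulation built from the canonical pair $(L,M)$, the intermediate pairs $(P\timeout Q,M)$ and $(R\timeout S,M)$, the mixed pair $\big((P'\extchoice R')\timeout(Q\intchoice S),L\big)$, the pair $(Q\intchoice S,L)$, and suitable coupling helpers, quantified over all processes. The only cosmetic differences are that the paper uses the side condition $P'\sqsupseteq_{CS}^\Delta P \wedge R'\sqsupseteq_{CS}^\Delta R$ on the mixed pair (rather than your syntactic $P\dto{}P'\wedge R\dto{}R'$) and closes its coupling clause via an extra helper $\big(S,(P'\extchoice R')\timeout(Q\intchoice S)\big)$ rather than by re-instantiating the intermediate pair at $P',R'$; your inclusion of $\sqsupseteq_{CS}^\Delta$ in $\R$ is a harmless convenience that the paper omits.
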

\begin{proof}
  $\{\big( (P \timeout Q) \intchoice (R \timeout S) , (P \extchoice R) \timeout (Q\intchoice S)\big),
     \big( (P' \extchoice R') \timeout (Q\intchoice S), (P \timeout Q) \intchoice (R \timeout S) \big),\linebreak
     \big( P \timeout Q , (P \extchoice R) \timeout (Q\intchoice S)\big),
     \big( R \timeout S , (P \extchoice R) \timeout (Q\intchoice S)\big),
     \big( Q\intchoice S, (P \timeout Q) \intchoice (R \timeout S) \big),\linebreak
     \big( S, (P' \extchoice R') \timeout (Q\intchoice S) \big),
     \big( S, R \timeout S \big),
     \big( S, Q \intchoice S \big)
  \mid P' \sqsupseteq_{CS}^\Delta P \wedge R' \sqsupseteq_{CS}^\Delta R\} \cup \Id$ is a divergence-preserving coupled simulation.
\qed
\end{proof}

\begin{proposition}
  Axiom $\ES$ is valid for $\equiv_{CS}^\Delta$.
\end{proposition}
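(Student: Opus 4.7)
Following the template of the $\IS$ proof, I exhibit a divergence-preserving coupled simulation $\R$ witnessing $(P \timeout Q) \extchoice (R \timeout S) \equiv_{CS}^\Delta (P \extchoice R) \timeout (Q \extchoice S)$. The relation $\R$ contains the main pair $\big((P \timeout Q) \extchoice (R \timeout S), (P \extchoice R) \timeout (Q \extchoice S)\big)$, its reverse with slack $\big((P' \extchoice R') \timeout (Q \extchoice S), (P \timeout Q) \extchoice (R \timeout S)\big)$ for any $P' \sqsupseteq_{CS}^\Delta P$ and $R' \sqsupseteq_{CS}^\Delta R$, plus intermediate pairs corresponding to the $\tau$-derivatives that have no exact counterpart on the other side: the half-committed states $\big(Q \extchoice (R \timeout S), (P \extchoice R) \timeout (Q \extchoice S)\big)$ and $\big((P \timeout Q) \extchoice S, (P \extchoice R) \timeout (Q \extchoice S)\big)$ arising from the two sliding rules on the LHS, and the pair $\big(Q \extchoice S, (P \timeout Q) \extchoice (R \timeout S)\big)$ corresponding to the sliding $\tau$ on the RHS. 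A few auxiliary pairs (playing the same role as the $(S,\ldots)$ pairs in the $\IS$ proof, together with their symmetric variants for $Q$) are included to close $\R$ under the simulation clauses, and finally $\Id$ is added.

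The verification is by case analysis on the transition rules, in the style of the preceding propositions. For the main pair: a visible action from $P$ or $R$ is matched by the same action through the sliding choice of the RHS; an internal action of $P$ or $R$ is matched by the corresponding $\tau$-transition of the RHS, producing an instance of the main pair; and each of the two sliding $\tau$-transitions of the LHS is matched by the RHS performing zero $\tau$-steps, yielding a half-committed pair. The second clause is discharged by the RHS staying put, producing the reverse pair with $P' = P$, $R' = R$. For the reverse pair, the slack $P' \sqsupseteq P$, $R' \sqsupseteq R$ is used exactly as in $\IS$: a transition of $P'$ (resp.\ $R'$) is matched via the first clause of coupled simulation applied to $P' \sqsupseteq_{CS}^\Delta P$ (resp.\ $R' \sqsupseteq_{CS}^\Delta R$) to produce a $\dto{\hat\alpha}$-sequence in $P \timeout Q$ (resp.\ $R \timeout S$). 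Divergence-preservation is routine, since $X \extchoice Y {\Uparrow}$ iff $X {\Uparrow} \vee Y {\Uparrow}$ and $X \timeout Y {\Uparrow}$ iff $X {\Uparrow} \vee Y {\Uparrow}$, so both sides of every pair diverge under exactly the same conditions on $P, Q, R, S$.

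The main obstacle is the asymmetric commit structure: the LHS may commit to $Q$ on its left while $R \timeout S$ is still on its right (and symmetrically), whereas the RHS commits to $Q \extchoice S$ all at once via its single sliding $\tau$. Consequently, matching a subsequent internal transition of $Q$ or $S$ from a half-committed state forces the RHS first to perform its sliding $\tau$ and then to propagate the $\tau$ through $Q \extchoice S$. Since \df{coupled} permits $\dto{\hat\alpha}$ to chain arbitrarily many $\tau$-steps, these cascading matches succeed, and enumerating the auxiliary pairs needed for closure is the only delicate part of the bookkeeping.
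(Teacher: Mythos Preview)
Your approach follows the $\IS$ template, placing slack on $P$ and $R$, whereas the paper places it on $Q$ and $S$. This is not just a stylistic choice; your proposed matching for the half-committed states has a genuine gap. Consider the pair $\big(Q \extchoice (R \timeout S),\; (P \extchoice R) \timeout (Q \extchoice S)\big)$ and an internal step $Q \goto{\tau} Q'$, so the left side becomes $Q' \extchoice (R \timeout S)$. You propose to match by letting the right side slide to $Q \extchoice S$ and then propagate the $\tau$ to $Q' \extchoice S$. But the resulting pair $\big(Q' \extchoice (R \timeout S),\; Q' \extchoice S\big)$ cannot live in any coupled simulation: if $R$ subsequently performs a visible action $a$, the left side does $\goto{a} R'$, while $Q' \extchoice S$ has no $R$-component left to respond. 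Your cascading-$\tau$ idea discards $P$ and $R$ on the right too early.

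The paper avoids this by parametrising the half-committed pairs with slack on $Q$ and $S$: it carries $\big(Q' \extchoice (R \timeout S),\; (P \extchoice R) \timeout (Q \extchoice S)\big)$ for every $Q'$ with $Q \dto{} Q'$, and symmetrically $\big((P \timeout Q) \extchoice S',\; (P \extchoice R) \timeout (Q \extchoice S)\big)$ for $S \dto{} S'$. An internal step of $Q'$ is then matched by the right side \emph{staying put}, landing in another instance of the same family, so that $R$'s visible transitions remain available. The associated coupling witnesses are $\big(Q'\extchoice S',\; Q' \extchoice (R \timeout S)\big)$ and $\big(Q'\extchoice S',\; (P \timeout Q) \extchoice S'\big)$. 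With this parametrisation the reverse pair needs no slack at all---the paper uses the plain $\big((P \extchoice R) \timeout (Q \extchoice S),\; (P \timeout Q) \extchoice (R \timeout S)\big)$---so your $P' \sqsupseteq_{CS}^\Delta P$, $R' \sqsupseteq_{CS}^\Delta R$ machinery, imported from $\IS$, is unnecessary here.
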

\begin{proof}
  $\{\big( (P \timeout Q) \extchoice (R \timeout S) , (P \extchoice R) \timeout (Q\extchoice S)\big),
     \big( (P \extchoice R) \timeout (Q\extchoice S), (P \timeout Q) \extchoice (R \timeout S) \big),\linebreak
     \big( Q' \extchoice (R \timeout S) , (P \extchoice R) \timeout (Q\extchoice S)\big)~~,~~
     \big( (P \timeout Q) \extchoice S' , (P \extchoice R) \timeout (Q\extchoice S)\big),\linebreak
     \big( Q'\extchoice S', Q' \extchoice (R \timeout S) \big),
     \big( Q'\extchoice S', (P \timeout Q) \extchoice S' \big),
   \mid Q \dto{} Q' \wedge S \dto{} S'\} \cup \Id$ is a divergence-preserving coupled simulation.
\qed
\end{proof}

\begin{proposition}
  Axiom $\EI$ is valid for $\equiv_{CS}^\Delta$.
\end{proposition}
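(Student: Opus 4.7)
The approach is to build a specific divergence-preserving coupled simulation $\R$ that contains $(L, M)$ and $(M, L)$, where $L := P \extchoice (Q \intchoice R)$ and $M := (P \extchoice Q) \intchoice (P \extchoice R)$. The transitions of $L$ are either $P$-driven ($L \goto{\alpha} P'$ for $\alpha \in \Sigma$ with $P \goto{\alpha} P'$, or $L \goto{\tau} P' \extchoice (Q \intchoice R)$ with $P \goto{\tau} P'$) or derived from the $\tau$-moves of $Q \intchoice R$, yielding $L \goto{\tau} P \extchoice Q$ and $L \goto{\tau} P \extchoice R$. The transitions of $M$ are just the latter two. Hence $M$ is easily simulated by $L$ via $\dto{}$; the real difficulty lies in the opposite direction, since $P' \extchoice (Q \intchoice R)$ is not reachable from $M$ via $\dto{}$.

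My plan is to parameterise the relation by the $\tau$-descendants of $P$. Define
\[\R = \{(P' \extchoice (Q \intchoice R),\, M) \mid P \dto{} P'\} \cup \{(M,\, L)\} \cup \sqsupseteq_{CS}^\Delta.\]
Since $\sqsupseteq_{CS}^\Delta$ is itself a divergence-preserving coupled simulation, it suffices to verify the two clauses of \df{coupled} for the two pattern pairs. For the first clause on $(P' \extchoice (Q \intchoice R), M)$: an action transition $P' \goto{\alpha} P''$ is matched by $M \goto{\tau} P \extchoice Q \dto{} P' \extchoice Q \goto{\alpha} P''$; the $\tau$-moves derived from $Q \intchoice R$ are matched by $M \dto{} P' \extchoice Q$ (respectively $P' \extchoice R$); and when $P' \goto{\tau} P''$, the match is the empty $\dto{}$ on $M$, landing back in the pattern since $P \dto{} P''$. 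The first clause on $(M, L)$ is immediate because $M$'s two $\tau$-moves are matched by the corresponding $\tau$-moves of $L$.

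For the coupling clause on $(M, L)$, take $Y = L$: the pair $(L, M)$ is the $P' = P$ instance of the first pattern, hence in $\R$. For the coupling clause on $(P' \extchoice (Q \intchoice R), M)$, take $Y = P' \extchoice Q$ obtained via $M \dto{} P' \extchoice Q$; closure then reduces to the auxiliary fact $(P' \extchoice Q) \sqsupseteq_{CS}^\Delta (P' \extchoice (Q \intchoice R))$, which one proves by exhibiting $\{(X \extchoice Q,\, X \extchoice (Q \intchoice R)) \mid X \text{ a process}\} \cup \Id$ as a divergence-preserving coupled simulation. Divergence-preservation of the pattern pairs is routine since $P' \extchoice (Q \intchoice R) \Uparrow$ iff one of $P', Q, R$ diverges, any of which forces $M \Uparrow$. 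The main obstacle is exactly the $P \goto{\tau} P'$ case: a naive relation containing only $(L, M)$ and $(M, L)$ fails to close under clause~1, and parameterising the first pattern by $P \dto{} P'$---in the spirit of the parameters $Q \dto{} Q'$ used in the soundness proof of $\ES$---is what makes the construction go through.
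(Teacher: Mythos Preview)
Your proof is correct and follows essentially the same approach as the paper. The paper exhibits the single relation
\[
\{( P' \extchoice (Q \intchoice R) ,\, M),\;(M,\,L),\;( P' \extchoice Q ,\, P' \extchoice (Q \intchoice R) )\mid P \dto{} P'\}\cup\Id
\]
as a divergence-preserving coupled simulation; the only cosmetic difference is that you fold the third family of pairs into a separate auxiliary lemma establishing $(P' \extchoice Q) \sqsupseteq_{CS}^\Delta (P' \extchoice (Q \intchoice R))$ and then include all of $\sqsupseteq_{CS}^\Delta$ in $\R$ rather than just $\Id$, but the key idea---parameterising by $P \dto{} P'$ to absorb the $\tau$-moves of $P$---is identical.
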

\begin{proof}
  $\{\big( P' \extchoice (Q \intchoice R) , (P \extchoice Q) \intchoice (P \extchoice R) \big),
     \big( (P \extchoice Q) \intchoice (P \extchoice R) , P \extchoice (Q \intchoice R) \big),\linebreak
     \big( P' \extchoice Q , P' \extchoice (Q \intchoice R) \big)
   \mid P \dto{} P'\} \cup \Id$ is a divergence-preserving coupled simulation.
\qed
\end{proof}

\begin{proposition}
  Axiom $\Pru$ is valid for $\equiv_{CS}^\Delta$.
\end{proposition}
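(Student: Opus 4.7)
The plan is to exhibit an explicit divergence-preserving coupled simulation, symmetric in the two sides, and then verify the two clauses of \df{coupled} together with divergence-preservation. Write
$L := (a\rightarrow P) \extchoice a\rightarrow(P \intchoice Q)$ and $R := a\rightarrow(P \intchoice Q)$. I propose the relation
\[
\R \;=\; \Id \,\cup\, \{(L,R),\,(R,L)\}\,.
\]
It suffices to show $\R$ is a divergence-preserving coupled simulation, since then $L \sqsupseteq_{CS}^\Delta R$ and $R \sqsupseteq_{CS}^\Delta L$, whence $L \equiv_{CS}^\Delta R$.

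First I would check the transfer clause for the pair $(L,R)$. The only transitions of $L$ are $L \goto{a} P$ and $L \goto{a} P\intchoice Q$. The second one is matched immediately by $R \goto{a} P\intchoice Q$, with $P\intchoice Q \mathrel{\Id} P\intchoice Q$. The key point is the first transition: $R \goto{a} P\intchoice Q \goto{\tau} P$ gives $R \dto{a} P$, and then $P \mathrel{\Id} P$. For the pair $(R,L)$ the transfer is trivial, since $R \goto{a} P\intchoice Q$ is matched directly by $L \goto{a} P\intchoice Q$.

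Next I would verify the coupling clause. Both $L$ and $R$ are stable (they admit no $\tau$-transition out), so I may take the witness $Q'$ in the second clause of \df{coupled} to be $R$ itself (resp.\ $L$ itself); the required back-edge $R \R L$ (resp.\ $L \R R$) is already in $\R$ by construction. Divergence-preservation is vacuous: neither $L$ nor $R$ can perform any $\tau$-step, so neither can diverge, and the pairs in $\Id$ are trivially preserving.

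The only non-routine step is the very first sub-case, where the $a$-summand on the left leading to $P$ must be matched by a \emph{weak} $a$-transition from the right, using the internal nondeterminism hidden under $a\rightarrow(P\intchoice Q)$; this is precisely why coupled similarity validates $\Pru$ while neither strong nor weak bisimilarity does. Everything else is bookkeeping on a finite, explicit relation.
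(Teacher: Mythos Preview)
Your proof is correct and uses exactly the same relation as the paper's proof, namely $\Id \cup \{(L,R),(R,L)\}$; the paper merely states this relation without spelling out the verification you give. One small correction to your closing commentary: weak bisimilarity \emph{does} validate $\Pru$, via precisely the weak $a$-transition $R \dto{a} P$ you exhibit, so the parenthetical claim that it fails there should be dropped.
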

\begin{proof}
  $\{\big( (a{\rightarrow} P) \extchoice a{\rightarrow}(P \intchoice Q) , a{\rightarrow}(P \intchoice Q) \big),
     \big( a{\rightarrow}(P \intchoice Q) , (a{\rightarrow} P) \extchoice a{\rightarrow}(P \intchoice Q) \big)\}\linebreak[3]
  \cup \Id$ is a divergence-preserving coupled simulation.
\qed
\pagebreak[3]
\end{proof}

\begin{proposition}
  Axioms {\bf P0--1} and {\bf P4--10} are valid for strong bisimilarity.\\
  Axioms {\bf P11--16} are valid for $\equiv_{CS}^\Delta$.
\end{proposition}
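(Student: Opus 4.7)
The plan is to verify each axiom by exhibiting an appropriate witness relation and checking the clauses of the relevant definition; the axioms split naturally into two groups.

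For axioms P0, P1 and P4--P10 I would use strong bisimilarity. Associativity (P0) and commutativity (P1) follow from the obvious structural relation---the union of $\Id$ with both directions of the renesting (for P0) or reordering (for P1)---and a routine case analysis on the three rules for $\|_A$. For each expansion law P4--P10, the right-hand side has been written precisely so as to enumerate, as separate prefixed summands, every transition the left-hand side can perform under its head-normal-form hypothesis: each unsynchronised visible action from either argument, each synchronised $a$-action, the $\tau$-self-loop of every $\div$-summand, and the $\tau$-step arising from an outer $\timeout$ on the LHS. Consequently, the relation identifying the LHS with the RHS and extended with all corresponding pairs of successors is a strong bisimulation. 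Since strong bisimilarity is strictly finer than $\equiv_{CS}^\Delta$, validity for $\equiv_{CS}^\Delta$ also follows.

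Axioms P11--P16 genuinely require divergence-preserving coupled similarity, because the right-hand side introduces a $\tau$-transition that cannot be matched step-for-step by the left-hand side. In P11, the LHS $(P\timeout P')\|_A(Q\timeout Q')$ has the two one-step $\tau$-successors $P'\|_A(Q\timeout Q')$ and $(P\timeout P')\|_A Q'$, whereas the RHS reaches each only after first $\tau$-stepping through the outer $\timeout$ into the intermediate state $I \mathrel{:=} P'\|_A(Q\timeout Q')\intchoice(P\timeout P')\|_A Q'$ and then through the $\intchoice$. Coupled similarity absorbs exactly this mismatch: by \pr{tau}, each disjunct of $I$ is ahead of $I$ in the $\sqsupseteq_{CS}^\Delta$-sense, so when matching transitions that cross the intermediate state one uses $\dto{\hat\alpha}$ rather than a single $\goto{\alpha}$. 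I would therefore exhibit, for each of P11--P16, a relation built up from the root pair, $\Id$, and pairs obtained by pairing the LHS with $I$ (and with each $\tau$-successor of $I$) in exactly the combinations needed both to match transitions and to discharge the second clause of \df{coupled}. For P14--P16 the same phenomenon arises from $Q = \Intchoice_{j\in J} Q_j$: the LHS reaches $P\|_A Q_j$ in one step, while the RHS takes two steps through the intermediate $\Intchoice_{j\in J}(P\|_A Q_j)$ sitting on the right of the outer $\timeout$.

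The main obstacle is bookkeeping rather than conceptual difficulty. Each axiom combines $\div$-summands, inner and outer $\timeout$-summands, and internal choices in a slightly different way, so a single uniform witness does not suffice; in particular, for pairs involving the intermediate $\intchoice$-state (which has no direct visible transitions) one must discharge the second clause of \df{coupled} by choosing an LHS $\tau$-successor rather than the LHS itself, invoking \pr{tau}. Divergence preservation, by contrast, is uniform: for every operator appearing on either side one has $P_1\,\mathrm{op}\,P_2 {\Uparrow}$ iff $P_1 {\Uparrow} \vee P_2 {\Uparrow}$, and the extra RHS $\tau$-steps neither create nor destroy divergences.
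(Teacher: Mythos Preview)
Your proposal is correct and is exactly the kind of direct verification the paper has in mind; the paper's own proof consists of the single word ``Straightforward.'' Your identification of why P11--P16 require $\equiv_{CS}^\Delta$ rather than strong bisimilarity---the extra intermediate $\intchoice$-state on the right-hand side that must be absorbed by $\dto{\hat\alpha}$---is precisely the point, and your handling of divergence preservation and the second clause of \df{coupled} is appropriate.
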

\begin{proof}
   Straightforward.
\qed
\end{proof}

\begin{proposition}
  Axioms $\Us$, $\HI$, {\bf R0--5} and {\bf T0--6}  are valid for strong bisimilarity.
  Axioms {\bf H5--8} are valid for $\equiv_{CS}^\Delta$.
\end{proposition}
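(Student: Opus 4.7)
The plan is to dispose of the strong-bisimilarity assertions first, then build divergence-preserving coupled simulations for \textbf{H5--H8}. In every case of the first group I exhibit a relation of the form $\{(L,R),(R,L)\}\cup\Id$, possibly enlarged with the finitely many companion pairs arising after one shared step, and check that the operational rules of \tab{CSP} match transitions symmetrically. For $\Us$ this is immediate, since $\STOP$ has no transitions and every move of $\STOP\interrupt P$ is inherited from $P$. For $\HI$ concealment does not hide any of the $\tau$-transitions of $\intchoice$, so both sides have the same two $\tau$-transitions into bisimilar states. The axioms \textbf{R0--R5} express that $f(\cdot)$ commutes with each surrounding operator; the renaming rule simply relabels transitions so derivations match one-to-one. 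For \textbf{T0--T6} one splits on whether the initial action belongs to $A$, yielding either $a\rightarrow(P\throw Q)$ or $a\rightarrow Q$, and checks that propagating $\throw R$ inside $\timeout$, $\intchoice$ or $\extchoice$ preserves the derivation tree; $\Ts$ and $\TD$ are trivial as neither $\STOP$ nor $\div$ can fire a visible action.

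For \textbf{H5--H8} strong bisimilarity actually fails: a hidden prefix $a_i\rightarrow P_i$ with $a_i\in A$ performs a single $\tau$ on the left to reach $P_i\conceal A$, whereas on the right it takes two $\tau$-steps---first into the internal choice $\Intchoice_{a_i\in A}(P_i\conceal A)$, then out of it. This is precisely the asymmetry that coupled similarity accommodates. For each of the four axioms I would generate $\R$ by pairing the two sides in both directions, together with the intermediate state $\Intchoice_{a_i\in A}(P_i\conceal A)$ paired in both directions with the left-hand side, augmented---for the variants \textbf{H7--H8} involving a sliding choice---by the analogous pairs with $P'\conceal A\intchoice\Intchoice_{a_i\in A}(P_i\conceal A)$, and finally closed under $\Id$. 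The first clause of \df{coupled} is checked by cases on the source of the left transition: a visible action from an unhidden $a_i$ is matched step-for-step; a $\tau$ from a hidden $a_i$ is matched by two $\tau$'s on the right; a $\tau$ from the optional $\div$-summand is matched by the corresponding $\div$ on the right; and (for \textbf{H7--H8}) a $\tau$ from the outer $\timeout$ is matched via the corresponding $\timeout$-rule. The second clause is discharged by firing the $\tau$ into $\Intchoice_{a_i\in A}(P_i\conceal A)$ on the side that lags behind---exactly the move that ordinary weak bisimulation cannot exploit.

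Divergence preservation is automatic once the relation only pairs processes with the same divergence status, and this is built into the scheme: both sides diverge iff some $P_i\conceal A$ with $a_i\in A$ diverges, or the $\div$-summand is present, or (for \textbf{H7--H8}) $P'\conceal A$ diverges. The main obstacle is the bookkeeping for \textbf{H8} (\CEDI), which simultaneously involves an explicit $\div$-summand and an outer sliding choice: the relation must then track $P'\conceal A\intchoice\Intchoice_{a_i\in A}(P_i\conceal A)$ and all of its $\tau$-descendants, each paired appropriately with the LHS and with $P'\conceal A$. The remaining variants \textbf{H5--H7} follow the same pattern with strictly fewer cases.
\qed
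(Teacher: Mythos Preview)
The paper's own proof of this proposition is literally the single word ``Straightforward'', so your write-up is far more detailed than what the paper supplies. The strong-bisimilarity cases are fine.

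For \textbf{H5--H8}, however, the specific relation you propose is not a coupled simulation. You include the pair $(\mathit{LHS},\Intchoice_{a_i\in A}(P_i\conceal A))$---i.e.\ the left-hand side paired with the intermediate internal choice, in that direction. But then the first clause of \df{coupled} fails: take any unhidden prefix $a_i\notin A$; then $\mathit{LHS}\goto{a_i}P_i\conceal A$, whereas $\Intchoice_{a_i\in A}(P_i\conceal A)$ can only do $\tau$-steps into the $P_j\conceal A$ with $a_j\in A$, and in general none of those can perform $a_i$. A concrete counterexample is $(a\rightarrow\STOP)\extchoice(b\rightarrow\STOP)$ with $A=\{b\}$: your relation pairs $((a\rightarrow\STOP)\extchoice(b\rightarrow\STOP))\conceal A$ with $\STOP\conceal A$, and the former's $a$-transition has no match.

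The repair is small but essential: drop the pair $(\mathit{LHS},\Intchoice_{a_i\in A}(P_i\conceal A))$ and replace it by the pairs $(P_j\conceal A,\,\Intchoice_{a_i\in A}(P_i\conceal A))$ for each $j$ with $a_j\in A$. These are exactly what discharge the second clause for $(\Intchoice_{a_i\in A}(P_i\conceal A),\mathit{LHS})$: the left-hand side catches up not by reaching the internal choice (it cannot), but by firing one hidden $\tau$ to some $P_j\conceal A$, which is then ahead of $\Intchoice_{a_i\in A}(P_i\conceal A)$ via \pr{tau}. Your sentence ``discharged by firing the $\tau$ into $\Intchoice_{a_i\in A}(P_i\conceal A)$ on the side that lags behind'' misidentifies the target of that $\tau$-step. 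The same correction applies to the $P'\conceal A\intchoice\Intchoice_{a_i\in A}(P_i\conceal A)$ component in \textbf{H7--H8}.
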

\begin{proof}
   Straightforward.
\qed
\end{proof}

\weg{% This is a correct result. Omitted because unterrupt is omitted from the axiomatisation
\begin{proposition}
  Axiom $\US$ is valid for $\equiv_{CS}^\Delta$.
\end{proposition}
\begin{proof}
  $\{\big( (P \timeout Q)\interrupt R' , (P\interrupt R) \timeout (Q\interrupt R) \big),
     \big( (P\interrupt R') \timeout (Q\interrupt R) , (P \timeout Q)\interrupt R \big),\linebreak
     \big( Q\interrupt R' , (P \timeout Q)\interrupt R' \big),
     \big( Q\interrupt R , (P\interrupt R') \timeout (Q\interrupt R)\big)
  \mid R \dto{} R'\} \cup \Id$ is a divergence-preserving coupled simulation.
\qed
\end{proof}
}

\begin{proposition}
  Axiom $\UI$ is valid for $\equiv_{CS}^\Delta$.
\end{proposition}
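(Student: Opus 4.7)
The plan is to exhibit a divergence-preserving coupled simulation $\R$ containing both $\big((P \intchoice Q)\interrupt R, (P\interrupt R) \intchoice (Q\interrupt R)\big)$ and its swap; the proposition then follows from the definition of $\equiv_{CS}^\Delta$.  As in the preceding proofs in this section (e.g.\ $\SE$, $\SI$, $\ES$), the naive axiom pair is not closed under transitions: the left-hand side admits a $\tau$-transition $(P\intchoice Q)\interrupt R \goto{\tau} (P\intchoice Q) \interrupt R'$ arising from $R\goto{\tau}R'$ that keeps the outer $\intchoice$ unresolved, while every internal step of $R$ on the right-hand side must be preceded by committing to one of the two branches of the outer $\intchoice$.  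I will absorb this discrepancy by parametrising the pairs over $R\dto{} R'$.

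Concretely, I would take $\R$ to be the union of $\sqsupseteq_{CS}^\Delta$ with
\[
\{\big((P \intchoice Q)\interrupt R', (P\interrupt R) \intchoice (Q\interrupt R)\big) \mid R\dto{} R'\}
\;\cup\;
\{\big((P\interrupt R) \intchoice (Q\interrupt R), (P \intchoice Q)\interrupt R\big)\}.
\]
The axiom then drops out: the first family with $R'=R$ supplies $(P\intchoice Q)\interrupt R \sqsupseteq_{CS}^\Delta (P\interrupt R) \intchoice (Q\interrupt R)$, and the second family supplies the converse.  Including $\sqsupseteq_{CS}^\Delta$ is convenient for discharging the coupling clauses via \pr{tau}.

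The verification for a parametrised pair $\big((P\intchoice Q)\interrupt R', (P\interrupt R) \intchoice (Q\interrupt R)\big)$ is then routine.  Transitions $\goto{\tau} P\interrupt R'$ (and symmetrically $\goto{\tau} Q\interrupt R'$) are matched by chaining $\dto{\tau}(P\interrupt R)\dto{}(P\interrupt R')$ on the right; an internal step $R'\goto{\tau} R''$ is matched by staying put, the new pair lying in the same family because $R\dto{} R''$; and a visible $R'\goto{a} R''$ is matched by $\dto{\tau}(P\interrupt R)\dto{}(P\interrupt R')\goto{a} R''$.  For the coupling clause I would let the right-hand side travel $\dto{\tau}$ all the way to $P\interrupt R'$; then \pr{tau} applied to $(P\intchoice Q)\interrupt R'\dto{} P\interrupt R'$ yields $P\interrupt R'\sqsupseteq_{CS}^\Delta (P\intchoice Q)\interrupt R'$, placing the coupling witness in $\sqsupseteq_{CS}^\Delta\subseteq\R$.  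The swapped non-parametrised pair is simpler: the forward steps land in $\Id$, and coupling is witnessed by the right-hand side staying put and invoking the parametrised family at $R'=R$.

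The main obstacle I expect is divergence preservation for the parametrised family, where I need $(P\intchoice Q)\interrupt R'{\Uparrow}$ to entail $(P\interrupt R)\intchoice (Q\interrupt R){\Uparrow}$.  The only delicate case is $R'{\Uparrow}$, which I discharge via \pr{tau}: $R\dto{} R'$ gives $R'\sqsupseteq_{CS}^\Delta R$, hence $R'{\Uparrow}\Rightarrow R{\Uparrow}$.  This same asymmetry is why I deliberately do \emph{not} parametrise the second family over $R\dto{} R'$: that variant would impose $R{\Uparrow}\Rightarrow R'{\Uparrow}$, which fails when $R$ has both a diverging and a non-diverging $\tau$-successor, and hence such pairs cannot live in a divergence-preserving coupled simulation.
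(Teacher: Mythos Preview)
Your proposal is correct and follows essentially the same approach as the paper. The paper's relation is
\[
\{\big((P\intchoice Q)\interrupt R',(P\interrupt R)\intchoice(Q\interrupt R)\big),\,
\big((P\interrupt R)\intchoice(Q\interrupt R),(P\intchoice Q)\interrupt R\big),\,
\big(P\interrupt R',(P\intchoice Q)\interrupt R'\big)\mid R\dto{}R'\}\cup\Id,
\]
so the only difference is that where the paper adds the explicit third family $\big(P\interrupt R',(P\intchoice Q)\interrupt R'\big)$ to discharge the coupling clause, you instead fold in all of $\sqsupseteq_{CS}^\Delta$ and appeal to \pr{tau}; since that third family is contained in $\sqsupseteq_{CS}^\Delta$ by \pr{tau} anyway, the two relations and the verification steps coincide.
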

\begin{proof}
  $\{\big( (P \intchoice Q)\interrupt R' , (P\interrupt R) \intchoice (Q\interrupt R) \big),
     \big( (P\interrupt R) \intchoice (Q\interrupt R) , (P \intchoice Q)\interrupt R \big),\linebreak
     \big( P\interrupt R' , (P \intchoice Q)\interrupt R' \big)
  \mid R \dto{} R'\} \cup \Id$ is a divergence-preserving coupled simulation.
\qed
\end{proof}

\weg{% This is a correct result. Omitted because unterrupt is omitted from the axiomatisation
\begin{proposition}
  Axiom $\UA'$ is valid for strong bisimilarity.
\end{proposition}
\begin{proof}
   $\{\big( P \interrupt Q , \big(\Extchoice_{\!i\in I} (a_i {\rightarrow} (P_i\interrupt Q) )\big) \extchoice Q \big),
      \big( \big(\Extchoice_{\!i\in I} (a_i {\rightarrow} (P_i\interrupt Q) )\big) \extchoice Q , P \interrupt Q \big)
   \mid P = \Extchoice_{\!i \in I}(a_i\rightarrow P_i) \wedge 
        Q = \Extchoice_{j \in J}(b_j\rightarrow Q_j)\} \cup \Id$ is a strong bisimulation.
\qed
\end{proof}
}

\section{Completeness}

Let $\Th$ be the axiomatisation of \tab{axsCS}.
\begin{proposition}\label{pr:nf}
For each recursion-free CSP process $P$ without interrupt operators there is a CSP process $Q$ in
normal form such that $\Th \vdash P=Q$.
\end{proposition}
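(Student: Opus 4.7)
The plan is to prove this by induction on a complexity measure of the CSP expression $P$---say, the number of operator occurrences in $P$ that do not lie strictly inside an action prefix. For each outermost operator, I normalize the arguments by the inductive hypothesis, put them into head normal form, and then apply the appropriate axioms of $\Th$ to pull the outer operator inward, producing a head normal form for the compound expression. Finally, I recurse on the subexpressions $R_i$ and $R_{kj}$ sitting inside action prefixes, which are strict subterms of $P$ and hence are handled by the main induction as well.

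The base cases $\STOP$ and $\div$ are already $D$-forms, hence normal forms, and $a\rightarrow P$ becomes normal once $P$ is. For internal choice, after using $\Si$ and $\SI$ to turn any bare $D$-shaped argument into the shape $D \timeout D$, axiom $\IS$ combines $(D_P \timeout I_P) \intchoice (D_Q \timeout I_Q)$ into the head normal form $(D_P \extchoice D_Q) \timeout (I_P \intchoice I_Q)$. For external choice, axioms $\Ec$, $\Ea$, $\Es$, $\ES$ and $\EI$ distribute $\extchoice$ over $\timeout$ and $\intchoice$; for sliding choice, $\Sa$, $\SE$, $\SI$, $\Ss$ collapse the left argument into a single $D$. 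Renaming and throw distribute uniformly over every other operator via the axiom families $\RS$--$\RD$ and $\TS$--$\TD$, so they reduce by rewriting from the outside in, leaving only prefixed subexpressions to be finished off by the main induction.

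The main obstacles are parallel composition and concealment, whose expansions must cover every combination of the shapes a head normal form can take: a $D$-part with or without a leading $\div$ summand, and with or without a trailing sliding part. For concealment, after using $\HI$ to push $\conceal A$ through any leading internal choice, one of the four axioms $\CE$, $\CED$, $\CEI$, $\CEDI$ applies according to the shape of the head-normal argument, yielding a head normal form in which the inner $(P_i \conceal A)$ subterms have strictly smaller main-induction complexity. For parallel composition, I use $\Pa$ to treat $\|_A$ as binary, then apply \textbf{P14}--\textbf{P16} to dispose of any purely internal-choice summand produced by a previous expansion; otherwise one of \textbf{P4}--\textbf{P13} applies to the $4 \times 4$ combinations of externally-presented argument shapes (which reduce by the symmetry of $\Pc$ to the ten axioms listed). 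The delicate point is verifying exhaustiveness of this case analysis and termination of the rewriting: each application of an expansion axiom either strictly decreases the number of non-prefix operators outside action prefixes, or pushes a $\|_A$ or $\conceal A$ occurrence strictly inward under an action prefix, shrinking the subterm that the main induction will then consume.
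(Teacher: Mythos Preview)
Your overall strategy and case analysis match the paper's proof: normalise the arguments, push the outermost operator inward via the relevant block of axioms to obtain a head normal form, then recurse on the subexpressions beneath action prefixes. However, the complexity measure you propose --- the number of operator occurrences not lying strictly inside an action prefix --- does not actually decrease on those recursive calls. Take $f(a \rightarrow N)$ with $N$ a large normal form: your measure is~$2$, yet after one application of $\RA$ you must recurse on $f(N)$, whose measure is $1$ plus all the top-level operators of $N$, which can be arbitrarily large. The same blow-up occurs for $\|_A$ and $\conceal A$, where the expansion axioms produce subcalls such as $P_i \|_A Q$ that are neither syntactic subterms of the original expression nor smaller in your measure; your closing remark that an expansion step ``shrinks the subterm that the main induction will then consume'' does not identify what well-founded quantity is shrinking.

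The paper organises the argument as a nested induction instead: an outer structural induction on $P$ reduces the problem to normalising $Op(N_1,\dots,N_n)$ with each $N_i$ already in normal form, and then an inner structural induction on the normal-form arguments $N_i$ handles the recursive subcalls spawned by the expansion axioms for $\|_A$ and $\conceal A$ (each such subcall has at least one strictly structurally smaller argument). With that scaffolding in place your case analysis goes through unchanged. A minor aside: $\|_A$ is already a binary operator in the syntax, so invoking $\Pa$ is unnecessary; the paper itself notes that $\Pa$ is included only for convenience in equational reasoning and is not needed for completeness.
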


\begin{proof}
  By structural induction on $P$ it suffices to show that for each $n$-ary CSP operator $Op$, and
  all CSP processes $P_1,...,P_n$ in normal form, also $Op(P_1,...,P_n)$ can be converted to normal form.
  This I do with structural induction on the arguments $P_i$.
\begin{itemize}
\item Let $P=\STOP$ or $\div$. Then $P$ is already in normal form. Take $Q:=P$.
\item Let $P=a \rightarrow P'$. By assumption $P'$ is in normal form; therefore so is $P$.
\item Let $P=P_1 \intchoice P_2$. By assumption $P_1$ and $P_2$ are in normal form.
  So $P\mathbin=\left(\!\big([\div \mathbin{\extchoice]}\Extchoice_{\!i\in I} (a_i\mathbin\rightarrow R_i)\big)\timeout \Intchoice_{\!j\in J}\!R_j\!\right)
  \intchoice
  \left(\!\big([\div\mathbin{\extchoice]}\Extchoice_{\!l\in L} (a_l\mathbin\rightarrow R_l)\big)\timeout \Intchoice_{\!j\in M}\!R_j\!\right)$
  with $R_j=\big([\div \mathbin{\extchoice]}\Extchoice_{k\in K_j}(a_{kj}\rightarrow R_{kj})\big)$ for $j\in J\cup M$.
  With Axiom $\Ss$ I may assume that $J,M\neq\emptyset$.
  Now Axiom $\IS$ converts $P$ to normal form.
\item Let $P=P_1 \extchoice P_2$. By assumption $P_1$ and $P_2$ are in normal form.
  So $P\mathbin=\left(\!\big([\div \mathbin{\extchoice]}\Extchoice_{\!i\in I} (a_i\mathbin\rightarrow R_i)\big)\timeout \Intchoice_{\!j\in J}\!R_j\!\right)
  \extchoice
  \left(\!\big([\div \mathbin{\extchoice]}\Extchoice_{\!l\in L} (a_l\mathbin\rightarrow R_l)\big)\timeout \Intchoice_{\!j\in M}\!R_j\!\right)$
  with $R_j=\big([\div \mathbin{\extchoice]}\Extchoice_{k\in K_j}(a_{kj}\rightarrow R_{kj})\big)$ for $j\in J\cup M$.
  With $\Ss$ I may assume that $J,M\neq\emptyset$.
  Now Axioms $\ES$ and $\EI$ convert $P$ to normal form.
\item Let $P=P_1 \timeout P_2$.
  Axioms \textbf{S2--4} and $\EI$ convert $P$ to normal form.
\item Let $P=P_1 \|_A P_2$.
  Axioms $\Pc$ and \textbf{P4--16}, together with the induction hypothesis, convert $P$ to normal form.
\item Let $P=P\conceal A$.
  Axioms $\HI$ and \textbf{H5--8}, together with the induction hypothesis, convert $P$ to normal form.
\item Let $P=f(P)$.
  Axioms \textbf{R0--5}, together with the induction hypothesis, convert $P$ to normal form.
\item Let $P=P_1 \throw P_2$.
  Axioms \textbf{T0--6}, together with the induction hypothesis, convert $P$ to normal form.
\end{itemize}
\end{proof}

\begin{lemma}\label{lem:saturated}
For any CSP expression $P$ in head normal form there exists a saturated CSP expression $Q$ in head normal form.
\end{lemma}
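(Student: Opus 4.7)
The plan is to construct the saturated form $Q$ explicitly, so that in fact $\Th \vdash P = Q$, by a sequence of axiom applications. Write $P = E \timeout F$ with $E = [\div\extchoice]\Extchoice_{i\in I}(a_i\rightarrow R_i)$ and $F = \Intchoice_{j\in J}R_j$; if $J = \emptyset$ there is nothing to do, since then $P = E$ is already saturated (both saturation conditions being vacuous). The idea is first to pull every $R_j$ into the external-choice part on the left of the $\timeout$, thereby enforcing saturation condition 2 on action summands, and then to consolidate any resulting duplicate $\div$-summands on the left down to at most one, enforcing saturation condition 1.

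The first step rests on the derivable identity $E \timeout F = (E \extchoice R_{j_0})\timeout F$ for each $j_0 \in J$. To obtain it I would first derive $F \intchoice R_{j_0} = F$ from $\Ia,\Ic,\Ii$, and then apply $\IS$ twice: once with parameters $(P,Q,R,S) = (E,F,\STOP,R_{j_0})$ and once with $(E,F,R_{j_0},R_{j_0})$. Together with $\Es$, $\Ss$ and $\Si$ these instances give
\[ (E\timeout F)\intchoice R_{j_0} = E\timeout F \quad\mbox{and}\quad (E\timeout F)\intchoice R_{j_0} = (E\extchoice R_{j_0})\timeout F, \]
whose composition is the desired identity. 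Iterating over all $j\in J$ and rearranging using $\Ea,\Ec$ yields a head normal form whose left-hand external choice contains a copy of every summand of every $R_j$, so saturation condition 2 holds. The price is that the left may now carry several $\div$-summands, one for each $R_j$ that originally had $\div$ plus one more if $E$ itself did.

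To reduce the number of $\div$-summands on the left to at most one I would derive
\[ (\div\extchoice\div\extchoice Y)\timeout F \;=\; (\div\extchoice Y)\timeout F \]
for any $Y$, via the chain
\[ (\div\extchoice\div\extchoice Y)\timeout F = ((\div\extchoice\div)\timeout Y)\timeout F = ((\div\timeout\div)\timeout Y)\timeout F = (\div\timeout Y)\timeout F = (\div\extchoice Y)\timeout F, \]
which uses $\SE$ in both directions together with $\Si$ to collapse $\div\timeout\div$ to $\div$. Iterated application leaves at most one $\div$-summand on the left, so saturation condition 1 is met, and the resulting expression is the desired saturated head normal form $Q$.

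The main obstacle is precisely this last step. Because the axiom $\Ei$ is \emph{not} valid for $\equiv_{CS}^\Delta$, the equation $\div\extchoice\div = \div$ is not derivable in isolation, and naive iteration of the first step can leave several $\div$-summands stranded on the left. What rescues the argument is the ambient $\timeout F$ context: under an outer $\timeout$, a duplicate $\div$ can be rewritten as $\div\timeout\div$ by $\SE$, collapsed by $\Si$, and then folded back into head normal form shape by $\SE$ used in the other direction. Once this pattern is identified, the remaining manipulations are routine.
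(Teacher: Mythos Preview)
Your argument is correct. The core idea---pulling every $R_j$ into the external-choice part on the left of the sliding choice---coincides with the paper's, but the execution differs. The paper obtains the identity $S\timeout R = (S\extchoice R)\timeout R$ in one shot from $\Si$, $\Sa$ and $\SE$ (unfold $R$ as $R\timeout R$ by $\Si$, reassociate by $\Sa$, then convert the inner $\timeout$ to $\extchoice$ by $\SE$), and then invokes $\EI$ and $\SI$ to flatten $S\extchoice R$ into the shape $[\div\mathrel{\extchoice}]\Extchoice_l(a_l\rightarrow R_l)$. Your route via two instances of $\IS$ is more roundabout but equally valid, and has the side benefit of not needing the distributivity law $\EI$. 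Conversely, the paper glosses over the collapse of multiple $\div$-summands on the left, which you address explicitly; your observation that this collapse requires the ambient $\timeout$-context (since $\div\extchoice\div = \div$ is not derivable in isolation once $\Ei$ is dropped) is a genuine point. The paper's chain buys concision; yours buys completeness of detail on precisely the step the paper leaves implicit.
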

\begin{proof}
Let $P=\big([\div \mathbin{\extchoice]}\Extchoice_{\!i\in I} (a_i\mathbin\rightarrow R_i)\big)\timeout \Intchoice_{\!j\in J}\!R_j$.
Then $P$ has the form $S \timeout R$. By Axioms \textbf{S1--3}
$\Th\vdash P = (S\extchoice R) \timeout R$. By means of Axioms $\EI$ and $\SI$ the
subexpression $S \extchoice R$ can be brought in the form
$[\div \mathbin{\extchoice]}\Extchoice_{\!l\in L} (a_l\mathbin\rightarrow R_l)$.
The resulting term is saturated.
\qed
\end{proof}

\begin{definition}\rm
A CSP expression \plat{$\big(\Extchoice_{i\in I}(b_i\rightarrow P_i)\big)$} is \emph{pruned} if,
for all $i,h\in I$, $b_i\mathbin=b_h \wedge P_i \sqsupseteq_{CS}^\Delta P_h  \Rightarrow i=h$.
\end{definition}

\begin{theorem}
Let $P$ and $Q$ be recursion-free CSP processes without interrupt operators. Then
$P \equiv_{CS}^\Delta Q$ iff $\Th \vdash P = Q$.
\end{theorem}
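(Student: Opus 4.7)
The $\Leftarrow$ direction is immediate: every axiom in Table~\ref{tab:axsCS} has been shown valid for $\equiv_{CS}^\Delta$ in the previous section, and $\equiv_{CS}^\Delta$ is a congruence for every CSP operator, so every equation derivable from $\Th$ holds for $\equiv_{CS}^\Delta$.

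For the $\Rightarrow$ direction, my plan is to reduce $P$ and $Q$ to canonical representatives and then equate them syntactically. By Proposition~\ref{pr:nf}, I may assume both $P$ and $Q$ are in normal form. By recursive application of Lemma~\ref{lem:saturated} at every nesting level I may further assume they are \emph{fully saturated}. Finally I perform a \emph{pruning} step, bottom-up: using Axiom $\Pru$ together with Proposition~\ref{pr:CSpreorder}, I eliminate any $\extchoice$-summand $a\rightarrow R$ dominated by a parallel summand $a\rightarrow R'$ with $R \sqsupseteq_{CS}^\Delta R'$; and using $\Ii$, $\Ic$, $\Ia$ I eliminate duplicate sliding-choice components $R_j \equiv_{CS}^\Delta R_{j'}$. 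Call the result \emph{canonical}; its outer $\extchoice$-part is then pruned in the sense of the definition just above the theorem statement.

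Completeness now reduces, by structural induction on the syntactic depth of canonical forms, to the claim that any two canonical forms $P^*$ and $Q^*$ with $P^* \equiv_{CS}^\Delta Q^*$ coincide modulo the associativity, commutativity and unit laws $\Ic$, $\Ia$, $\Ec$, $\Ea$, $\Es$. Writing
\[
P^* = \big([\div \mathbin{\extchoice]}\Extchoice_{i\in I}(a_i\rightarrow R_i)\big) \timeout \Intchoice_{j\in J} R_j
\]
and $Q^*$ analogously with components $b_l, S_l$ and $S_m$ (the bracketed $\div$-summand and the sliding choice each optional), I would extract from the two clauses of Definition~\ref{df:coupled}, together with the divergence-preservation condition, that: (a) the outer $\div$-summand is present on both sides or on neither; (b) the multisets $\{R_j\}_{j\in J}$ and $\{S_m\}$ coincide up to $\equiv_{CS}^\Delta$, hence by pruning are in bijection; and (c) the labelled initial-action sets $\{(a_i, R_i)\}$ and $\{(b_l, S_l)\}$ likewise match up to $\equiv_{CS}^\Delta$ on the bodies. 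The induction hypothesis applied to the $R_i$, $R_j$, $S_l$, $S_m$ then delivers $\Th \vdash P^* = Q^*$, and hence $\Th \vdash P = Q$.

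The main obstacle is clause (b). A weak move $P^* \dto{} R_j$---realised by a $\tau$ into $\Intchoice_{j\in J} R_j$ followed by an internal selection---need only be matched on the right by some $Q^* \dto{} Q'$ with $R_j \sqsupseteq_{CS}^\Delta Q'$, and such $Q'$ need not itself be one of the $S_m$. The canonical match of $R_j$ with some $S_m$ must instead be reconstructed by combining this simulation step with the coupling clause of Definition~\ref{df:coupled}, exploiting that saturation has already forced $R_j$ to carry the full visible initial-action set of the head, so that the two sides can be compared on equal footing. Only then does the pruning discipline force the multisets $\{R_j\}$ and $\{S_m\}$ to match up to $\Ii$, $\Ic$, $\Ia$. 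Pinning this matching down precisely, and verifying that pruning preserves canonicity throughout the induction, is the delicate part of the argument.
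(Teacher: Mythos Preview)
Your overall strategy—normalise, saturate, prune, then match components via the two clauses of \df{coupled}—is the paper's strategy too, and your diagnosis of clause~(b) as the crux is correct. But there is a genuine structural gap in how you organise the argument.

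The pruning step cannot be carried out as a preprocessing phase separate from the induction. To drop a summand $a\rightarrow R$ in favour of $a\rightarrow R'$ with $R \sqsupseteq_{CS}^\Delta R'$, axiom $\Pru$ requires $R'$ to have the syntactic shape $R \intchoice (\cdots)$; concretely one needs $\Th \vdash R' = R \intchoice R'$. Proposition~\ref{pr:CSpreorder} gives only the \emph{semantic} fact $R' \equiv_{CS}^\Delta R \intchoice R'$, not a derivation in $\Th$. Converting that equivalence into a derivation is exactly the completeness theorem at a smaller instance, so invoking it before the induction is circular. The same circularity bites when you ``eliminate duplicate sliding-choice components $R_j \equiv_{CS}^\Delta R_{j'}$'' using $\Ii$: that axiom needs syntactic identity, not semantic equivalence. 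The paper avoids this by inducting on $\depth(P)$, the length of the longest trace (a semantic invariant, so $\depth(P)=\depth(Q)$), and performing saturation and pruning \emph{inside} the induction step: since $\depth(R_i\intchoice R_h)<\depth(P)$, the induction hypothesis supplies $\Th\vdash R_i\intchoice R_h = R_h$, which then licences $\Pru$.

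On clause~(b), your sketch points at the coupling clause but misidentifies the lever. Saturation pushes action summands \emph{from} the $R_j$ \emph{into} the head, not the other way round; what actually closes the argument is that each $R_j$ has the shape $[\div\extchoice]\Extchoice_k(a_{kj}\rightarrow R_{kj})$ and hence admits no $\tau$-transitions except possibly a $\div$-self-loop. So when the coupling clause produces $R_j \dto{} R'$ with $R_m \sqsupseteq_{CS}^\Delta R'$ and then $R' \dto{} R''$ with $R'' \sqsupseteq_{CS}^\Delta R_m$, stability forces $R''=R'=R_j$, yielding $R_j \equiv_{CS}^\Delta R_m$. The paper then appeals not to the outer induction hypothesis but to the already-established sliding-free subcase (its ``$(*)$'') to get $\Th\vdash R_j = R_m$; this layered use of a special case within the induction step is what you are missing.
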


\begin{proof}
``$\Leftarrow$'' is an immediate consequence of the soundness of the axioms of $\Th$,
and the fact that $\equiv_{CS}^\Delta$ is a congruence for all operators of CSP\@.

``$\Rightarrow$'':
Let $\depth(P)$ be the length of the longest trace of $P$---well-defined for recursion-free
processes $P$. If $P \equiv_{CS}^\Delta Q$ then $\depth(P)=\depth(Q)$. Given $P \equiv_{CS}^\Delta Q$,
I establish $\Th \vdash P = Q$ with induction on $\depth(P)$.

By \pr{nf} I may assume, without loss of generality, that $P$ and $Q$ are in normal form.
By \lem{saturated} I furthermore assume that $P$ and $Q$ are saturated.
Let $P\mathbin=\big([\div \mathbin{\extchoice]}\Extchoice_{\!i\in I} (a_i\mathbin\rightarrow R_i)\big)\timeout \Intchoice_{\!j\in J}\!R_j$
and $Q\mathbin=\big([\div \mathbin{\extchoice]}\Extchoice_{\!l\in L\!} (a_l\mathbin\rightarrow R_l)\big)\timeout \Intchoice_{\!j\in M}\!R_j$\linebreak
with \plat{$R_j\mathbin=\big([\div \mathbin{\extchoice]}\Extchoice_{k\in K_j}(a_{kj}\rightarrow R_{kj})\big)$}
for $j\mathbin\in J\cup M$, where $R_i$, $R_l$ and $R_{kj}$ are again in normal form.

Suppose that there are $i,h\in I$ with $i\neq h$, $a_i=a_h$ and $R_i \sqsupseteq_{CS}^\Delta R_h$.
Then $R_i \intchoice R_h \equiv_{CS}^\Delta R_h$ by \pr{CSpreorder}.
Since $\depth(R_i \intchoice R_h) < \depth(P)$, the induction hypothesis yields
$\Th \vdash R_i \intchoice R_h = R_h$. Hence
Axiom $\Pru$ allows me to prune the summand $a_i \rightarrow R_i$ from \plat{$\Extchoice_{\!i\in I} (a_i\mathbin\rightarrow R_i)$}.
Doing this repeatedly makes $\Extchoice_{\!i\in I} (a_i\mathbin\rightarrow R_i)$ pruned.
By the same reasoning I may assume that $\Extchoice_{\!l\in L} (a_l\mathbin\rightarrow R_l)$
% and $R_j$ for $j\in J \cup M$ are
is pruned.

Since $P{\Uparrow} \Leftrightarrow Q{\Uparrow}$ and $P$ and $Q$ are saturated,
$P$ has the $\div$-summand iff $Q$ does.
I now define a function $f:I\rightarrow L$ such that $a_{f(i)}=a_i$ and
$R_i \sqsupseteq_{CS}^\Delta R_{f(i)}$ for all $i \in I$.

Let $i \in I$. Since $P \goto{a_i} R_i$, by \df{coupled} $Q \dto{a_i} Q'$ for some $Q'$
with \plat{$R_i \sqsupseteq_{CS}^\Delta Q'$}.
Hence either there is an $l\in L$ such that $a_l=a_i$ and $R_l \dto{} Q'$,
or there is a $j\in M$ and $k\in K_j$ such that $a_{kj}=a_i$ and $R_{kj} \dto{} Q'$.
Since $P$ is saturated, the first of these alternatives must apply.
By \pr{tau} \plat{$Q' \sqsupseteq_{CS}^\Delta R_l$} and by \pr{preorder} $R_i \sqsupseteq_{CS}^\Delta R_l$.
Take $f(i):=l$.

By the same reasoning there is a function $g:L\rightarrow I$ such that $a_{g(l)}=a_l$ and
\plat{$R_l \sqsupseteq_{CS}^\Delta R_{g(l)}$} for all $l \in L$.
Since \plat{$\Extchoice_{\!i\in I} (a_i\mathbin\rightarrow R_i)$}
and \plat{$\Extchoice_{\!l\in L} (a_l\mathbin\rightarrow R_l)$}\vspace{1pt}
are pruned, there are no different $i,h \in I$ (or in $L$) with $a_i=a_h$ and $R_i \sqsupseteq_{CS}^\Delta R_h$.
Hence the functions $f$ and $g$ must be inverses of each other.
It follows that $Q=\big([\div \mathbin{\extchoice]}\Extchoice_{i\in I} (a_i\rightarrow R_{f(i)})\big)\timeout \Intchoice_{j\in M}R_j$
with $R_i \equiv_{CS}^\Delta R_{f(i)}$ for all $i \in I$.
By induction $\Th \vdash R_i = R_{f(i)}$ for all $i\in I$.

So in the special case that $I=M=\emptyset$ I obtain $\Th \vdash P = Q$. \hfill (*)

Next consider the case $J=\emptyset$ but $M \neq \emptyset$.
Let $j \in M$. Since \plat{$Q \dto{} R_j$}, there is a $P'$ with \plat{$P \dto{} P'$} and \plat{$R_j \sqsupseteq_{CS}^\Delta P'$}.
Moreover, there is a $P''$ with $P' \dto{} P''$ and \plat{$P'' \sqsupseteq_{CS}^\Delta R_j$}.
Since $J=\emptyset$, $P''=P'=P$, so $P \equiv_{CS}^\Delta R_j$.
By (*) above $\Th \vdash P \mathbin= R_j$. This holds for all $j \in J$, so by Axiom $\Ii$
\plat{$\Th\vdash Q=\big([\div \mathbin{\extchoice]}\Extchoice_{i\in I} (a_i\rightarrow R_i)\big)\timeout P$}.
By Axiom $\Si$ one obtains $\Th\vdash P=Q$.

The same reasoning applies when $M\mathbin=\emptyset$ but $J\mathbin{\neq}\emptyset$.
So henceforth I assume $J,M \mathbin{\neq}\emptyset$.
I now define a function $h\mathop{:}J\mathbin\rightarrow M$ with
$\Th\mathbin\vdash R_j \mathbin=R_{h(j)}$ for all $j \mathbin\in J$.

Let $j \in J$. Since $P \dto{\tau} R_j$, by \df{coupled} $Q \dto{} Q'$ for some $Q'$
with \plat{$R_j \sqsupseteq_{CS}^\Delta Q'$}, and \plat{$Q' \dto{} Q''$} for some $Q''$
with \plat{$Q'' \sqsupseteq_{CS}^\Delta R_j$}.
There must be an $m\in M$ with $Q'' \dto{} R_m$.
By \df{coupled} $R_j\dto{}R'$ for some $R'$ with \plat{$R_m \sqsupseteq_{CS}^\Delta R'$}, and
$R' \dto{} R''$ for some $R''$ with \plat{$R'' \sqsupseteq_{CS}^\Delta R_m$}.
By the shape of $R_j$ one has $R''=R'=R_j$, so $R_j \equiv_{CS}^\Delta R_m$.
By (*) above $\Th \vdash R_j = R_m$. Take $h(j):=m$.

By the same reasoning there is a function $e\mathop:M\rightarrow J$ with
$\Th\mathbin\vdash R_m \mathbin=R_{e(m)}$ for all $m \mathbin\in M$.
Using Axioms \textbf{I1--3} one obtains $\Th\vdash P=Q$.
\qed
\end{proof}

\section{Conclusion}

This paper contributed a new model of CSP, presented as a semantic equivalence on labelled transition
systems that is a congruence for the operators of CSP. It is the finest I could find that allows a
complete equational axiomatisation for closed recursion-free CSP processes that fits within the
existing syntax of the language. For $\tau$-free system, my model coincides with strong
bisimilarity, but in matching internal transitions it is less pedantic than weak bisimilarity.

It is left for future work to show that recursion is treated well in this model,
and also to extend my complete axiomatisation with the interrupt operator of {\sc Roscoe} \cite{Ro97,Ro10}.

An annoying feature of my complete axiomatisation is the enormous collections of heavy-duty axioms
needed to bring parallel compositions of CSP processes in head normal form. These are based on the
expansion law of {\sc Milner} \cite{Mi90ccs}, but a multitude of them is needed due to the optional
presence of divergence-summands and sliding choices in head normal forms.
In the process algebra ACP the expansion law could be avoided through the addition of two auxiliary
operators: the left merge and the communication merge \cite{BK84}. Unfortunately, failures-divergences
equivalence fails to be a congruence for the left-merge, and the same problems exists for any other
models of CSP \cite[Section~3.2.1]{GV93}. In \cite{AI91} an alternative left-merge is proposed, for which
failures-divergences equivalence, and also $\equiv_{CS}^\Delta$, is a congruence. It might be used to
eliminate the expansion law $\PE$ from the axiomatisation of \tab{axsFD}. 
Unfortunately, the axiom that splits a parallel composition between a left-, right- and communication
merge (Axiom CM1 in \cite{BK84}), although valid in the failures-divergences model, is not valid for 
$\equiv_{CS}^\Delta$. This leaves the question of how to better manage the axiomatisation of parallel
composition entirely open.

\newpage

\newcommand{\urlprefix}{}
\bibliographystyle{eptcsini}
\bibliography{roscoe}
\end{document}